\documentclass[a4paper,UKenglish,cleveref, thm-restate]{lipics-v2021}

\pdfoutput=1 
\hideLIPIcs  
\usepackage[dvipsnames,svgnames,table]{xcolor}
\usepackage{url}

\usepackage{listings}
\usepackage{amsmath} 
\DeclareFontFamily{U}{skulls}{}
\DeclareFontShape{U}{skulls}{m}{n}{ <-> skull }{}

\usepackage{float}

\usepackage{multicol} 

\usepackage{amsmath}

\usepackage{graphicx}
\usepackage{xspace}
\usepackage{color} 
\usepackage{tikz-cd}
\usepackage{mathtools}
\usepackage{hyperref}
\usepackage[capitalise]{cleveref}
\usepackage{complexity}
\usepackage[shortcuts]{extdash}
\usepackage{subcaption}
\usepackage{thm-restate}
\usepackage{wrapstuff}
\usepackage{tikz}
\usetikzlibrary{automata, positioning, calc, shapes.geometric}
\usetikzlibrary{decorations.pathreplacing}
\usepackage[textsize=small]{todonotes}

\newcommand{\LTL}{\mathsf{LTL}}

\newcommand{\Aa}{{\mathcal A}}
\newcommand{\Bb}{{\mathcal B}}
\newcommand{\Ac}{{\mathcal A}}
\newcommand{\Bc}{{\mathcal B}}
\newcommand{\Cc}{{\mathcal C}}
\newcommand{\Dd}{{\mathcal D}}

\newcommand{\Hh}{{\mathcal H}}

\newcommand{\Kk}{{\mathcal K}}

\newcommand{\significant}{significant\xspace}
\newcommand{\good}{reach\=/deterministic\xspace}

\newcommand{\reach}{{\mathtt{reach}}}
\newcommand{\reset}[1]{{\color{purple}\mathtt{r}_{#1}}}

\newcommand{\circstate}[1]{\bullet #1} 
\newcommand{\sqrstate}[1]{\text{\tiny$\blacksquare$} #1}
\newcommand{\purpley}{{\color{purple}y}}

\newcommand{\Abkks}{{\mathcal{A}_{\mathtt{BKKS}}}}
\newcommand{\Astrong}{{\mathcal{A}_{\mathtt{strong}}}}
\newcommand{\Lstrong}{{L_{\mathtt{strong}}}}

\newcommand{\Aweak}{{\mathcal{A}_{\mathtt{weak}}}}
\newcommand{\Lweak}{{L_{\mathtt{weak}}}}
\newcommand{\Dstrong}{{\mathcal{D}_{\mathtt{strong}}}}
\newcommand{\Dmain}{{\mathcal{D}_{\mathtt{main}}}}
\newcommand{\Areplace}{{\mathcal{A}_{\mathtt{replace}}}}

\newcommand{\Aclassifier}{{\mathcal{A}_{\mathtt{class}}}}

\newcommand{\Amain}{{\mathcal{A}_{\mathtt{main}}}}
\newcommand{\Qsignificant}{{Q_{\mathtt{signif}}}}

\newcommand{\langBlock}[1]{L_{#1}}
\newcommand{\langInfix}[2]{L_{#1 #2 #2}}
\newcommand{\Lmain}{L_{\mathsf{main}}}

\newcommand{\safe}{{\mathtt{safe}}}
\newcommand{\Cmain}{\mathcal{C}_{{\mathtt{main}}}}
\newcommand{\safeLangA}[2]{L_{\mathtt{safe}}({#1}, {#2})}
\newcommand{\safeLang}[1]{L_{\mathtt{safe}}({#1})}
\newcommand{\ysafelangD}[1]{L_{y\mathtt{safe}}({#1})}
\newcommand{\Theju}{{\mathcal{K}}}
\newcommand{\yTheju}{{\mathcal{K}_y}}

\newcommand{\Sbus}{{\mathcal{S}}^{\mathcal{C}}_{\mathsf{safe}}}
\newcommand{\Dbus}{{\mathcal{S}}^{\mathcal{D}}_{\mathsf{safe}}}

\newcommand{\Fin}[1]{\mathsf{Fin}(#1)}
\newcommand{\Dcore}{\mathcal{D}_{\mathsf{core}}}
\newcommand{\Ptarget}{P_{\mathsf{target}}}
\newcommand{\Qtarget}{Q_{\mathsf{target}}}

\title{History-Deterministic B\"uchi Automata are Succinct}

 \author{Antonio Casares}{Aix-Marseille Université, LIS, France \and \url{https://antonio-casares.github.io/}}{antonio.casares@lis-lab.fr}{https://orcid.org/0000-0002-6539-2020}{A part of project was done when the author was employed by the University of Kaiserslautern-Landau and supported by Deutsche Forschungsgemeinschaft (grant number 522843867).}
 \author{Keya Prakash}{Aix-Marseille Université, CNRS, LIS, France \and \url{https://keya-prakash.github.io/}} {keya.prakash@proton.me}{https://orcid.org/0000-0002-2404-0707}{Funded by EU under MSCA-PF AuRora.}
 \author{K. S. Thejaswini}{Université Libre de Bruxelles, Belgium\and \url{https://thejaswiniraghavan.github.io}}{thejaswini.raghavan@ulb.be}{https://orcid.org/0000-0001-6077-7514}{A part of project was done when the author was employed by Institute of Science and Technology Austria and received funding from the European
 Research Council (ERC), grant agreement No 101020093.}

\authorrunning{A. Casares, K. Prakash and K.S. Thejaswini} 

\Copyright{Antonio Casares, Keya Prakash and K. S. Thejaswini} 

\ccsdesc[500]{Theory of computation~Automata over infinite objects}

\keywords{History-deterministic automata, Succinctness, B\"uchi automata} 

\category{} 

\relatedversion{} 

\nolinenumbers 

\begin{document}

\maketitle

\begin{abstract}
  We describe a history-deterministic Büchi automaton that has fewer states than every language-equivalent deterministic Büchi automaton. This solves a problem that had been open since the introduction of history-determinism and actively investigated for over a decade. 
  
  Our example automaton has 65 states, and proving its succinctness requires the combination of theoretical insights together with the use of solvers for NP-complete problems.
\end{abstract}
\setcounter{tocdepth}{2}
\tableofcontents

\thispagestyle{empty}

\clearpage
\setcounter{page}{1}
\section{Introduction}
\noindent 
Automata~over infinite words are a well-established tool with applications in the verification and synthesis of non-terminating~systems~\cite{Kupferman2018Handbook,EKV21Verification}. A~major bottleneck in these applications is the~exponential cost of determinising automata over infinite~words.
To~circumvent this, recent research has focused on history\=/deterministic automata (HD automata hereafter): a~class of ``mildly'' nondeterministic automata that offer the~algorithmic benefits of determinism without the~full cost of~determinisation.
Formally, an~automaton is history deterministic if there is a~strategy resolving the~nondeterministic choices on-the-fly without guessing the~future.  
HD~automata were introduced by Henzinger and Piterman~\cite{HP06} under the~name of good\=/for\=/games automata, as they are exactly the~class of automata that can be composed with infinite duration games on graphs while preserving the~winner.
This~is the~key property that enables their use in verification and synthesis (see, e.g.,~\cite[page~22]{BL23}).
Equivalent~definitions of history determinism appear in the~work of Kupferman, Safra, and Vardi~\cite{KSV96} 
and were unified by Boker, Kuperberg, Kupferman, and  Skrzypczak~\cite{BKKS13}. Moreover, Colcombet independently defined the concept of history determinism in the~setting of cost~automata~\cite{Col09}.

In this work, we focus on B\"uchi and coB\"uchi~automata. A~B\"uchi~automaton is an~automaton with some of its transitions marked as~\emph{significant}. An infinite word is accepted by a~B\"uchi~automaton if it admits a run that contains infinitely many significant~transitions. CoB\"uchi~automata are the~dual of B\"uchi~automata: a~word is accepted if it admits a run which contains only finitely many significant~transitions.
\begin{example}\label{ex:BKKS13}
\begin{wrapstuff}[r, type=figure, width=5cm, top=1]
\centering
\begin{tikzpicture}[
    base dot/.style={inner sep=0pt, minimum size=5pt, fill=blue!60!black, font=\scriptsize},
    cstate/.style={base dot, circle},
    sstate/.style={base dot, rectangle},
    node distance=1.5cm,
    tight label/.style={auto, inner sep=0.5pt, font=\scriptsize, text=black}
]
\tikzset{every state/.style = {inner sep=-3pt, minimum size=15}}

\node[cstate,fill=red!40, red!40] (q0) at (0,0) {AA};
\node (q0imposter) at (q0) {$I$};
\node[cstate,fill=red!40, red!40] (q1) at (-0.7,-0.8) {AA};
    \node at (q1) {$\iota_a$};
    \node[cstate,fill=red!40, red!40] (q2) at (0.7,-0.8) {AA};
        \node at (q2) {$\iota_b$};
    
    \node[cstate,fill=red!40, red!40] (q3) at (-0.7,-1.6) {AA};
    \node[cstate,fill=red!40, red!40] (q4) at (0.7,-1.6) {AA};
    \node (q3imposter) at (q3) {$p_a$};
    \node (q4imposter) at (q4) {$p_b$};
    \node[cstate,fill=red!40, red!40] (q5) at (-0.7,-2.4) {AA};
    \node[cstate,fill=red!40, red!40] (q6) at (0.7,-2.4) {AA};
    \node[cstate,fill=red!40, red!40] (qI) at (0,-3) {AA};
    \node at (qI) {$I$};
    \path[->] (-0.5,0) edge (q0);

    \path[-stealth]
        (q0) edge [bend right=10] node [left] {$x$} (q1)
        (q0) edge [bend left=10] node [right] {$x$} (q2)
        (q1) edge [bend right=10] node [right,pos=0.3] {$b$} (q0)
        (q2) edge [bend left=10] node [left,pos=0.35] {$a$} (q0)
        (q1) edge [thick, green!50!black, double] node [left] {$a$} (q3)
        (q2) edge [thick, green!50!black, double] node [right] {$b$} (q4)

        (q3) edge node [left] {$x$} (q5)
        (q4) edge node [right] {$x$} (q6)
        
        (q5) edge node [xshift=-0.2cm,yshift=-0.3cm] {$b$} (q4)
        (q6) edge node [xshift=0.2cm,yshift=-0.3cm] {$a$} (q3)
        
        (q5) edge [thick, green!50!black, double] node [left] {$a$} (qI)
        (q6) edge [thick, green!50!black, double] node [right] {$b$} (qI)
        ;
\end{tikzpicture}
\caption{An HD B\"uchi automaton $\Abkks$ that is not determinisable by pruning. States labelled $I$ are identified as the same state. Double green arrows represent significant transitions.}
\label{fig:BKKS13}
\end{wrapstuff} The~automaton $\Abkks$ described in~\cref{fig:BKKS13}---a~version of~\cite[Fig.2]{BKKS13}---accepts the~language of~words of the~form $(xa+xb)^\omega$ that contain at least one of the~strings $xaxa$ or $xbxb$ infinitely~often. This~automaton is~HD.  
 The~only nondeterministic choice is on~$I$ over the~letter~$x$. A~strategy for resolving this nondeterminism is simple: we alternate between choosing the~transitions $I\xrightarrow{x}\iota_a$ and~$I\xrightarrow{x}\iota_b$. If the~input~word is in the~language, it must eventually contain $xaxa$ or $xbxb$, allowing us to progress to states $p_a$ or $p_b$ and visit an accepting~transition (infinitely often).
\end{example}
 HD B\"uchi (resp. coB\"uchi) automata have the same expressive power as their deterministic counterpart. They recognise a strict subset of all $\omega$-regular languages, recognised, for instance, by deterministic parity automata.
 However, HD automata constitute an~alternative for the~use of deterministic automata in various~applications.
In~procedures such~as $\LTL$\=/synthesis, the main bottleneck is the size of the resulting~deterministic automata~\cite{ERS2024EfficientLTL}.
One~motivation for the use of HD automata is the potential reduction in~size.
For~this~reason, a~central theme in the~study of HD~automata is~\emph{succinctness}: can HD~automata be smaller than their deterministic~counterparts?

 For~coB\"uchi~automata, the~landscape is well\=/understood. HD~coB\"uchi~automata can be exponentially~smaller than the~smallest language\=/equivalent deterministic~ones~\cite{KS15}. Moreover, Abu~Radi and Kupferman showed that HD~coB\"uchi~automata (with transition\=/based acceptance) can be minimised in polynomial~time, leading to a canonical~form for languages recognised by 
coB\"uchi~automata~\cite{AK22}. These theoretical~results have subsequently enabled canonical~representations of $\omega$\=/regular~languages~\cite{ES22,Ehlers25Rerailing,CLW26Layered}, algorithms for learning~HD~coB\"uchi automata~\cite{LW25}, and practical implementation~efforts~\cite{EK24a,EK24b}.

In~contrast, the~situation for B\"uchi~automata has remained elusive. While~it~is known that every~HD~B\"uchi~automaton admits a language\=/equivalent deterministic~B\"uchi automaton of quadratic~size~\cite{KS15}, it is unknown whether this~bound could be matched or~improved. 
Actually, it was not even known if every~HD~B\"uchi~automaton admits an~equivalent deterministic~B\"uchi~automaton of the~same~size.
In~this~work, we tackle the~following question.

\begin{quote} 
\centering 
\textbf{HD-Büchi succinctness problem.} \emph{Is~there an HD Büchi automaton that is strictly smaller than every~language\=/equivalent deterministic Büchi automaton?} 
\end{quote}

In~fact, for several years since Henzinger and Piterman's introduction of HD~automata in~2006~\cite{HP06}, it was not known whether HD~B\"uchi~automata could simply be determinised by pruning transitions (as is the case for HD~automata over finite~words~\cite[Proposition 14]{Col12}).
This~was settled in the negative by Boker et al.~\cite{BKKS13} by the $7$\=/state automaton example in~\cref{fig:BKKS13}. 
However, there is an~equivalent deterministic~automaton with only $4$~states (see~\cref{fig:detForBKKS13a}). The~computational~complexity of the~determinisation of HD~B\"uchi automata was only recently proved to be in polynomial~time by two~independent~works~\cite{AJP24,AKL24}.

In~recent years, the~research~community working on HD~automata has been very~active, and recent~results---for example the~works by Abu~Radi and Kupferman~\cite{AK22} or by Lehtinen and Prakash~\cite{LP25}---have gained a lot of insight on the~structure of HD~automata. 
The~fact that this~question remained open points to a~gap of our understanding of the~structure of HD and deterministic~B\"uchi~automata. 

Several~works raise the~above question on the~succinctness of HD~Büchi~automata. A~conjecture stating that all HD~parity~automata (and in particular HD~B\"uchi~automata) are determinisable~by~pruning was stated by Colcombet in~2012~\cite[Conj.~8]{Col12}.
The~succinctness question was asked in 2013~\cite{BKKS13} (for HD automata with various acceptance conditions) and solved in~2015 for coB\"uchi~automata~\cite{KS15}. The succinctness question for Büchi automata is discussed in various recent works~\cite[p.30]{AK22}, \cite[pages~3 and 16]{AJP24}, and \cite[p.69]{AKL24}.
For~the~subclass of Muller~languages (languages that can be described by a~Boolean~condition on the~letters that appear infinitely~often), Casares et al.\ showed that HD~parity and B\"uchi~automata are not more succinct than deterministic~ones~\cite[Cor.~4.16]{CCFL24FromMtoP}, while Casares, Colcombet, and Lehtinen showed that HD~Rabin~automata can be exponentially~smaller than deterministic~ones~\cite[Thm.~21]{CCL22SizeGFG}. In her thesis, Prakash also stated a~conjecture implying that every~HD~B\"uchi~automaton admits an~equivalent deterministic Büchi automaton of the same size~\cite[Conj.\ 2]{Pra25} (see \cref{conj:weak-rewiring}).

In this paper, we disprove this conjecture by answering the succinctness question positively.
\begin{theorem}\label{thm:main}
    There is a~history\=/deterministic B\"uchi~automaton that has strictly fewer~states than every language\=/equivalent deterministic B\"uchi automaton.
\end{theorem}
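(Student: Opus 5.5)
The statement is existential, so the plan is to exhibit an explicit HD Büchi automaton $\Amain$ with language $\Lmain$ and then prove two things. First, $\Amain$ is history-deterministic. Second, every deterministic Büchi automaton for $\Lmain$ has strictly more states than $\Amain$. The abstract indicates the witness has $65$ states, so the second part cannot be done by hand-enumeration. The design of $\Amain$ will generalise the idea of $\Abkks$ from \cref{ex:BKKS13}. There, an HD strategy cycles through several "guesses" (branches $\iota_a,\iota_b$) and is rewarded when the input confirms the guess. That example alone is not enough, because its nondeterminism can be folded into a $4$-state deterministic automaton. So I will combine several such gadgets (blocks with languages $\langBlock{i}$ and infix patterns $\langInfix{x}{y}$) that share states. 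The nondeterministic choice at a shared hub must then be resolved by a round-robin strategy that a deterministic automaton can only simulate with additional bookkeeping states.

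For the HD part, I will give an explicit resolver: a finite-memory strategy that cycles through the possible guesses, as in \cref{ex:BKKS13}. I will then prove that if $w\in\Lmain$, the run it builds visits significant transitions infinitely often. The argument is that each pattern witnessing membership recurs infinitely often, so the rotating guess eventually coincides with a recurrence, and this repeats. This is routine once the language is described combinatorially.

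For the lower bound, I will first reduce to a finite problem. I will show that any minimal deterministic Büchi automaton $\Dd$ for $\Lmain$ can be assumed to have a normal form. Specifically, its states carry residuals; the deterministic coB\"uchi-style safe components induced by significant transitions (the $\safeLang{\cdot}$ languages) must be covered by states of $\Dd$; and each such state can be labelled by a set of states of $\Amain$ it "tracks". Concretely, I would prove lemmas of the following form. Each residual class of $\Lmain$ needs at least a certain number of states in $\Dd$. Certain pairs of residuals must have separate states reachable via the infix words. Finally, the safe languages force a structure similar to the core automaton $\Dcore$ of the canonical HD coB\"uchi form of the complement \cite{AK22}. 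This reduces "$\Dd$ has at most $64$ states" to a finite combinatorial statement: the existence of a labelling/transition function on a bounded set satisfying finitely many consistency constraints. Checking these constraints against a finite set of test words ($\Ptarget$, $\Qtarget$) then rules out every candidate.

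The main obstacle is this last step. The search space of deterministic Büchi automata with $64$ states is far too large for brute force. So the theoretical reduction must cut it down enough for a SAT solver to certify unsatisfiability of "a DBA with $\le 64$ states consistent with these constraints exists". I would first try encoding the problem the way SAT-based DBA minimisation does. The encoding uses variables for transitions and significance, and constraints from a finite sample of accepted ultimately periodic words and rejected lasso words drawn from $\Lmain$ and its complement. I would add symmetry breaking by fixing the states forced by the residual/safe-component lemmas. If the plain encoding is too hard, I would decompose along the blocks. The idea is to prove that a minimal $\Dd$ restricted to each block must be isomorphic to one of few local shapes, solve each block separately, and then argue that the local savings cannot all be realised simultaneously. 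That yields $|\Dd|\ge 66>65=|\Amain|$, which proves \cref{thm:main}.
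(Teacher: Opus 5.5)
\textbf{There is a genuine gap: the proposal never fixes a witness, its HD resolver would fail for the kind of witness needed, and the lower bound lacks its decisive steps.}

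\textbf{The HD part.} A rotating, input-oblivious guess works for $\Abkks$ only because every block forces exactly one decision. In $\Amain$ the nondeterminism is on $a$ from $I$ and $I_c$, and one $\langBlock{\alpha}$-word can force several decisions. Take $(aac1\cdot aac2\cdot aac2\cdot \purpley)^\omega\in\Lmain$. A guess that alternates at each decision puts an $F$-guess on one of the two $a$'s of every block. The run then sits in $\{I_a,I_b\}$, and the closing letter sends it back to $I$, so no significant transition is ever seen. Similarly, inserting neutral segments such as $a1$ shifts the parity of any block-level rotation, since $a1$ returns both the classifier and the top box to their initial states. The resolver has to read the input: it should guess the type ($E$ or $F$) of the most recently completed $\langBlock{\alpha}$-factor, which is correct on the second factor of each $\langBlock{\beta}\langBlock{\beta}$. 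The paper avoids building this resolver by hand. It shows $\Amain$ is semantically deterministic, normal and has reach-covering, and invokes \cref{lemma:buchi-simplified-implies-hd}. The key reach-covering fact is that $I$ simulates $\sqrstate{p_i}$ in $\reach(\Amain)$ (\cref{lemma:simulation-Amain}). Succinctness also does not come from ``bookkeeping for the round-robin''. The witness is engineered to defeat three strategies: rewiring, copying $\sqrstate{Y}$, and replacing the $4$-state nondeterministic component, whose deterministic counterpart needs $5$ states.

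\textbf{The lower bound.} $\Lmain$ is prefix-independent, so it has a single residual, and counting states per residual class gives nothing. The forced states come from the complement. A DBA for $\Lmain$ is a deterministic coB\"uchi automaton $\Dmain$ for $\Lmain^c$ with the same states. The complement $\Cmain$ of \cite{AKL24}, consisting of the $61$ reach-deterministic states of $\Amain$, is safe minimal and safe centralised, hence minimal by \cref{akimport-minimality}. \cref{AK22import-propthreepointfour}, together with $\Sbus$ remaining strongly connected without $\purpley$, forces $\Dmain$ to contain two things: a $\purpley$-free end component $\Dcore$ of $60$ states realising the safe languages of $\Sbus$, and a state $d_y$. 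You gesture at this.

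The real work is showing that the at most $4$ remaining states cannot replace the nondeterministic component, and this has two parts.
\begin{itemize}
\item \cref{lemma:compy}: every deterministic safety automaton with $\Theju\subseteq L(\Dd)\subseteq\safeLangA{\Cmain}{q}$ has at least $5$ states. The paper reduces this, via $1\equiv\varepsilon$, the symmetries within $E$ and $F$, and restriction to $\{a,b,c,1,4\}$, to four $16$-state $3$-DFAs that DFAMiner solves.
\item The pigeonhole argument of \cref{lemma:imp2}: runs on $\Fin{\Theju}$ from the at most $5$ states outside $\Dcore$ enter $\Dcore$ at states matching at most $5$ of the $p$-states. So some index $j_0$ is missed, and alternating these words with $\langBlock{j_0}^2$ gives a rejecting run on a word of $\Lmain^c$.
\end{itemize}
Your fallback, a SAT encoding of DBA minimisation from sample words even with symmetry breaking, is what the paper reports as infeasible. ``Decomposing along blocks'' does not identify either of the two steps above. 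Finally, you propose to exclude DBAs with at most $64$ states, which only yields $\geq 65 = |\Amain|$; strictness requires excluding $65$.
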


Concretely, we provide an HD~Büchi~automaton with 65 states and prove that every language\=/equivalent deterministic~automaton requires at least 66 states (\cref{fig:wholeautomatonmainpaper}).

Verifying this gap is non-trivial. Although tools like SPOT exist for minimising deterministic transition-based Büchi~automata~\cite{spot2.1}, they use SAT solvers. 
Our attempt to feed the~full 66\=/state~deterministic~automaton to such a solver was perhaps overly optimistic; the~resulting CNF formula grew large enough to exhaust memory long before it exhausted the~search space and, to the~best of our knowledge, the~solver is still pondering over the existence of a 65\=/state deterministic automaton.
Furthermore, the~minimisation of deterministic Büchi automata is an $\NP$-complete problem for both state-based acceptance~\cite{Sch10} or transition-based acceptance~\cite{AE25}.

Hence, we construct our witness automaton in such a way that most of the~states of the~automaton are essential and cannot be minimised. This is done by leveraging structural~insights regarding HD~coBüchi~automata, drawing primarily on the~work of Abu Radi and Kupferman~\cite{AK22}. We use their characterisation of minimality for HD~coBüchi~automata to reason about the~minimality of a deterministic coBüchi automaton recognising the~complement~language.

Even with these theoretical reductions, we are left to prove that the smallest DFA separating two specific languages on finite words requires at least 5 states. To establish this lower bound, we employ the learner tool DFAMiner~\cite{DLS24}. This~tool formulates the~existence of a~4\=/state separator as a~SAT~instance and then proves its unsatisfiability, thereby making our~proof computer\=/aided. 

\subparagraph{Declaration concerning the use of AI and LLMs.} We declare that we only use the aid of computers to prove \cref{lemma:simpler-comp} (which is used to prove \cref{lemma:compy}). The generation of our 65-state example was done entirely by hand. In particular, no AI/ LLM was used towards the writing and research for this paper. 

\subparagraph{Organisation.} In the next section, we provide some preliminaries on history-deterministic automata. The body of the paper is then split in two parts: a general overview (\Cref{sec:overview}) presenting the main ideas, and a technical body (\Cref{sec:full-details}) with full details for the constructions and proofs outlined in \Cref{sec:overview}. 
These two parts follow the same structure in three subsections; each subsection from \Cref{sec:full-details} extends and provides full details of the corresponding subsection in \Cref{sec:overview}.
We end the paper with conclusions in \Cref{sec:conclusions}. 

\section{Preliminaries}
Before we proceed to the technical overview, we first provide some vocabulary that helps us describe the results. 
For a natural number $i$, we use $[i]$ to denote the set $\{1,2,\dots,i\}$. 
An \emph{alphabet}, often denoted by $\Sigma$, is a finite set. 
\subparagraph{Automata.} 
A~nondeterministic B\"uchi (resp.\ coB\"uchi)~automaton $\Aa=(Q,\Sigma,\Delta,q_0,F)$ consists of a~finite 
set of \emph{states} $Q$, an~\emph{initial state} $q_0\in Q$, a set of  \emph{transitions} $\Delta \subseteq Q \times \Sigma \times Q$ and a subset $F \subseteq \Delta$ of \emph{\significant} transitions. We denote  transitions $(p,a,q)$ by $p\xrightarrow{a} q$, and  transitions $(p,a,q)$ that are \significant by $p\xRightarrow{a} q$. 

A \emph{run} of $\Aa$ over an infinite word $w$ in $\Sigma^{\omega}$ is an infinite path $\rho$ of the automaton starting at~$q_0$ such that the~edges of $\rho$ are labelled by the letters of $w$ in sequence. If $\Aa$~is a~Büchi~automaton, then $\rho$ is an \emph{accepting run} if $\rho$ contains infinitely many \significant transitions. Dually, if $\Aa$ is a coB\"uchi automaton, then $\rho$ is an \emph{accepting run} if $\rho$ contains finitely many \significant transitions. Infinite runs that are not accepting are dubbed \emph{rejecting}.

In the rest of the paper, the word automaton means either a nondeterministic Büchi automaton or a nondeterministic coB\"uchi automaton, unless mentioned otherwise. 
A~\emph{subautomaton} of $\Aa$ is an automaton obtained by removing some transitions of $\Aa$.

A word $w$ is \emph{accepted} by an automaton $\Aa$ if there is an accepting run of $\Aa$ on $w$, and \emph{rejected} otherwise. The \emph{language} of $\Aa$ is the set of words that are accepted by $\Aa$, which we denote as $L(\Aa)$. 
Two automata $\Aa$ and $\Bb$ are \emph{language-equivalent} if they have the~same~language. 

For a state $q$ in $\Aa$, we write $(\Aa,q)$ to denote the automaton where we set $q$ to be the initial state. We say that two states $p,q$ of $\Aa$ are \emph{language-equivalent} if $L(\Aa,p)=L(\Aa,q)$.

We say that an automaton is \emph{deterministic} if, from every state and for every letter, there is at most one outgoing transition from that state on that letter. 

\subparagraph{History determinism.}
A \emph{resolver} for an automaton $\Aa$ is a function of type $\Delta^*\times\Sigma\to{\Delta}$ that takes as input a finite run in $\Aa$ and a letter $a$, and outputs a transition on $a$ from the endpoint of the run. This way, a \emph{resolver} \emph{induces} a run on every word in $\Sigma^*$. We say that a resolver for $\Aa$ is an~\emph{HD\=/resolver} if it induces an accepting run on every word in the language of $\Aa$. We say that an automaton is \emph{history deterministic}, HD for short, if there is an HD resolver for it.

We refer to Example~\ref{ex:BKKS13} for an example of an HD automaton. We next give a non-example.
\begin{example}
    \begin{wrapstuff}[r, type=figure, width=3.6cm, top=1]
\centering
        \begin{tikzpicture}[
    base dot/.style={
        inner sep=0pt,      
        minimum size=5pt,   
        fill=blue!60!black, 
        font=\scriptsize    
    },
    cstate/.style={base dot, circle},    
    sstate/.style={base dot, rectangle}, 
    node distance=1cm,
    tight label/.style={auto, inner sep=0.5pt, font=\scriptsize, text=black}
]
        \tikzset{every state/.style = {inner sep=-3pt,minimum size =15}}

    \node[cstate,fill=red!40, red!40] (q3) at (0,0) {AA};
    \node[cstate,fill=red!40, red!40] (q4) at (1.5,0) {AA};
    \node[cstate,fill=red!40, red!40] (q5) at (3,0) {AA};
    \node at (q3) {$A$};
    \node at (q4) {$I$};
    \node at (q5) {$B$};
    \path[->] (1.5,-0.5) edge (q4);

    \path[->,>=stealth,shorten >=1pt]
        (q3) edge [bend left = 60, thick, green!50!black, double] node [above] {$a$} (q4)
        (q3) edge [bend right = 60] node [above] {$b$} (q4)
        (q4) edge node [above] {$a,b$} (q3)
        (q4) edge node [above] {$a,b$} (q5) 
        (q5) edge [bend left = 60] node [above] {$a$} (q4)
        (q5) edge [bend right = 60, thick, green!50!black, double] node [above] {$b$} (q4)
        ;
    \end{tikzpicture}
    \caption{A B\"uchi automaton that is not HD.}
    \label{fig:ND-not-HD}
\end{wrapstuff} In Figure~\ref{fig:ND-not-HD} we show a B\"uchi automaton that is not history deterministic.
    This automaton accepts every word over the alphabet $\{a,b\}$.
    However, to resolve the nondeterminism from $I$ so as to see a \significant transition next, the resolver needs to guess whether the next letter is going to be $a$ or $b$. For every resolver for the automaton, there is a word on which the run constructed by the resolver contains only non-\significant transitions, and thus that run is rejecting.
\end{example}
\vspace{-0.5cm}
\paragraph*{Structure of HD B\"uchi automata}
Kuperberg and Skrzypczak's quadratic determinisation construction of HD Büchi automata~\cite{KS15} first involved proving that every HD Büchi automaton can be simplified so that it satisfies certain properties. Their analysis was then refined later by Acharya, Jurdzi\'nski, and Prakash~\cite{AJP24}; we will use the presentation from Prakash's PhD thesis~\cite{Pra25} in this~work. We begin by defining notions that are useful towards this. 

\subparagraph{Simulation relation.}
Simulation is a relation between two automata that is coarser than language inclusion, that is, if $\Bc$ simulates $\Ac$, then $L(\Ac) \subseteq L(\Bc)$ \cite{HKR02}. Concretely, the simulation relation can be described as a game between two players Adam and Eve, called the simulation game. 
The \emph{simulation game of automaton $\Ac$ by automaton $\Bc$} starts with an~Eve’s~token at the initial state of the automaton $\Bc$ and Adam’s token at the initial state of~$\Ac$. In each round,
\begin{enumerate}
    \item Adam selects a~letter and moves his~token along a~transition on that letter in~$\Ac$,
    \item Eve moves her~token along a transition on that letter in automaton $\Bc$.
\end{enumerate} 
Consequently, in the limit of such a~play, Adam constructs a run on his token in $\Ac$, and Eve constructs a run on her token in $\Bc$,
both on the same word chosen by Adam. Eve \emph{wins} if either Eve's run is accepting or Adam’s run is rejecting. If Eve has a winning strategy in this game, then we say that automaton $\Bc$ \emph{simulates} automaton $\Ac$.
For two states $p$ and $q$ in an automaton $\Ac$, we will often write $q$ simulates $p$ in $\Ac$ to say that $(\Ac,q)$ simulates $(\Ac,p)$. 

\subparagraph{Reach-covering.}
A \emph{reachability automaton} is a Büchi automaton where all of its \significant transitions lead to an \emph{accepting sink state} that has a \significant self-loop on every letter. 

For every B\"uchi automaton $\Aa$, we use $\reach(\Aa)$ to denote the reachability automaton that is obtained by redirecting all the \significant transitions of $\Aa$ to an accepting sink-state, and by not modifying all the transitions that are not \significant. 

We say that $q$ is a \emph{\good state} in $\Aa$ if the accessible portion of $(\reach(\Ac),q)$ is \emph{deterministic}, i.e., every word $w$ has an unique run from $q$ in $\reach(\Ac)$. 

We say that $\Aa$ has \emph{reach\=/covering} if for every state $p$ in $\Aa$, there is a \good state $q$ of $\Aa$ such that the following two conditions hold.
\begin{enumerate}
    \item $p$ and $q$ are language-equivalent in $\Aa$, and
    \item $p$ simulates $q$ in $\reach(\Aa)$.
\end{enumerate}
Note that, since \good states simulate themselves in $\reach(\Aa)$, $\Aa$ has reach\=/covering if and only if the above conditions hold for every state $p$ that is not \good.

\subparagraph{Semantic determinism.} We say that an automaton $\Aa$ is \emph{semantically deterministic}, SD for short, if outgoing transitions on the same letter from language-equivalent states lead to language-equivalent states. That is, if $p,q$ are language-equivalent states in $\Aa$ and $p\xrightarrow{a}p', q\xrightarrow{a} q'$ are  transitions on some letter $a$, then $p'$ and $q'$ are also language\=/equivalent. 

\subparagraph{Normal form.} 
Given a Büchi automaton $\Aa$, consider the graph consisting only of non-significant transitions of $\Aa$. The automaton $\Aa$ is \emph{normal} if every non-significant transition is in some SCC of this graph. We note that every Büchi automaton can be converted into a language-equivalent normal automaton by making \significant all  
non-significant transitions of $\Aa$ changing of SCC; this transformation does not introduce any accepting cycles, and thus, does not increase the language of the automaton.
\subparagraph{Simplified Büchi automata.}
We say that a Büchi automaton $\Aa$ is \emph{simplified} if $\Aa$ is semantically deterministic, normal, and has \emph{reach-covering}. 
Acharya, Jurdzi\'nski, and Prakash showed that 
every HD Büchi automaton can be \emph{simplified} in polynomial time without increasing the number of states~\cite{AJP24} (see also \cite[Lemma 5.8]{Pra25}). 
We will use their result that every simplified Büchi automaton is history deterministic as well, in order to prove history determinism of various Büchi automata.

\begin{lemma}[{\cite[Lemma 5.7]{Pra25}}]\label[lemma]{lemma:buchi-simplified-implies-hd}
    Every simplified Büchi automaton is history deterministic.
\end{lemma}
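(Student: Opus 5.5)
We will build an HD resolver for a simplified Büchi automaton $\Aa$ directly from reach-covering, using semantic determinism to guarantee that the run stays in states whose language is the correct residual. Write $[p]$ for the language-equivalence class of a state $p$. The resolver maintains one pointer: the current state $p$ of the run, together with a \good state $q$ and a winning strategy for the simulation of $(\reach(\Aa),q)$ by $(\reach(\Aa),p)$, both supplied by reach-covering (with $q$ language-equivalent to $p$).

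The strategy works in phases. At the start of a phase at state $p$, fix the pair $(q,\sigma)$, where $\sigma$ is the simulation strategy. Since $q$ is \good, every input word has a unique run from $q$ in $\reach(\Aa)$. As letters arrive, advance this unique run of $q$ as Adam's moves, and move $p$ as Eve according to $\sigma$. If Adam's run from $q$ takes a \significant transition, i.e.\ reaches the accepting sink in $\reach(\Aa)$, then Eve's run must also reach the sink, because she wins the reachability simulation. This means the actual run from $p$ in $\Aa$ has taken a \significant transition. At that moment the phase ends, and a new phase begins at the current state $p'$ with a fresh pair $(q',\sigma')$.

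Semantic determinism gives the invariant that the current state always has language equal to $u^{-1}L(\Aa)$ for the prefix $u$ read so far. By induction, runs of an SD automaton on the same prefix from equivalent states lead to equivalent states. In particular, the current state of the resolver's run is language-equivalent to the state reached by any run on $u$ from the initial state.

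To show acceptance, suppose $w\in L(\Aa)$ and that some phase never ends, starting at position $i$ in state $p$ with \good state $q$. The remainder $w_{>i}$ lies in $L(\Aa,p)=L(\Aa,q)$. The key step is to show that the unique run of $q$ in $\reach(\Aa)$ on $w_{>i}$ must hit a \significant transition, which contradicts the phase never ending. Suppose instead that it stays forever within non-\significant transitions. Consider an accepting run of $\Aa$ from $q$ on $w_{>i}$. This run must eventually take a \significant transition. Before that point, it follows non-\significant transitions from $q$, and since $\reach$ is deterministic from $q$, these coincide with the unique run. So the unique run takes that same \significant transition, which is the contradiction we wanted. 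Hence every phase ends, and the resolver's run sees infinitely many \significant transitions.

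The resolver uses only history: it stores the current phase data and the prefix seen so far.

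The main obstacle I expect is making precise the claim that the accepting run from $q$ agrees with the $\reach$ run up to its first \significant transition. This needs the accessible part of $(\reach(\Aa),q)$ to be deterministic on all non-\significant moves, including the choice of which transition to take on a letter when one option is \significant. The definition of \good states gives exactly this, since the run in $\reach$ is unique per word. Normality is then not needed for this direction, apart from possibly ensuring that the relevant states are well defined.
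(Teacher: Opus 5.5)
The paper does not prove this lemma itself; it imports it from Prakash's thesis, so there is no in-paper argument to compare against. Your approach is the natural one and is sound in outline. You play a winning simulation strategy for the reach\=/deterministic state $q$ from the current state, restart after each significant transition, and use semantic determinism to keep the current state's language equal to $u^{-1}L(\Aa)$. Your key step is correct: an accepting run from $q$ agrees with the unique run of $\reach(\Aa)$ from $q$ up to its first significant transition. You are also right that normality is never used.

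One step has to be restated: the end of a phase. As written, the phase ends when Adam's token enters the sink, and you take Eve's token to be there at that moment. A winning strategy only guarantees that Eve enters the sink \emph{eventually}, so she may be strictly later, and restarting at Adam's moment can then lose.

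Take the automaton over $\{a\}$ with a non-significant self-loop $s \xrightarrow{a} s$ and significant transitions $s \xRightarrow{a} q$ and $q \xRightarrow{a} q$. It is semantically deterministic and normal. The state $s$ is language-equivalent to the reach\=/deterministic state $q$ and simulates it in $\reach(\Aa)$, so the automaton is simplified. The strategy ``take the self-loop once, then jump'' is winning for $s$ against $q$. But if each phase restarts as soon as Adam reaches the sink, and the same pair is chosen again, the resolver stays on the self-loop forever on $a^\omega$.

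Conversely, Eve may enter the sink before Adam, after which her token no longer tells you where the run in $\Aa$ is. The fix is to end the phase exactly when Eve's token enters the sink. At that point the resolver takes any significant transition of $\Aa$ realising that move; semantic determinism makes the choice irrelevant.

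With this rule your acceptance argument closes the proof. Suppose a phase starting at position $i$ never ends. Consider the play in which Adam follows the unique run of $\reach(\Aa)$ from $q$ on $w_{>i}\in L(\Aa,q)$; it is consistent with $\sigma$. In it Adam's run reaches the sink and Eve's never does, contradicting that $\sigma$ is winning.
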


\paragraph*{Interpretation of figures} Throughout the paper, we include many figures for automata, some of them representing automata with more states than we would like. To represent these automata, we use some shorthands.
First, a state may appear several times in a figure; the final automaton is the result of merging the states with the same name. When many states have the same target over the same letter $\sigma$, we may surround these states by a box, and use transitions outgoing from the box. For example, a transition on $\sigma$ from a box to a state $q$ means that from all states of the box, the $\sigma$-transition goes to $q$.
As in the text, double green arrows represent significant transitions for B\"uchi automata. (For coB\"uchi automata, significant transitions will be double arrows, but red. These rarely occur.)
Sometimes we use different colours and shapes or sizes for states; this is mostly for aesthetic purposes (in Figures~\ref{fig:strong-but-no-weak},~\ref{fig:weak-rewiring-counterexample}, and~\ref{fig:wierdYdet}, colour differentiation also serves to identify language-equivalent states).

\renewcommand{\thesection}{\Roman{section}}
\setcounter{section}{0}

\section{General overview}\label{sec:overview}

This section is a technical overview of our contribution. We outline the main constructions and the ideas of the proofs. We start in Section~\ref{subsec:ov-examples} by giving  some~intuition and first steps towards the~construction of the 65-state HD-automaton in~Section~\ref{subsec:ov-main}. 
Then, in Section~\ref{subsec:ov-succinctness} we sketch the proof of succinctness for Theorem~\ref{thm:main}, stating that there is no equivalent deterministic automaton with only 65~states. 
We provide full details in \Cref{sec:full-details}.

\subsection{Motivating examples}\label{subsec:ov-examples}

In this section, we analyse why na\"ive determinisation techniques fail. We provide increasingly complex counterexamples to specific ``straw man''  determinisation techniques, which then help us build the intuition required for our main result. These examples demonstrate the pitfalls that our final construction must avoid. 
Although the definition of the 65-state automaton can be understood directly by skipping ahead to \cref{subsec:ov-main} (or \cref{subsec:full-main}), we strongly recommend reading this section first to understand the method behind the madness of the final construction.

\subparagraph{A motivating example.}
We start with the example automaton $\Abkks$ from Figure~\ref{fig:BKKS13}. 
We show two deterministic Büchi automata that are language-equivalent to $\Abkks$ and that have 
at most as many states as $\Abkks$ in \cref{fig:sec3-bkks-determinisationsFIRST}.


\begin{wrapstuff}[r, type=figure, width=8cm, top=1]
\centering
\captionsetup{font=footnotesize,skip=2pt}
\begin{subfigure}[t]{.20\textwidth}
  \centering
\begin{tikzpicture}[
    base dot/.style={
        inner sep=0pt,
        minimum size=5pt,
        fill=blue!60!black,
        font=\scriptsize
    },
    cstate/.style={base dot, circle},
    sstate/.style={base dot, rectangle},
    node distance=1.5cm,
    tight label/.style={auto, inner sep=0.5pt, font=\scriptsize, text=black}
]
    \node[cstate,fill=red!40, red!40] (q3) at (-0.7,-1.4) {AA};
    \node[cstate,fill=red!40, red!40] (q4) at (0.7,-1.4) {AA};
    \node[cstate,fill=red!40, red!40] (q5) at (-0.7,-2.2) {AA};
    \node[cstate,fill=red!40, red!40] (q6) at (0.7,-2.2) {AA};
    \node[cstate,fill=red!40, red!40] (qIa) at (q3) {AA};
    \node[cstate,fill=red!40, red!40] (qIb) at (q4) {AA};
    \path[->] (-0.7,-1) edge (q3);
    \path[-stealth]
        (q3) edge node [right] {$x$} (q5)
        (q4) edge node [left] {$x$} (q6)
        (q5) edge node [xshift=-0.2cm,yshift=-0.3cm] {$b$} (q4)
        (q6) edge node [xshift=0.2cm,yshift=-0.3cm] {$a$} (q3)
        (q5) edge [bend left, thick, green!50!black, double] node [left] {$a$} (qIa)
        (q6) edge [bend right, thick, green!50!black, double] node [right] {$b$} (qIb);
\node[] (space) at (0,-3) {};
    \end{tikzpicture}
\caption{Rewiring}\label{fig:detForBKKS13aFIRST}\label{fig:detForBKKS13a}
\end{subfigure}%
\hspace{0.7cm}
\begin{subfigure}[t]{.55\textwidth}
  \centering
  \begin{tikzpicture}[
    base dot/.style={
        inner sep=0pt,
        minimum size=5pt,
        fill=blue!60!black,
        font=\scriptsize
    },
    cstate/.style={base dot, circle},
    sstate/.style={base dot, rectangle},
    node distance=1.5cm,
    tight label/.style={auto, inner sep=0.5pt, font=\scriptsize, text=black}
]
    \node[cstate,fill=red!40, red!40] (q0) at (0,0) {AAA};
    \node at (q0) {$I$};
    \path[->] (-0.5,0) edge (q0);
     \node[cstate,fill=red!40, red!40] (q1) at (0,-0.8) {AA};
     \node[cstate,fill=red!40, red!40] (q3) at (-0.7,-1.4) {AA};
     \node[cstate,fill=red!40, red!40] (q4) at (0.7,-1.4) {AA};
     \node[cstate,fill=red!40, red!40] (q5) at (-0.7,-2.2) {AA};
     \node[cstate,fill=red!40, red!40] (q6) at (0.7,-2.2) {AA};
     \node[cstate,fill=red!40, red!40] (qI) at (0,-2.8) {AAA};
     \node at (qI) {$I$};
    \path[-stealth]
        (q0) edge node [right] {$x$} (q1)
        (q1) edge [thick, green!50!black, double] node [above] {$a$} (q3)
        (q1) edge [thick, green!50!black, double] node [above] {$b$} (q4)
        (q3) edge node [left] {$x$} (q5)
        (q4) edge node [right] {$x$} (q6)
        (q5) edge node [xshift=-0.2cm,yshift=-0.3cm] {$b$} (q4)
        (q6) edge node [xshift=0.2cm,yshift=-0.3cm] {$a$} (q3)
        (q5) edge [thick, green!50!black, double] node [left,pos=0.8,inner sep=1.5mm] {$a$} (qI)
        (q6) edge [thick, green!50!black, double] node [right,pos=0.9,inner sep=2mm] {$b$} (qI);
    \end{tikzpicture}
\caption{Replacing the nondeterministic component}\label{fig:detForBKKS13bFIRST}\label{fig:detForBKKS13b}
\end{subfigure}
\caption{Two language-equivalent deterministic Büchi automata to the automaton from Figure~\ref{fig:BKKS13}.}
\label{fig:sec3-bkks-determinisationsFIRST}
\end{wrapstuff} 
First, observe that the $4$ bottom states of $\Abkks$ are \good: this subautomaton is deterministic until reaching a \significant transition.
The states $I,\iota_a$ and $\iota_b$ are not, 
we refer to these $3$ states as the ``nondeterministic component'' of $\Abkks$.

The automaton in \cref{fig:detForBKKS13aFIRST} is obtained by \emph{rewiring}: keeping the deterministic component and redirecting significant transitions. 
This is formally defined below.
The second automaton, on the other hand, is obtained by replacing the `nondeterministic component' of $\Aa$ with a deterministic part  with no more states. 

Below, we describe automata where these and other determinisation techniques do not work (without strictly increasing the number of states).
We then use these examples and insights  to motivate our succinct HD Büchi automaton in \cref{subsec:ov-main}.

\subsubsection{The rewiring conjectures}\label{subsec:ov-rewiring}
For a simplified Büchi automaton $\Aa$, a \emph{rewiring} of $\Aa$ is a deterministic Büchi automaton~$\Bb$ that satisfies the following three conditions:
\wrapstuffclear
\begin{enumerate}
    \item The states of $\Bb$ are the \good states of $\Aa$, and the non-significant transitions of $\Bb$ are the same as the non-significant transitions that are outgoing from the \good states of $\Aa$.
    \item The initial state of $\Bb$ is set to some \good state of $\Aa$ such that it simulates the initial state of $\Aa$ in $\reach(\Aa)$.
    \item The significant transitions of $\Bb$ are of the form $p\xRightarrow{a}q$, such that $L(\Aa,q)=a^{-1}L(\Aa,p)$.
\end{enumerate}
In other words, $\Bb$ is obtained by redirecting the significant transitions from \good states of $\Aa$ to other \good states of $\Aa$, so that only \good states are reached. 
In particular, $\Bb$ has no more states than $\Aa$.
Prakash~\cite[Conjecture 2]{Pra25} thus proposed  what we call the \emph{weak\=/rewiring conjecture}. 
\begin{conjecture}[Weak-rewiring conjecture]\label[conjecture]{conj:weak-rewiring}
Every simplified HD Büchi automaton has a language-equivalent rewiring.
\end{conjecture}

Towards disproving the weak\=/rewiring conjecture, we first disprove a stronger version of the~conjecture.
The strong\=/rewiring conjecture states that we can redirect significant transitions towards \good states,  while preserving the language, one-by-one and in an arbitrary order. Whereas in the weaker conjecture, we are allowed to redirect all significant transitions at once, which may be necessary to preserve language-equivalence.

\begin{conjecture}[Strong-rewiring conjecture]\label[conjecture]{conj:strong-rewiring}
Let $\Aa$ be a simplified HD Büchi automaton, and let $p$ be a \good state in $\Aa$ such that there is transition of the form $p\xRightarrow{a}p'$, where $p'$ is not a \good state. Then there exists a \good state $q$ in $\Aa$ such that $L(\Aa,q) = a^{-1} L(\Aa,p)$, and the automaton $\Bb$, obtained by removing all transitions of the form $p\xrightarrow{a}p''$ in $\Aa$ and adding the transition $p\xRightarrow{a}q$, is language-equivalent to $\Aa$.
\end{conjecture}

\subparagraph{Strong rewiring.}
To disprove the strong\=/rewiring conjecture, we consider the automaton $\Astrong$ in 
\cref{fig:strong-rewiring-counterexampleCOPY}, over the alphabet $\Sigma = \{a,b,c,x,\purpley\}$.
For $\alpha,\beta \in \{a,b,c\}$, let
\[L_{\alpha} =  ((a+b+c+\purpley)^* (\varepsilon + x^+\purpley) )^*  x^+ \alpha \, , \text{ and } \;
L_{\alpha\beta\beta} = L_\alpha L_\beta L_\beta.\]
\begin{figure}[ht]
\centering
{\footnotesize\thinmuskip=1.5mu 
\colorlet{sname}{blue!60!black} 
\begin{tikzpicture}[
    base dot/.style={inner sep=0pt, minimum size=5pt, fill=blue!60!black, font=\scriptsize},
    cstate/.style={base dot, circle},
    sstate/.style={base dot, rectangle},
    rstate/.style={base dot, circle, fill=red!40, draw=red!40, text=black,
                   inner sep=1pt, minimum size=14pt, font=\footnotesize},
    acc/.style={thick, green!50!black, double}      
]

\node[rstate] (I) at (0,0) {$I$};

\node[cstate,label=left:$\color{sname}\iota_a$]  (ia) at (-2.2,-1.15) {};
\node[cstate,label=left:$\color{sname}\iota_b$]  (ib) at (  0 ,-1.15) {};
\node[cstate,label=right:$\color{sname}\iota_c$] (ic) at ( 2.2,-1.15) {};

\node[rstate] (pa) at (-2.2,-2.15) {$p_a$};
\node[rstate] (pb) at (  0 ,-2.15) {$p_b$};
\node[rstate] (pc) at ( 2.2,-2.15) {$p_c$};

\node[cstate,label=left:$\color{sname}p_a'$] (pa2) at (-2.2,-3.10) {};
\node[cstate,label=left:$\color{sname}p_b'$] (pb2) at (  0 ,-3.10) {};
\node[cstate,label=left:$\color{sname}p_c'$] (pc2) at ( 2.2,-3.10) {};

\node[sstate,label=below:$\color{sname}q_b$] (lab) at (-2.70,-4.00) {};
\node[sstate,label=below:$\color{sname}q_c$] (lac) at (-1.70,-4.00) {};
\node[sstate,label=below:$\color{sname}q_a$] (lba) at (-0.55,-4.00) {};
\node[sstate,label=below:$\color{sname}q_c$] (lbc) at ( 0.55,-4.00) {};
\node[sstate,label=below:$\color{sname}q_a$] (lca) at ( 1.70,-4.00) {};
\node[sstate,label=below:$\color{sname}q_b$] (lcb) at ( 2.70,-4.00) {};

\path[-stealth]
    (I)  edge [loop above]    node [above] {$\purpley,a,b,c$} (I)
    (I)  edge [bend left=15]  node [above] {$x$} (ia)
    (ia) edge [bend left=15]  node [above,yshift=0.1cm] {$b,c,\purpley$} (I)
    (I)  edge [bend right=15] node [left]  {$x$} (ib)
    (ib) edge [bend right=15] node [right] {$a,c,$}
                              node [right,yshift=-0.25cm] {$\purpley$} (I)
    (I)  edge [bend right=15] node [above] {$x$} (ic)
    (ic) edge [bend right=15] node [above] {$a,b,\purpley$} (I)

    (ia) edge [acc] node [left]  {$a$} (pa)
    (ib) edge [acc] node [right] {$b$} (pb)
    (ic) edge [acc] node [right] {$c$} (pc)

    (ia) edge [loop above] node [above] {$x$} (ia)
    (ib) edge [loop right] node [right] {$x$} (ib)
    (ic) edge [loop above] node [above] {$x$} (ic)

    (pa) edge [bend left=15] node [right] {$x$} (pa2)
    (pb) edge [bend left=15] node [right] {$x$} (pb2)
    (pc) edge [bend left=15] node [right] {$x$} (pc2)

    (pa) edge [loop right] node [right] {$\purpley,a,b,c$} (pa)
    (pb) edge [loop right] node [right] {$\purpley,a,b,c$} (pb)
    (pc) edge [loop right] node [right] {$\purpley,a,b,c$} (pc)

    (pa2) edge [bend left=15] node [left] {$a,\purpley$} (pa)
    (pb2) edge [bend left=15] node [left] {$b,\purpley$} (pb)
    (pc2) edge [bend left=15] node [left] {$c,\purpley$} (pc)

    (pa2) edge [loop right] node [right] {$x$} (pa2)
    (pb2) edge [loop right] node [right] {$x$} (pb2)
    (pc2) edge [loop right] node [right] {$x$} (pc2)

    (pa2) edge [acc] node [left,pos=.6]  {$b$} (lab)
    (pa2) edge [acc] node [right,pos=.6] {$c$} (lac)
    (pb2) edge [acc] node [left,pos=.6]  {$a$} (lba)
    (pb2) edge [acc] node [right,pos=.6] {$c$} (lbc)
    (pc2) edge [acc] node [left,pos=.6]  {$a$} (lca)
    (pc2) edge [acc] node [right,pos=.6] {$b$} (lcb)
    ;

\node[sstate,label=left:$\color{sname}q_a$] (qa) at (5.20,0) {};
\node[sstate,label=left:$\color{sname}q_b$] (qb) at (7.30,0) {};
\node[sstate,label=left:$\color{sname}q_c$] (qc) at (9.40,0) {};

\node[cstate,label=left:$\color{sname}q_a'$] (qa2) at (5.20,-1.10) {};
\node[cstate,label=left:$\color{sname}q_b'$] (qb2) at (7.30,-1.10) {};
\node[cstate,label=left:$\color{sname}q_c'$] (qc2) at (9.40,-1.10) {};

\node[sstate,label=below:$Y$]                (aY) at (4.58,-2.10) {};
\node[sstate,label=below:$\color{sname}q_b$] (ab) at (5.20,-2.10) {};
\node[sstate,label=below:$\color{sname}q_c$] (ac) at (5.82,-2.10) {};

\node[sstate,label=below:$\color{sname}q_a$] (ba) at (6.68,-2.10) {};
\node[sstate,label=below:$Y$]                (bY) at (7.30,-2.10) {};
\node[sstate,label=below:$\color{sname}q_c$] (bc) at (7.92,-2.10) {};

\node[sstate,label=below:$\color{sname}q_a$] (ca) at (8.78,-2.10) {};
\node[sstate,label=below:$\color{sname}q_b$] (cb) at (9.40,-2.10) {};
\node[sstate,label=below:$Y$]                (cY) at (10.02,-2.10) {};

\node[sstate,label=left:$Y$] (Y) at (7.30,-3.15) {};
\node[rstate] (I2) at (7.30,-4.15) {$I$};

\path[-stealth]
    (qa) edge [loop above] node [above] {$\purpley,a,b,c$} (qa)
    (qb) edge [loop above] node [above] {$\purpley,a,b,c$} (qb)
    (qc) edge [loop above] node [above] {$\purpley,a,b,c$} (qc)

    (qa) edge node [right] {$x$} (qa2)
    (qb) edge node [right] {$x$} (qb2)
    (qc) edge node [right] {$x$} (qc2)

    (qa2) edge [purple,bend left] node [left] {$y$} (qa)
    (qb2) edge [purple,bend left] node [left] {$y$} (qb)
    (qc2) edge [purple,bend left] node [left] {$y$} (qc)

    (qa2) edge [loop right] node [right] {$x$} (qa2)
    (qb2) edge [loop right] node [right] {$x$} (qb2)
    (qc2) edge [loop right] node [right] {$x$} (qc2)

    (qa2) edge [acc] node [left,pos=.55]  {$a$} (aY)
    (qa2) edge       node [right,pos=.68] {$b$} (ab)
    (qa2) edge       node [right,pos=.55] {$c$} (ac)

    (qb2) edge       node [left,pos=.55]  {$a$} (ba)
    (qb2) edge [acc] node [right,pos=.68] {$b$} (bY)
    (qb2) edge       node [right,pos=.55] {$c$} (bc)

    (qc2) edge       node [left,pos=.55]  {$a$} (ca)
    (qc2) edge       node [right,pos=.68] {$b$} (cb)
    (qc2) edge [acc] node [right,pos=.55] {$c$} (cY)

    (Y) edge [loop right] node [right] {$x,a,b,c$} (Y)
    (Y) edge [acc]        node [left]  {$y$} (I2)
    ;
\end{tikzpicture}
}
\caption{HD B\"uchi automaton $\Astrong$ for which the strong-rewiring conjecture is false.
To improve readability, we represent the automaton ``by chunks''; note that some states
appear multiple times (in total, the automaton has $17$ states).}
\label{fig:strong-rewiring-counterexampleCOPY}
\end{figure}

The automaton $\Astrong$ recognises the words that contain infinitely many factors in some
$L_{\alpha\beta\beta}$ with $\alpha\neq\beta$, as well as infinitely many $\purpley$s:
\[L(\Astrong) = \Lstrong =\Big(\Sigma^{*} \cdot \bigcup_{\alpha\neq\beta\in \{a,b,c\}} L_{\alpha\beta\beta}\, \cdot  \Sigma^{*}\cdot \purpley\Big)^{\omega}.\]
For comparison, the language of $\Abkks$ is $(xa+xb)^*(xaxa+xbxb)^{\omega}$, with the two-letter words
$xa$ and $xb$ in the role of blocks played by $L_\alpha$ here. 

Indeed, on reading a factor of $L_\alpha L_\beta L_\beta$ with
$\alpha\neq\beta$ from $I$, the automaton can produce the run
$I \xrightarrow{L_\alpha} p_\alpha \xrightarrow{L_\beta} q_\beta \xrightarrow{L_\beta} Y$, and from
$Y$ it returns to $I$ on the letter $\purpley$ along a \significant transition. 
An HD strategy for $\Astrong$ needs to ensure reaching some \significant transition from $I$ if many factors $L_\alpha L_\beta L_\beta$ are read; this can be done similarly as in  $\Abkks$.



The strong-rewiring conjecture fails for $\Astrong$ because the transition $Y\xRightarrow{\purpley}I$
cannot be redirected on its own. Indeed, for \emph{every} \good state $s$, replacing it by
$Y\xRightarrow{\purpley}s$ changes the language (see \cref{lemma:no-weak-rewiring}). The weak-rewiring
conjecture, however, survives here (see \cref{fig:strong-but-no-weak}),
which is possible because the \good states of $\Astrong$ are
spread over several SCCs of non-\significant transitions.


\subparagraph{Weak rewiring.}
\input{figures_sec2and3/SidebysideWeakdet.tex}
We  now present the automaton $\Aweak$ shown in \cref{fig:weak-rewiring-counterexampleFIRST}, 
which we will prove to be a counterexample to the weak-rewiring conjecture. 
The automaton $\Aweak$ is obtained by modifying $\Astrong$ as follows.
\begin{enumerate}
    \item The \significant transitions from the $p'$s to the $q$s are made non-\significant. 
    \item So that these transitions remain non-\significant once the automaton is made normal, we add three \emph{reset letters} $\reset{a},\reset{b}$ and $\reset{c}$: from every state $s$ in $\{I,\iota_a,\iota_b,\iota_c\}$ there is a \significant transition $s\xRightarrow{\reset{\alpha}}p_\alpha$, from $Y$ there are non-\significant self-loops on the reset letters, and from every other state $s$ there is a non-\significant transition $s\xrightarrow{\reset{\alpha}}p_\alpha$.
\end{enumerate}
Writing $L'_{\alpha\beta\beta} = (L_{\alpha}+\reset{\alpha})L_{\beta}L_{\beta}$ and
$\Sigma' = \{x,a,b,c,\purpley,\reset{a},\reset{b},\reset{c}\}$, the language of $\Aweak$ is
$$\Lweak = \Big((x+a+b+c+\purpley)^{*} \cdot  \bigcup_{\alpha\neq\beta\in\{a,b,c\}} L'_{\alpha\beta\beta} \cdot  \Sigma'^{*}\cdot \purpley\Big)^{\omega}.$$
These changes make sure every \good state apart from $Y$ is a single SCC of $\reach(\Aweak)$, 
the blue box in \cref{fig:weak-rewiring-counterexampleFIRST}. A rewiring is 
then specified by the target of the transition out of $Y$.
We prove that none of these rewirings is language-equivalent by 
providing a distinguishing word for each of them (c.f. \cref{lemma:weak-rewiring-dead}).
\subsubsection{Copying the state $Y$}\label{subsec:ov-weird-Y-det}
The previous automaton $\Aweak$ may momentarily fill an optimistic researcher with the hope of having found an 
HD Büchi automaton strictly smaller than every language-equivalent deterministic one.
Unfortunately, there are smaller deterministic automata recognising $\Lweak$.

The deterministic automaton $\Dd_{\mathtt{weak}}$ of \cref{fig:wierdYdetFIRST} manages to get rid of the nondeterministic component $\{I,\iota_a,\iota_b,\iota_c\}$ by adding
three copies of the state $Y$ ($Y_a,Y_b$ and $Y_c$), thereby using $2$ extra states to remove the $4$ states of the nondeterministic component.

This determinisation strategy works here because the number of $Y$-copies ($2$ in this case) is smaller than the number of non-\good states ($4$ in this case).
To defeat it, it suffices to enlarge our automaton with many ``blocks'', so that we would require to introduce $>4$ copies of the state $Y$.

\subsubsection{Determinising the nondeterministic component}\label{subsec:ov-determinising}
A third tactic remains: replacing the nondeterministic component by a deterministic one doing the
same job. For $\Abkks$ this succeeds with $2$ states in place of $3$ (\cref{fig:detForBKKS13bFIRST}), and
the same trick applies to both $\Astrong$ and $\Aweak$. To defeat this tactic, we use a nondeterministic gadget with $4$ states that cannot be simulated by a deterministic one with less than $5$ states. 
We depict this gadget and its deterministic version in \Cref{fig:beatReplaceDFAs}.
This was inspired by an example that arose in a different context~\cite[Figure~6]{HPT25}.

\begin{figure}[H]
    \centering
    {\footnotesize\thinmuskip=1.5mu 

        \begin{tikzpicture}[
    base dot/.style={
        inner sep=0pt,      
        minimum size=5pt,   
        fill=red!50!white, 
        font=\scriptsize    
    },
    cstate/.style={base dot, circle},    
    sstate/.style={base dot, rectangle}, 
    node distance=1.5cm,
    tight label/.style={auto, inner sep=0.5pt, text=black}
]
        \tikzset{every state/.style = {inner sep=-3pt,minimum size =15}}

\node[cstate, fill=red!40, red!40] (q1) at (-1,1.5) {AA};

 \node[cstate, fill=red!40, red!40] (q0) [above=0.8cm of q1] {AA};

 \node[cstate, fill=red!40, red!40] (q3) [right=0.8cm of q0] {AA};
 \node[cstate, fill=red!40, red!40] (q2) [below=0.8cm of q3] {AA};
 \path[->] (-1.6,2.8) edge (q0); 
    
    \node[cstate, fill=red!40, red!40] (r1) at (2.8,1.5) {AA};
    \node[cstate, fill=red!40, red!40] (r0) [above=0.8cm of r1] {AA};
    \node[cstate, fill=red!40, red!40] (r2) [right=0.8cm of r1] {AA};
    \node[cstate, fill=red!40, red!40] (r3) [right=0.8cm of r0] {AA};
    \node[cstate, fill=red!40, red!40] (r4) [right=0.8cm of r2] {AA}; 
    \path[->] (2.2,2.8) edge (r0); 

\path[-stealth]
(q1) edge[ thick, green!50!black, double] node [left,yshift=0cm,xshift=0cm] {$1$} (-1,0.8)
(q2) edge[ thick, blue!50!black, double] node [left,yshift=0cm,xshift=0cm] {$2$} (0.27,0.8)

(r1) edge[ thick, green!50!black, double] node [left,yshift=0cm,xshift=0cm] {$1$} (2.8,0.8)
(r2) edge[ thick, green!50!black, double] node [left,yshift=0cm,xshift=0cm] {$1$} (3.9,0.8)
(r2) edge[ thick, blue!50!black, double] node [right,yshift=0cm,xshift=0cm] {$2$} (4.2,0.8)
(r4) edge[ thick, blue!50!black, double] node [left,yshift=0cm,xshift=0cm] {$2$} (5.5,0.8);

    \path[-stealth]
    (q0) edge [loop above] node [above] {$a,b,1,2$} (q0)
    (q0) edge [bend left = 8] node [above] {$a$} (q3)
    (q1) edge [] node [below] {$a$} (q3)
    (q3) edge [bend left = 8] node [below] {$1,2$} (q0)
    (q0) edge [bend left = 8] node [right] {$c$} (q1)
    (q1) edge [bend left = 8] node [left] {$b,a,2$} (q0)
    (q3) edge [bend right = 8] node [left] {$b$} (q2)
    (q2) edge [bend right = 8] node [right] {$c,a$} (q3)
    (q2) edge [bend right = 90, looseness = 3] node [above] {$1$} (q0)  
    (q3) edge [loop above] node [right] {$a,c$} (q3)
    (q1) edge [loop left] node [left] {$c$} (q1)
    (q2) edge [loop right] node [right] {$b$} (q2)

    (r0) edge [loop above] node [above] {$b,1,2$} (r0)
    (r0) edge [bend left = 8] node [above] {$a$} (r3)
    (r3) edge [bend left = 8] node [below] {$1,2$} (r0)
    (r1) edge  node [below, left] {$a$} (r3)
    (r0) edge [bend left = 8] node [tight label,right] {$c$} (r1)
    (r1) edge [bend left = 8] node [tight label,left] {$b,2$} (r0)    
    (r3) edge [bend left = 8] node [tight label,right] {$c$} (r2)
    (r2) edge [bend left = 8] node [tight label,left] {$a$} (r3)
    (r3) edge [bend right = 8] node [tight label,below=3pt,left] {$b$} (r4)
    (r4) edge [bend right = 8] node [tight label,above=3pt, right] {$a$} (r3)
    (r2) edge [bend right = 8] node [tight label,below=0.5pt] {$b$} (r4)
    (r4) edge [bend right = 8] node [tight label,above=0.5pt] {$c$} (r2)
    (r3) edge [in=30,out=60,loop]  node [above] {$a$} (r3)
    (r1) edge [loop left] node [left] {$c$} (r1)
    (r2) edge [loop left] node [left] {$c$} (r2)
    (r4) edge [loop right] node [right] {$b$} (r4)
    
    (r4) edge [bend right = 85, looseness = 1.6] node [above] {$1$} (r0);

    \end{tikzpicture}
}
    \caption{Two automata distinguishing the languages $L_1'$ and $L_2'$.}
    \label{fig:beatReplaceDFAs}
\end{figure}

To understand the functioning of this gadget, let $\Sigma=\{a,b,c,1,2\}$ and define the languages $L'_1 = \Sigma^{*}c1$, and $L'_2 = \Sigma^* a  (a+b+c)^* b 2$. The automata in \cref{fig:beatReplaceDFAs} can recognise whether a word belongs to $L_1'$ (building a run that ends by a green arrow) or to $L_2'$ (building a run that ends by a blue arrow). The automaton on the left uses nondeterminism for it: if a word belongs to $L_1'$, it chooses to stay on the states on the left, while if the word belongs to $L_2'$, it chooses to switch to the top rigth state upon reading $a$. The righthand automaton manages to distinguish these languages in a deterministic way by using an extra state.
We refer to \cref{fig:HDBuchinotMRAgain} (page~\pageref{fig:HDBuchinotMRAgain}) for a full description of an HD automaton $\Areplace$ that uses this nondeterministic gadget.


\subsection{The main automaton}\label{subsec:ov-main}
\input{figures_sec2and3/bacopycopy}
We now describe the automaton $\Amain$, depicted in \cref{fig:wholeautomatonmainpaper}, which combines the three
ingredients above: the skeleton of $\Aweak$ with its reset letters rules out rewirings, using $6$
internal blocks rules out copying $Y$, and having the gadget of \Cref{fig:beatReplaceDFAs} as the nondeterministic component rules
out determinising the nondeterministic component.

\subparagraph{The language.}
We fix the alphabet $\Sigma = \{a,b,c,1,2,3,4,5,6,\reset{1},\dots,\reset{6},\purpley\},
$ and write 
$E = \{1,2,3\}$, and $F=\{4,5,6\}$. 
The role played by $L_a,L_b,L_c$ for $\Astrong$ is now played by six languages of finite words
$\langBlock{1},\dots,\langBlock{6}$.
The final language is then given by:
\[ 
\Lmain = \Big(\Sigma^* \cdot \!\!\bigcup_{\substack{\alpha\neq \beta \\ 1\leq \alpha,\beta \leq 6}}\!\! \langInfix{\alpha}{\beta} \cdot \Sigma^* \cdot \purpley \Big)^\omega,
\;\; \text{ with } \;\;
\langInfix{\alpha}{\beta} = (\langBlock{\alpha}+\reset{\alpha})\langBlock{\beta}\langBlock{\beta}
\ \text{ for } \alpha\neq\beta \in [6].\]
\begin{wrapstuff}[l, type=figure, width=5.5cm, top=1]
\centering
\captionsetup{font=footnotesize,skip=2pt}
\scalebox{0.8}{
\begin{tikzpicture}[
        base dotred/.style={
        inner sep=0pt,      
        minimum size=5pt,   
        fill=red!50!white, 
        font=\scriptsize    
    },
    base dot/.style={
        inner sep=0pt,      
        minimum size=5pt,   
        fill=blue!60!black, 
        font=\scriptsize    
    },
    cstate/.style={base dotred, circle},  
    node distance=1.5cm,
    tight label/.style={auto, inner sep=0.5pt, text=black},
    sstate/.style={base dot, rectangle}, 
    house peak H/.store in=\peakH,
    house peak H=-1.6cm,
    house eaves H/.store in=\eavesH, 
    house eaves H=0.15cm,
    house base H/.store in=\baseH,   
    house base H=-0.7cm,
    house width/.store in=\houseW,   
    house width=0.5cm,
    group sep x/.store in=\gSepX, 
    group sep x=2.6cm, 
    group sep y/.store in=\gSepY, 
    group sep y=3.2cm, 
        row top y/.store in=\rowTopY,
    row bot y/.store in=\rowBotY,
    top x off/.store in=\topX,
    bot x off/.store in=\botX,
    row top y=0.6cm,
    row bot y=-0.9cm,
    top x off=0.6cm,
    bot x off=1.2cm
]
    \filldraw [fill=violet!10, draw=violet!30] (-2.5,3.2) rectangle (2.8,-0.5);

    \filldraw [fill=violet!30, draw=violet!50] (-2.2,0.4) rectangle (0.3,-0.3);
    
\foreach \i in {1} {
    \coordinate (center\i) at ({(\i-1)*\gSepX}, 0);

    \node[cstate, label=above:$p$] (p\i) at (-1,2) {};
    \node[cstate] (q\i) at (1,2)  {};

    \node[cstate] (t\i) at (-1.5, 0) {};
    \node[cstate] (s\i) at (0,0)      {};
    \node[cstate] (r\i) at (2,0)  {};
}
    \node[sstate, label=below:$\color{blue!60!black}\ell_1$] (accept1) at (-2.2,-1.3) {};
    \node[sstate, label=below:$\color{blue!60!black}\ell_2$] (accept2) at (-1.2,-1.3) {};
    \node[sstate, label=below:$\color{blue!60!black}\ell_3$] (accept3) at (-0.2,-1.3) {};
    \node[sstate, label=below:$\color{blue!60!black}\ell_4$] (accept4) at (0.8,-1.3) {};
    \node[sstate, label=below:$\color{blue!60!black}\ell_5$] (accept5) at (1.8,-1.3) {};
    \node[sstate, label=below:$\color{blue!60!black}\ell_6$] (accept6) at (2.8,-1.3) {};




 \foreach \i in {1} {
  \path[->,>=stealth, thick, shorten >=1pt]
    (p\i) edge [loop left] node [above,tight label] {$b,{\color{purple}E,F}$} (p\i)
    (q\i) edge [loop right] node [tight label] {$a$} (q\i)
    
    (r\i) edge [loop right] node [tight label] {$b$} (r\i)
    (t\i) edge [loop left] node [tight label, yshift=0cm, xshift=-0.05cm] {$c$} (t\i)

    (p\i) edge  node [tight label,below] {$a$} (q\i)
    (q\i) edge [purple,bend right] node [tight label,above] {${\color{purple}{E,F}}$} (p\i)

    (q\i) edge [bend left = 10] node [tight label] {$b$} (r\i)   
    (r\i) edge [bend left = 10] node [tight label,yshift=0.1cm, xshift=-.05cm] {$a$} (q\i)
     
    (r\i) edge [bend left = 10] node [tight label] {$c$} (s\i)  
    (s\i) edge [bend left = 10] node [tight label] {$b$} (r\i)       
    (p\i) edge [bend right =10] node [ tight label, left] {$c$} (t\i)
    (t\i) edge [bend right=10] node [tight label, right] {$b$} (p\i)

    (q\i) edge [bend left = 10] node [tight label, yshift=0cm] {$c$} (s\i)  
    (s\i) edge [bend left = 10] node [tight label,yshift=-0.15cm] {$a$} (q\i)   
    (t\i) edge  node [tight label, left, yshift=-0.20cm, xshift=-.35cm] {$a$} (q\i)
    
    (t\i) edge [bend left=40, purple] node [tight label, left] {${\color{purple}F}$} (p\i)

    (r\i) edge [purple,out=60,in=60, looseness=1.8] node [tight label, above,xshift=0.1cm] {{\color{purple}$E$}} (p\i)  
    (s\i) edge [purple] node [tight label,xshift=0.1cm,yshift=0.6cm] {$\color{purple}F$} (p\i)   




    (s\i) edge [loop left] node [tight label,left,xshift=-0.1cm,yshift=0cm] {$c$} (s\i);
    ;

    \path[-stealth] 
    (-1.6,-0.3) edge [thick, green!50!black, out=200,in=90,looseness=1] node [left] {$1$} (accept1) 
    (-1.2,-0.3) edge [thick, green!50!black, out=-90,in=90,looseness=1] node [left] {$2$} (accept2) 
    (-0.8,-0.3) edge [thick, green!50!black, out=-60,in=90,looseness=1] node [right, xshift= 0.10cm] {$3$} (accept3) 

    (r\i) edge [thick, green!50!black, out=-120,in=90,looseness=1] node [left, xshift=-0.10cm] {$4$} (accept4) 
    (r\i) edge [thick, green!50!black, out=-90,in=90,looseness=1] node [left] {$5$} (accept5) 
    (r\i) edge [thick, green!50!black, out=-60,in=90,looseness=1] node [right] {$6$} (accept6) 
    
    (-1.2,3.2) edge [thick, out=110,in=130,looseness=3] node [xshift=0.0cm,left] {$y$} (p\i);
}
\end{tikzpicture}
}\caption{The DFA-classifier $\Aclassifier$.} 
    \label{fig:LanguageL1copy}
\end{wrapstuff}
That is, the language $\Lmain$ consists of the words containing infinitely many factors in some
$\langInfix{\alpha}{\beta}$, as well as infinitely many $\purpley$s. 


The six languages $\langBlock{1},\dots,\langBlock{6}$ are described by the DFA-classifier $\Aclassifier$ of
\cref{fig:LanguageL1copy}: a word is in $\langBlock{\alpha}$ if its run over $\Aclassifier$ from $p$ ends
in the accepting state~$\sqrstate{\ell_\alpha}$.
In particular, the languages $\langBlock{\alpha}$ are pairwise disjoint, the letter $\purpley$ brings back
$\Aclassifier$ to its initial state, and no word containing a reset letter $\reset{\alpha}$ lies in any
$\langBlock{\alpha}$.
We note that this DFA mimics the gadget from \Cref{fig:beatReplaceDFAs}
with the languages $L_1, L_2$ and $L_3$ corresponding to $L_1'$ from \cref{subsec:ov-weird-Y-det}, and the languages $L_4, L_5$ and~$L_6$ playing the role of $L_2'$. 



\subparagraph{The automaton.} The automaton $\Amain$ (\cref{fig:wholeautomatonmainpaper}) consists of an initial nondeterministic component, similar to that of \cref{fig:beatReplaceDFAs} (the initial state $I$ together with $I_a,I_b,I_c$); twelve \emph{basic blocks}  (the groups of $5$ \emph{circular} states $\circstate{p_i},\circstate{q_i},\circstate{r_i},\circstate{s_i},\circstate{t_i}$ for $i\in[6]$, or $5$
\emph{square} states $\sqrstate{p_i},\sqrstate{q_i},\sqrstate{r_i},\sqrstate{s_i},\sqrstate{t_i}$ for
$i\in[6]$) each of them being a copy of the $5$-state gadget on the right of \cref{fig:beatReplaceDFAs};
and the state $\sqrstate{Y}$.
This gives $4 + 2\times 5\times 6 + 1 = 65$ states in total.
Restricted to non-\significant transitions,
$\Amain$ has exactly three SCCs: the four initial states, the $60$ block states, and $\sqrstate{Y}$.

To visualise the automaton, we split the transitions across two figures and use a few graphical shorthands similar to the earlier section.
The transitions of the automaton over the letters $a,b,c,$ and $\purpley$ are described in \cref{fig:transitionsoverabcyCOPY}, whereas the transitions of the automaton over the rest of the letters $1,\dots,6$ and $\reset{1},\dots,\reset{6}$ are depicted in \cref{fig:transitionsoverEFrCOPY}. 
We use variables to represent sets of transitions,
therefore, a single arrow like $\xrightarrow{f}\circstate{p_f}$ in fact represents the transitions $\xrightarrow{4} \circstate{p_4}$, $\xrightarrow{5} \circstate{p_5}$, and $\xrightarrow{6} \circstate{p_6}$.
Finally, the rectangles around states represent grouped sources. An arrow originating from a box around a set of states implies a transition from every state within that box.

To accept a word in $\Lmain$, the automaton reads it until it sees an infix in
 $\langBlock{\alpha}$ or $\reset{\alpha}$ followed by $\langBlock{\beta}\langBlock{\beta}$ with
$\alpha\neq\beta$, at which point it moves to $\sqrstate{Y}$ and waits for letter $\purpley$ to restart.


\subparagraph{Resolving the nondeterminism.} The only nondeterminism is on $I$ and $I_c$ and over the
letter $a$. It guesses whether the infix being read will end with a letter
in $E$, in which case the run should stay in $I, I_c$, or with a letter in $F$, in which case it should
move to $I_a$. An HD-resolver makes its first guess arbitrarily and afterwards uses the information from the latest infix in some $L_\alpha$. If $u$ is the prefix read so far and $\alpha\in[6]$ is the index of the latest factor of
$u$ lying in some $\langBlock{\alpha}$, then on reading $a$ from $I$ or $I_c$ the resolver moves to
$I$ if $\alpha\in E$, and to $I_a$ if $\alpha\in F$. Every word of $\Lmain$ contains infinitely many
factors $\langBlock{\beta}\langBlock{\beta}$; after the first factor of such a pair the
resolver's guess is correct and leaves the
top box. As this recurs after every visit to $\sqrstate{Y}$, the induced run is accepting.

We prove in \cref{thm:A65isHD} that $\Amain$ is simplified,
which shows that the automaton is history deterministic via \cref{lemma:buchi-simplified-implies-hd}
and is also used in the proof of succinctness below.

\subsection{Succinctness proof}\label{subsec:ov-succinctness}
It remains to argue that no deterministic Büchi automaton with $65$ states recognises $\Lmain$
(\cref{thm:succinctness}). Minimising deterministic Büchi automata is $\NP$-complete for both
state-based~\cite{Sch10} and transition-based~\cite{AE25} acceptance, and due to the size of our automaton, this task is not at the reach of solvers. We instead use structural results 
on coB\"uchi automata, and finally use a solver for a different $\NP$-complete problem over smaller instances.

The main observation is that a deterministic B\"uchi automaton for $\Lmain$ with $n$ states is, 
when interpreted as a coB\"uchi automaton with the same transitions, 
a deterministic coBüchi automaton for $\Lmain^c$ with
$n$ states. For coBüchi automata, Abu Radi and Kupferman~\cite{AK22} showed many structural properties of the minimal HD coB\"uchi automaton, which in particular impose constraints on the structure of any other equivalent HD coB\"uchi automaton, deterministic ones included.

Applying the complementation of Abu Radi, Kupferman and Leshkowitz~\cite{AKL24} to $\Amain$ yields an HD coBüchi automaton $\Cmain$ for $\Lmain^c$ whose
states are precisely the $61$ \good states of $\Amain$, that is, the $60$ block states and
$\sqrstate{Y}$ (\cref{lemma-sec5-cmain-iscomplementandhd}). We verify that
$\Cmain$ is the minimal HD coB\"uchi automaton for this language (\cref{lemma:cmain-minimality}).

The intuitive idea of our proof is then as follows: Let $\Dmain$ be an equivalent deterministic coB\"uchi automaton for
$\Lmain^c$. Using the structural properties from~\cite{AK22} we show that $\Dmain$ must contain at least $61$ states corresponding to the states of $\Cmain$ (\cref{lemma:Dcore-one-state-for-everythingin60}), and moreover one extra deterministic part that mimics the behaviour of the $4$ states of the non-deterministic component in $\Amain$ (\cref{cor:5states-dcore}).
To finish the proof, it then suffices to show that this extra deterministic part must have at least $5$ states, for which we use the help of the SAT-based tool DFAMiner~\cite{DLS24}.
More precisely, we let $\Theju$ be the language of infinite words containing no prefix in any $\langBlock{i}$. Our only computer-aided step, \cref{lemma:compy}, 
states that every deterministic safety automaton $\Dd$ with 
$\Theju \subseteq L(\Dd) \subseteq \safeLangA{\Cmain}{q}$, for a block
state $q$ of $\Cmain$, has at least $5$ states, which allows to conclude.

\section{Technical content}\label{sec:full-details}
This section contains full details for our constructions and proofs. It is structured in three subsections (\ref{subsec:full-examples},~\ref{subsec:full-main} and~\ref{subsec:full-succinctness}), mirroring the structure of the previous; each subsection contains full details on the content of the corresponding subsection of \Cref{sec:overview}.


\subsection{Motivating examples (full details)}\label{subsec:full-examples}
We begin by analysing why na\"ive determinisation techniques fail. We provide increasingly complex counterexamples to specific ``straw man''  determinisation techniques, which then help us build the intuition required for our main result. These examples demonstrate the pitfalls that our final construction must avoid. 
Although the definition of the 65-state automaton can be understood directly by skipping ahead to \cref{subsec:full-main}, we strongly recommend reading this section first to understand the method behind the madness of the final construction.

\subsubsection{The rewiring conjectures}

\paragraph{Strong rewiring}

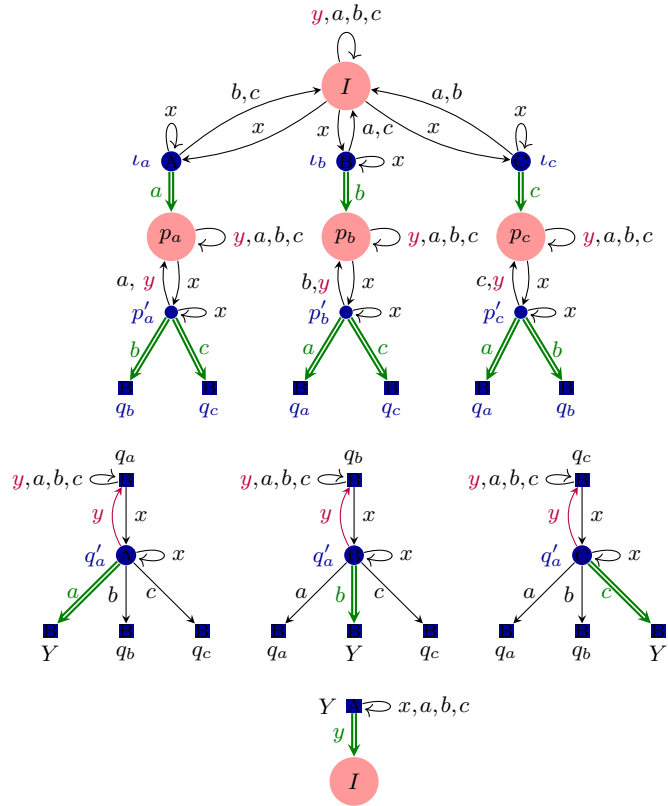
\begin{figure}[h]
\centering
{\footnotesize\thinmuskip=1.5mu 
\colorlet{sname}{blue!60!black} 
\begin{tikzpicture}[
    base dot/.style={inner sep=0pt, minimum size=5pt, fill=blue!60!black, font=\scriptsize},
    cstate/.style={base dot, circle},
    sstate/.style={base dot, rectangle},
    rstate/.style={base dot, circle, fill=red!40, draw=red!40, text=black,
                   inner sep=1pt, minimum size=14pt, font=\footnotesize},
    acc/.style={thick, green!50!black, double}      
]

\node[rstate] (I) at (0,0) {$I$};

\node[cstate,label=left:$\color{sname}\iota_a$]  (ia) at (-2.2,-1.15) {};
\node[cstate,label=left:$\color{sname}\iota_b$]  (ib) at (  0 ,-1.15) {};
\node[cstate,label=right:$\color{sname}\iota_c$] (ic) at ( 2.2,-1.15) {};

\node[rstate] (pa) at (-2.2,-2.15) {$p_a$};
\node[rstate] (pb) at (  0 ,-2.15) {$p_b$};
\node[rstate] (pc) at ( 2.2,-2.15) {$p_c$};

\node[cstate,label=left:$\color{sname}p_a'$] (pa2) at (-2.2,-3.10) {};
\node[cstate,label=left:$\color{sname}p_b'$] (pb2) at (  0 ,-3.10) {};
\node[cstate,label=left:$\color{sname}p_c'$] (pc2) at ( 2.2,-3.10) {};

\node[sstate,label=below:$\color{sname}q_b$] (lab) at (-2.70,-4.00) {};
\node[sstate,label=below:$\color{sname}q_c$] (lac) at (-1.70,-4.00) {};
\node[sstate,label=below:$\color{sname}q_a$] (lba) at (-0.55,-4.00) {};
\node[sstate,label=below:$\color{sname}q_c$] (lbc) at ( 0.55,-4.00) {};
\node[sstate,label=below:$\color{sname}q_a$] (lca) at ( 1.70,-4.00) {};
\node[sstate,label=below:$\color{sname}q_b$] (lcb) at ( 2.70,-4.00) {};

\path[-stealth]
    (I)  edge [loop above]    node [above] {$\purpley,a,b,c$} (I)
    (I)  edge [bend left=15]  node [above] {$x$} (ia)
    (ia) edge [bend left=15]  node [above,yshift=0.1cm] {$b,c,\purpley$} (I)
    (I)  edge [bend right=15] node [left]  {$x$} (ib)
    (ib) edge [bend right=15] node [right] {$a,c,$}
                              node [right,yshift=-0.25cm] {$\purpley$} (I)
    (I)  edge [bend right=15] node [above] {$x$} (ic)
    (ic) edge [bend right=15] node [above] {$a,b,\purpley$} (I)

    (ia) edge [acc] node [left]  {$a$} (pa)
    (ib) edge [acc] node [right] {$b$} (pb)
    (ic) edge [acc] node [right] {$c$} (pc)

    (ia) edge [loop above] node [above] {$x$} (ia)
    (ib) edge [loop right] node [right] {$x$} (ib)
    (ic) edge [loop above] node [above] {$x$} (ic)

    (pa) edge [bend left=15] node [right] {$x$} (pa2)
    (pb) edge [bend left=15] node [right] {$x$} (pb2)
    (pc) edge [bend left=15] node [right] {$x$} (pc2)

    (pa) edge [loop right] node [right] {$\purpley,a,b,c$} (pa)
    (pb) edge [loop right] node [right] {$\purpley,a,b,c$} (pb)
    (pc) edge [loop right] node [right] {$\purpley,a,b,c$} (pc)

    (pa2) edge [bend left=15] node [left] {$a,\purpley$} (pa)
    (pb2) edge [bend left=15] node [left] {$b,\purpley$} (pb)
    (pc2) edge [bend left=15] node [left] {$c,\purpley$} (pc)

    (pa2) edge [loop right] node [right] {$x$} (pa2)
    (pb2) edge [loop right] node [right] {$x$} (pb2)
    (pc2) edge [loop right] node [right] {$x$} (pc2)

    (pa2) edge [acc] node [left,pos=.6]  {$b$} (lab)
    (pa2) edge [acc] node [right,pos=.6] {$c$} (lac)
    (pb2) edge [acc] node [left,pos=.6]  {$a$} (lba)
    (pb2) edge [acc] node [right,pos=.6] {$c$} (lbc)
    (pc2) edge [acc] node [left,pos=.6]  {$a$} (lca)
    (pc2) edge [acc] node [right,pos=.6] {$b$} (lcb)
    ;

\node[sstate,label=left:$\color{sname}q_a$] (qa) at (5.20,0) {};
\node[sstate,label=left:$\color{sname}q_b$] (qb) at (7.30,0) {};
\node[sstate,label=left:$\color{sname}q_c$] (qc) at (9.40,0) {};

\node[cstate,label=left:$\color{sname}q_a'$] (qa2) at (5.20,-1.10) {};
\node[cstate,label=left:$\color{sname}q_b'$] (qb2) at (7.30,-1.10) {};
\node[cstate,label=left:$\color{sname}q_c'$] (qc2) at (9.40,-1.10) {};

\node[sstate,label=below:$Y$]                (aY) at (4.58,-2.10) {};
\node[sstate,label=below:$\color{sname}q_b$] (ab) at (5.20,-2.10) {};
\node[sstate,label=below:$\color{sname}q_c$] (ac) at (5.82,-2.10) {};

\node[sstate,label=below:$\color{sname}q_a$] (ba) at (6.68,-2.10) {};
\node[sstate,label=below:$Y$]                (bY) at (7.30,-2.10) {};
\node[sstate,label=below:$\color{sname}q_c$] (bc) at (7.92,-2.10) {};

\node[sstate,label=below:$\color{sname}q_a$] (ca) at (8.78,-2.10) {};
\node[sstate,label=below:$\color{sname}q_b$] (cb) at (9.40,-2.10) {};
\node[sstate,label=below:$Y$]                (cY) at (10.02,-2.10) {};

\node[sstate,label=left:$Y$] (Y) at (7.30,-3.15) {};
\node[rstate] (I2) at (7.30,-4.15) {$I$};

\path[-stealth]
    (qa) edge [loop above] node [above] {$\purpley,a,b,c$} (qa)
    (qb) edge [loop above] node [above] {$\purpley,a,b,c$} (qb)
    (qc) edge [loop above] node [above] {$\purpley,a,b,c$} (qc)

    (qa) edge node [right] {$x$} (qa2)
    (qb) edge node [right] {$x$} (qb2)
    (qc) edge node [right] {$x$} (qc2)

    (qa2) edge [purple,bend left] node [left] {$y$} (qa)
    (qb2) edge [purple,bend left] node [left] {$y$} (qb)
    (qc2) edge [purple,bend left] node [left] {$y$} (qc)

    (qa2) edge [loop right] node [right] {$x$} (qa2)
    (qb2) edge [loop right] node [right] {$x$} (qb2)
    (qc2) edge [loop right] node [right] {$x$} (qc2)

    (qa2) edge [acc] node [left,pos=.55]  {$a$} (aY)
    (qa2) edge       node [right,pos=.68] {$b$} (ab)
    (qa2) edge       node [right,pos=.55] {$c$} (ac)

    (qb2) edge       node [left,pos=.55]  {$a$} (ba)
    (qb2) edge [acc] node [right,pos=.68] {$b$} (bY)
    (qb2) edge       node [right,pos=.55] {$c$} (bc)

    (qc2) edge       node [left,pos=.55]  {$a$} (ca)
    (qc2) edge       node [right,pos=.68] {$b$} (cb)
    (qc2) edge [acc] node [right,pos=.55] {$c$} (cY)

    (Y) edge [loop right] node [right] {$x,a,b,c$} (Y)
    (Y) edge [acc]        node [left]  {$y$} (I2)
    ;
\end{tikzpicture}
}
\caption{HD B\"uchi automaton $\Astrong$ for which the strong-rewiring conjecture is false.
To improve readability, we represent the automaton ``by chunks''; note that some states
appear multiple times (in total, the automaton has $17$ states).}
\label{fig:strong-rewiring-counterexample}
\end{figure}
  To disprove the strong\=/rewiring conjecture, we will consider the automaton $\Astrong$ in 
\cref{fig:strong-rewiring-counterexample}, over the alphabet $\Sigma = \{a,b,c,x,\purpley\}$.

We again describe the language recognised by $\Astrong$ for ease here. Towards this, we define, for each $\alpha \in\{a,b,c\}$, the language $L_{\alpha}$ as the language recognised by the DFA in \cref{fig:sec3-DFAforLemma3point3}.

\begin{figure}[H]
\centering

\begin{tikzpicture}[
    base dot/.style={
        inner sep=0pt,      
        minimum size=5pt,   
        fill=blue!60!black, 
        font=\scriptsize    
    },
    cstate/.style={base dot, circle, fill=red!40, minimum size = 15},    
    sstate/.style={base dot, rectangle}, 
    node distance=1.5cm,
    tight label/.style={auto, inner sep=0.5pt, font=\scriptsize, text=black}   
]
    \tikzset{every state/.style = {inner sep=-3pt,minimum size =15}}

    \node[cstate] (Ia) at (0,0) {$\iota$};
    \node[cstate] (pa) at (1.5,0) {$p$};
    \node[state, accepting, fill=red!40] (fa) at (3,0) {$q$};
    
    \path[->] (-0.5,0) edge (Ia);

    \path[-stealth]
        (Ia) edge node [above] {$x$} (pa)
        (Ia) edge [loop above] node [above]{$a,b,c,\purpley$} (Ia)
        (pa) edge [loop above] node [above] {$x$} (pa)
        (pa) edge node [above] {$\alpha$} (fa)
        (pa) edge [bend left] node [below] {$\purpley$} (Ia)
        ;
\end{tikzpicture}
 \caption{The DFA $D_{\alpha}$ for each $\alpha$ in $\{a,b,c\}$. Note that the DFA $D_{\alpha}$ is not complete.}
\label{fig:sec3-DFAforLemma3point3}
\end{figure}

Equivalently,  $L_{\alpha} =  ((a+b+c+\purpley)^* (\varepsilon + x^+\purpley) )^*  x^+ \alpha$,
for each $\alpha \in \{a,b,c\}$. 
For each $\alpha,\beta \in \{a,b,c\}$, we define $L_{\alpha\beta\beta}$ as 
$L_{\alpha\beta\beta} = L_{\alpha} L_{\beta}L_{\beta}.$

Then, the language recognised by $\Astrong$ in  \cref{fig:strong-rewiring-counterexample} consists of words that contain infinitely often infixes in $L_{\alpha\beta\beta}$, for  some $\alpha\neq \beta$, and also contain infinitely many $y$s. Formally, the language $L(\Astrong)$ is the following:
$$\Lstrong =(\Sigma^{*} \cdot \bigcup_{\alpha\neq\beta\in \{a,b,c\}} L_{\alpha\beta\beta}\, \cdot  \Sigma^{*}\cdot \purpley)^{\omega}.$$

To intuitively see that this automaton indeed recognises the language $\Lstrong$, note that whenever an infix $L_\alpha L_\beta L_\beta$ is read from $I$, with $\alpha \neq \beta$, the automaton can produce a run
$$ I \xrightarrow{ L_\alpha} p_\alpha \xrightarrow{ L_\beta} q_\beta  \xrightarrow{ L_\beta} Y.$$
From $Y$, the automaton takes a significant transition and moves to $I$ upon reading letter~$\purpley$. 

\begin{remark}
We remark that $\Astrong$ has some redundancy; one can construct a deterministic automaton that recognises $\Astrong$ by merging the states $\iota_a,\iota_b,\iota_c$ into a single state $\iota$, such that transition from $\iota$ on the letter $a$ (resp.\ $b$ or $c$) is to the state $p_a$ (resp.\ $p_b$ or $p_c$). This redundancy exists to disprove the rewiring conjectures; we will discuss  similar determinisation heuristics for HD Büchi automata in detail later in \cref{subsec:determinising-nondeterministic-comp}. We also note that since every accepting run in $\Astrong$ must take the significant transition $Y\xRightarrow{y}I$ infinitely often, having this be the only \significant transition will not change the language. The additional significant transitions in $\Astrong$ are nevertheless present to ensure that $\Astrong$ is normal, which is necessary for $\Astrong$ to be simplified.
\end{remark}

Although not necessary for establishing the succinctness of our 65-state HD Büchi automaton (\cref{thm:main}), we provide full proofs of the next result, as this is a similar, but simpler, version of the proof of correctness for our main automaton in~\cref{subsec:full-main}.

\begin{lemma}\label{lemma:astrong-simplified}
The automaton $\Astrong$ recognises the language $\Lstrong$. Moreover, $\Astrong$ is history deterministic and simplified.   
\end{lemma}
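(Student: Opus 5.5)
The plan is to prove three things in order: $L(\Astrong)=\Lstrong$, that $\Astrong$ is simplified, and then history determinism as an immediate consequence of \cref{lemma:buchi-simplified-implies-hd}. I would not build an HD resolver directly.

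\textbf{Language equality.} The inclusion $\Lstrong\subseteq L(\Astrong)$ will in fact follow from simplification, since any accepting run witnesses membership in $\Lstrong$. So the real work for this direction is the reverse inclusion, plus a direct check of the states' languages.

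First I would record the behaviour of the deterministic pieces:
\begin{itemize}
\item from $p_\alpha$, a word reaches $q_\beta$ (for $\beta\neq\alpha$) via a \significant transition exactly when its first factor lies in $L_\beta$;
\item from $q_\beta$, the run reaches $Y$ exactly on reading a word of $L_\beta$;
\item when an $x^+$-block is not followed by the expected letter, the run resets to $p_\alpha$ (resp.\ $q_\beta$) or jumps to another $q_\gamma$;
\item from $I$, the run can reach $p_\alpha$ exactly after a factor in $L_\alpha$ (with a correct guess).
\end{itemize}
These pieces just track the DFA $D_\alpha$ of \cref{fig:sec3-DFAforLemma3point3}.

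For $L(\Astrong)\subseteq\Lstrong$: any accepting run visits $Y\xRightarrow{\purpley}I$ infinitely often. The only way to reach $Y$ is through some $q_\beta$ after a word in $L_\beta$. Tracing back, I expect that each such segment contains a factor in some $L_{\alpha\beta\beta}$ with $\alpha\neq\beta$, up to the resets between $q$'s. The reset case needs care: a jump $q_\beta\to q_\gamma$ on a letter $\gamma$ after $x^+$ itself completes a factor in $L_\gamma$, which again gives the $\alpha\neq\beta$ pattern.

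With these facts, I would compute the languages of all states explicitly:
\begin{itemize}
\item $I$, $\iota_\alpha$ and $Y$ all have language $\Lstrong$, except that $Y$'s language is $\Sigma^*\purpley\Lstrong$;
\item $p_\alpha$ accepts words beginning with the appropriate suffix pattern;
\item $q_\beta$ likewise.
\end{itemize}

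\textbf{Simplification.} There are three conditions to check.
\begin{itemize}
\item \emph{Normality.} I check that the non-\significant transitions form the SCCs $\{I,\iota_a,\iota_b,\iota_c\}$, $\{p_\alpha,p'_\alpha\}$, $\{q_\beta,q'_\beta\}$ and $\{Y\}$. Every transition between these SCCs is indeed \significant, so $\Astrong$ is normal.
\item \emph{Semantic determinism.} This follows from the explicit state languages. The nondeterminism is only at $I$ on $x$. The three targets $\iota_a,\iota_b,\iota_c$ must be shown language-equivalent to $I$: any word from $\iota_\alpha$ can be re-routed, since $\Lstrong$ is prefix-independent up to the pending $\purpley$.
\item \emph{Reach-covering.} The \good states are all states other than $I,\iota_a,\iota_b,\iota_c$. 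For the non-\good states I must find a language-equivalent \good state that they simulate in $\reach(\Astrong)$.
\end{itemize}

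\textbf{Main obstacle.} Reach-covering is the step I expect to be hardest. The natural candidate is $Y$, but $L(Y)\neq\Lstrong$. Instead I would look for some $p_\alpha$ whose language equals $\Lstrong$ under the block structure; this needs verification and may fail. If it fails, I would recheck the state languages, because $p_\alpha$ carries a pending $\alpha$-factor.

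I would then prove simulation by giving Eve's strategy in $\reach(\Astrong)$ from $I$. She mirrors Adam's run from the \good state, using the resolver "alternate guesses among $\iota_a,\iota_b,\iota_c$, or follow the last completed $L_\alpha$". She must reach a \significant transition whenever Adam's deterministic run does. This is the delicate case analysis: the strategy has to use the history of the last completed block, exactly as in the overview.
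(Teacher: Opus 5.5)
Your outline (prove the language, prove that $\Astrong$ is simplified, get HD from \cref{lemma:buchi-simplified-implies-hd}) is the paper's. However, the two substantive steps are not carried out, and some of what you claim about them is wrong.

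\textbf{Language and semantic determinism.} The inclusion $\Lstrong\subseteq L(\Astrong)$ does not ``follow from simplification''. Being simplified is a structural property of $\Astrong$, and it cannot tell you that the automaton accepts the words of an externally given language. Your remark that ``any accepting run witnesses membership in $\Lstrong$'' concerns the opposite inclusion; for this one you must construct accepting runs.

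Your table of state languages is also wrong: all $17$ states recognise $\Lstrong$. Since $\Lstrong$ is prefix-independent, $\Sigma^*\purpley\Lstrong=\Lstrong$, so $Y$ is no exception, and $p_\alpha$, $q_\alpha$ carry no pending obligation.

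The paper's route is to show $\Lstrong\subseteq L(\reach(\Astrong),q)$ for every $q$, and then iterate using prefix-independence. After any $\purpley$, a run is in $I$, in some $p_\alpha$, or in some $q_\alpha$, and each case is easy:
\begin{itemize}
\item From $I$, guess $\iota_\alpha$ when entering a block $x^+\alpha$.
\item From $p_\alpha$, the deterministic run hits a significant transition at the first factor $x\beta$ with $\beta\neq\alpha$.
\item From $q_\alpha$, the run follows the unique factorisation into $(L_a+L_b+L_c)^\omega$, moving to $q_{\alpha_i}$ after the $i$-th block. It is significant as soon as two consecutive indices coincide.
\end{itemize}
Together with your $Y\xRightarrow{\purpley}I$ argument for the converse, all states are equivalent and SD is immediate.

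Note that your SD sketch, which only inspects the choice at $I$ on $x$, would not suffice under your own table. SD also constrains deterministic transitions out of distinct equivalent states.

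A smaller point: the $q$-states form a single non-significant SCC, since $q'_\alpha\xrightarrow{\beta}q_\beta$ is non-significant for $\beta\neq\alpha$. Normality still holds, but your check as written would flag these transitions.

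\textbf{Reach-covering.} This is the heart of the lemma, you leave it open, and both candidates you consider fail.
\begin{itemize}
\item $I$ does not simulate $Y$ in $\reach(\Astrong)$. Adam plays $\purpley$ and reaches the sink while $I\xrightarrow{\purpley}I$, then plays $\purpley^\omega$.
\item $I$ does not simulate $p_\alpha$ either. Adam plays $x$, so Eve must commit to some $\iota_\beta$. Adam then plays $\gamma\in\{a,b,c\}\setminus\{\alpha,\beta\}$, which is significant from $p'_\alpha$ but sends Eve back to $I$, and follows with $\purpley^\omega$.
\end{itemize}

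The missing idea is to cover $I$ by $q_a$ and $\iota_\alpha$ by $q'_\alpha$. From $q_\beta$ the only way to see a significant transition is a block $x^+\beta$. Eve sees Adam's token, so no alternating or history-based resolver is needed: she answers $x$ at $(q_\beta,I)$ by moving to $\iota_\beta$. This maintains the invariant ``Eve is in $I$ iff Adam is in some $q_\gamma$, and Eve is in $\iota_\gamma$ iff Adam is in $q'_\gamma$''. Hence whenever Adam takes $q'_\gamma\xRightarrow{\gamma}Y$, Eve takes $\iota_\gamma\xRightarrow{\gamma}p_\gamma$. Language-equivalence of these pairs is free once all states recognise $\Lstrong$.
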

\begin{proof}
To show $\Astrong$ is simplified and HD, we need to show that $\Astrong$ is semantically deterministic and has reach-covering (which implies HDness by Lemma~\ref{lemma:buchi-simplified-implies-hd}). We start by showing that $\Astrong$ is SD and recognises $\Lstrong$ in \cref{subclaim-sd-astrong}. We will then show that $\Astrong$ has reach-covering in \cref{claim:astrong-RC}, which will prove the lemma.
We start with a remark about the languages $L_\alpha$.
\begin{claim}\label{subclaim-sd-Lalpha}
Let $u\in \Sigma^*$ ending by $x\alpha$, with $\alpha \in \{a,b,c\}$. Then, $u\in L_\alpha$ if and only if it does not contain any other factor $x\beta$, for $\beta \in \{a,b,c\}$.
In particular, an infinite word $w\in \Sigma^\omega$ containing infinitely many factors in $x(a+b+c)$ factorises in a unique way in $(L_a + L_b + L_c)^\omega$.
\end{claim}

\begin{claim}\label{subclaim-sd-astrong}
$L(\Astrong) = \Lstrong$ and $\Astrong$ is semantically deterministic.
\end{claim}
\begin{claimproof}
First, we observe that $\Lstrong$ is \emph{prefix-independent}; that is, for every letter $\sigma\in \Sigma$, $\sigma^{-1}\Lstrong=\Lstrong$. This follows from the definition of $\Lstrong$. 

We will next prove that for all states~$q$ in~$\Astrong$, we have that $\Lstrong \subseteq L(\Astrong,q)$.
Towards this, let $w$ be a word in $\Lstrong$. We will build an accepting run from $q$ on $w$ inductively, by proving that  there is a run from $q$ on $w$ that contains at least one accepting transition. To show this, it suffices to show $\Lstrong\subseteq L(\reach(\Astrong),q)$ for each state $q$.
It then follows, from the prefix-independence of $\Lstrong$, and induction that there is a run from $q$ on $w$ that contains infinitely many \significant transitions, as desired.

Since $w$ contains infinitely many $\purpley$'s, an arbitrary run from $q$ eventually reaches one of the states $I,p_a,p_b,p_c,q_a,q_b,q_c$ (after reading a $y$). Then, we have the following cases.

\begin{enumerate}
    \item From $I$. Note that the word $w$ contains a prefix $u$ such that $u$ ends in a suffix $L_{\alpha}$, for some $\alpha$. That is, $w\in \Sigma^{*}(L_{a}\cup L_{b}\cup L_c) \cdot \Sigma^{\omega}$. We claim that 
    \[ L(\reach(\Astrong),I) \supseteq \Sigma^{*}(L_{a}\cup L_{b}\cup L_c)  \cdot \Sigma^{\omega}.\]
    Indeed, a word in $L_\alpha$ ends by $x\alpha$. A run from $I$ can simply stay in the states $I$ and $\iota_\alpha$, and whenever $x\alpha$ is read, it will reach a significant transition.

    \item From $p_\alpha$. Note that the word $w$ contains a prefix $u$ such that $u$ ends in a suffix $L_{\beta}$, for some $\beta\neq \alpha$. That is, $w\in \Sigma^* \cdot \bigcup_{\beta\neq\alpha}L_{\beta}\cdot \Sigma^{\omega}$. We claim that 
    \[L(\reach(\Astrong),p_{\alpha}) \supseteq \Sigma^*  \cdot \bigcup_{\beta\neq\alpha}L_{\beta}  \cdot \Sigma^{\omega}.\] 
    As before, a word in $L_\beta$ ends by $x\beta$. A (deterministic) run from $p_\alpha$ stays in the states $p_\alpha, p_\alpha'$, and reaches a significant transition whenever $x\beta$ is read.
    
    \item From $q_{\alpha}$. Since $w$ is in $\Lstrong$, by \cref{subclaim-sd-Lalpha}, $w$ has a unique decomposition as $w=u_1 u_2 \dots$, where each $u_i\in L_{\alpha_i}$ for some $\alpha_i\in\{a,b,c\}.$
    Note that, from $q_{\beta}$ for $\beta\in\{a,b,c\}$, upon reading a finite word $L_{\beta}$, the run takes a significant transition, or, if $\beta'\neq \beta$, upon reading a finite word $L_{\beta'}$, the run reaches $q_{\beta'}$. Therefore, the deterministic run on $w$ from $q_{\alpha}$ until a significant transition occurs is of the form:
    \[q_\alpha \xrightarrow{u_1} q_{\alpha_1} \xrightarrow{u_2} q_{\alpha_2} \dots \]
    Whenever $\alpha_i = \alpha_{i+1}$, this run reaches a significant transition. But note that since $w$ is in $\Lstrong$, $w$ contains infinitely many infixes in $L_{\beta}L_{\beta}$ for some $\beta\in\{a,b,c\}$, and thus $\alpha_i = \alpha_{i+1}$ for some $i$.
\end{enumerate}
Thus, we conclude that $\Lstrong\subseteq L(\Astrong)$.


For the other language-inclusion $L(\Astrong,q) \subseteq \Lstrong$ for every state $q$, observe that every accepting run in $\Astrong$ must take the transition $Y\xRightarrow{y} I$ infinitely often. Let $L_{\mathsf{infix}}$ be the~set of finite words on which there is a run that starts at $I$ and ends when the transition~$Y\xrightarrow{y} I$ occurs for the first time. 
Note that from $I$ to $Y$, each time that the automaton takes a significant transition, the prefix that has been read ends by $x\alpha$, for some $\alpha\in \{a,b,c\}$. Using \Cref{subclaim-sd-Lalpha} and decomposing the finite words of $L_{\mathsf{infix}}$ in such factors, we observe that: 
$$L_{\mathsf{infix}} = \Sigma^{*} \cdot \bigcup_{\alpha\neq\beta\in \{a,b,c\}} L_{\alpha\beta\beta}\, \cdot  \Sigma^{*}\cdot y.$$
Thus, $L(\Astrong,q) \subseteq L_{\mathsf{infix}}^{\omega} = \Lstrong$, as desired.
\end{claimproof}

\begin{claim}\label{claim:astrong-RC}
The automaton $\Astrong$ has reach-covering.
\end{claim}
\begin{claimproof}
Observe that every state in $\Astrong$ apart from $I,\iota_a,\iota_b,\iota_c$ is \good. We show that \begin{enumerate}
    \item $I$ simulates $q_a$ in $\reach(\Astrong)$, and
    \item $\iota_\alpha$ simulates $q'_{\alpha}$  in $\reach(\Astrong)$, for each $\alpha\in\{a,b,c\}$.
\end{enumerate}
This implies reach-covering for $\Astrong$. 

To see this, note that the only nondeterministic choice appears on state $I$, over the letter $x$. Eve has the following strategy to win the simulation-game: If Adam's token is in $q_{\alpha}$ and Adam gives the letter $x$, then Eve takes the transition on $x$ to $\iota_{\alpha}$. If Eve uses this strategy and the simulation game in $\reach(\Astrong)$ starts from the starting positions $(q_a,I)$ or $(q_\alpha',\iota_\alpha)$, then
\begin{enumerate}
    \item whenever Eve's state is in $I$, Adam's state is $q_{\alpha}$ for some $\alpha$ in $\{a,b,c\}$, and
    \item Eve's state is in $\iota_{\alpha}$ if and only if Adam's state is in $q'_{\alpha}$.
\end{enumerate} 
In particular, the only way for Adam's token to reach the accepting sink state in $\Aa_{\reach}$ is to read $\alpha$ from $q'_{\alpha}$, but then Eve's token at $\iota_{\alpha}$ in the simulation game would also reach the accepting sink state. Eve's strategy is therefore winning in the simulation games in $\reach(\Astrong)$, and it follows that $\Astrong$ has reach-covering. 
\end{claimproof}
We conclude that $\Astrong$ is simplified, and therefore, due to \cref{lemma:buchi-simplified-implies-hd}, $\Astrong$ is HD. 
\end{proof}
The following lemma then disproves the strong\=/rewiring conjecture.
\begin{lemma}\label[lemma]{lemma:no-weak-rewiring}
    Let $s$ be a \good state in $\Astrong$.
    Let $\Bb$ be the automaton obtained by replacing the transition $ Y \xRightarrow{y}I$ in $\Astrong$ by the transition $Y\xRightarrow{y} s$, with the initial state set as some state that was \good in $\Astrong$.
    Then, 
    $$L(\Bb) \neq L(\Astrong).$$
\end{lemma}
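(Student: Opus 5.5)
The plan is to reduce everything to one fixed state, $Y$, and then exhibit one explicit distinguishing word for each possible target $s$. The reduction rests on two facts. First, $\Lstrong$ is prefix-independent, as observed in the proof of \cref{subclaim-sd-astrong}. Second, $\Bb$ restricted to the states that were \good in $\Astrong$ is a deterministic automaton, and it is closed under transitions. To check closure, note that from these states the non-\significant transitions stay among \good states (this is what being \good means). The \significant transitions from them lead to $q_\alpha$'s or $Y$, except $Y\xRightarrow{y}I$, which has been replaced by $Y\xRightarrow{y}s$.

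Suppose, towards a contradiction, that $L(\Bb)=\Lstrong$. Since the initial state of $\Bb$ is \good and $\Bb$ is deterministic on these states, every state $t$ of $\Bb$ reachable by a word $u$ satisfies $L(\Bb,t)=u^{-1}\Lstrong=\Lstrong$. We next check that $Y$ is reachable from every \good state. From $p_\alpha$ or $p'_\alpha$, reading $x\beta x\beta$ with $\beta\neq\alpha$ leads to $Y$. From $q_\alpha$ or $q'_\alpha$, reading $x\alpha$ (respectively $\alpha$) leads to $Y$. Hence it suffices to find, for each \good state $s$, a word $w\notin\Lstrong$ such that the unique run of $\Bb$ from $Y$ on $w$ takes infinitely many \significant transitions. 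Note that the \good states of $\Astrong$ are exactly $Y,p_\alpha,p'_\alpha,q_\alpha,q'_\alpha$ for $\alpha\in\{a,b,c\}$.

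The case analysis on $s$ is as follows. In each case the run returns to $Y$ and then uses the new transition $Y\xRightarrow{y}s$, so it is accepting.
\begin{itemize}
\item $s=Y$: take $w=y^\omega$.
\item $s=q_\alpha$: take $w=(y\,x\alpha\,x\alpha)^\omega$. The run is $q_\alpha\xrightarrow{x\alpha}Y$, then $Y$ loops on $x\alpha$, then returns on $y$.
\item $s=q'_\alpha$: take $w=(y\,\alpha)^\omega$. The run is $q'_\alpha\xRightarrow{\alpha}Y$.
\item $s=p_\alpha$: take $w=(y\,x\beta\,x\beta)^\omega$ with $\beta\neq\alpha$. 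The run is $p_\alpha\xrightarrow{x}p'_\alpha\xRightarrow{\beta}q_\beta\xrightarrow{x\beta}Y$.
\item $s=p'_\alpha$: take $w=(y\,\beta\,x\beta)^\omega$ with $\beta\neq\alpha$. The run is $p'_\alpha\xRightarrow{\beta}q_\beta\xrightarrow{x\beta}Y$.
\end{itemize}

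It remains to show that none of these words lies in $\Lstrong$. Here I use \cref{subclaim-sd-Lalpha}. The word $y^\omega$ and the word $(y\alpha)^\omega$ contain no factor $x\gamma$, so they contain no factor in any $L_\gamma$ at all. Each of the other words contains infinitely many factors in $x(a+b+c)$, so it factorises uniquely into blocks of $L_a+L_b+L_c$, and all its factors $x\gamma$ use one and the same letter $\gamma$. Any factor in some $L_{\alpha'\beta\beta}$ with $\alpha'\neq\beta$ would need a factor $x\alpha'$ and a factor $x\beta$ with distinct letters, so no such factor exists. Hence none of these words lies in $\Lstrong$, which yields the contradiction.

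I expect the main obstacle to be this last membership argument, in particular for words such as $y\,\beta\,x\beta$ where letters precede the $x$. The point to establish carefully is that an infix lying in $L_{\alpha'}$ must end with $x\alpha'$, using the regular expression for $L_{\alpha'}$. Once that holds, the single-letter observation above rules out every factor of $L_{\alpha'\beta\beta}$ with $\alpha'\neq\beta$. The determinism and closure of $\Bb$ on the \good states is routine but should be checked transition by transition.
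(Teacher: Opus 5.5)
Your proof is correct and follows essentially the paper's route. Both split into cases on $s$ ($Y$, the $p$-states, the $q$-states) and give an explicit periodic word that $\Bb$ accepts via the redirected transition $Y\xRightarrow{y}s$ but that lies outside $\Lstrong$, because all its factors $x\gamma$ use a single letter $\gamma$ (or there are none). The only difference is that the paper prepends a word $u$ reaching $s$ and exhibits $u\cdot w\in L(\Bb)\setminus\Lstrong$ directly. Your detour through determinism of $\Bb$ on the reach-deterministic states and $L(\Bb,Y)=\Lstrong$ is therefore valid but not needed.
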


\begin{proof}
First, we observe that no matter the state $s$, it is still reachable from the initial state of $\Bb$: this is the case because $Y$ is reachable from every state in $\Bb$, and $s$ is reachable from $Y$ using the added transition $Y\xRightarrow{y}s$. We thus let $u$ be a finite word on which there is a run of $\Bb$ that reaches $s$.  

We will show that there is a word accepted by $\Bb$ that is rejected by $\Astrong$. 
We have the following three cases for the state~$s$:
    \begin{description}
    \item[Case 1.] $s = Y$. We note that $u y^{\omega}$ is accepted by $\Bb$, but is not in $\Lstrong$.
    \item[Case 2.]$s = p_{\alpha}$ or $s= p'_{\alpha}$. For every $\beta\neq \alpha$ in $\{a,b,c\}$, $\Bb$ accepts the word $u (y \cdot x \beta \cdot x \beta \cdot y)^{\omega}$, which is not in $\Lstrong$.
    \item[Case 3.]$s = q_{\alpha}$ or $q'_{\alpha}$. 
    We note that the word
    $u \cdot(x\alpha y)^{\omega}$ is accepted by $\Bb$ but is not in $\Lstrong$.\qedhere
\end{description}
\end{proof}

\cref{lemma:astrong-simplified,lemma:no-weak-rewiring} together imply that the strong-rewiring conjecture is false.

Our counterexample for the strong-rewiring conjecture, however, is not a counterexample for the weak-rewiring conjecture. We present a language-equivalent rewiring $\Dstrong$ in \cref{fig:strong-but-no-weak}.

\begin{figure}[h]
\centering
{\footnotesize\thinmuskip=1.5mu 
        \begin{tikzpicture}[
    base dot/.style={
        inner sep=0pt,      
        minimum size=5pt,   
        fill=blue!60!black, 
        font=\scriptsize    
    },
    cstate/.style={base dot, circle},    
    sstate/.style={base dot, rectangle}, 
    node distance=1.5cm,
    tight label/.style={auto, inner sep=0.5pt, font=\scriptsize, text=black}
]
        \tikzset{every state/.style = {inner sep=-3pt,minimum size =15}}

    \node[cstate, fill=red!40, red!40] (A1) at (-2.3, 2) {AAA};
    \node[cstate, fill=red!40, red!40] (B1) at (0, 2) {AAA};
    \node[cstate, fill=red!40, red!40] (C1) at (2.3,2) {AAA};
    \node (d1) at (A1) {$p_a$};
    \node (d2) at (B1) {$p_b$};
    \node (d3) at (C1) {$p_c$};
    \node[cstate,label=left:$\color{blue!60!black}p_a'$] (A2) at (-2.3,1) {};
     \node[cstate,label=left:$\color{blue!60!black}p_b'$] (B2) at (0, 1) {};
     \node[cstate,label=right:$\color{blue!60!black}p_c'$] (C2) at (2.3, 1) {};

    \node[sstate] (s1) at (-3,0) {B};
    \node[sstate] (s2) at (0,0) {B};
    \node[sstate] (s3) at (3, 0) {B};

    \node[cstate,label=left:$\color{blue!60!black}q_a'$] (m1) at (-3, -1) {A};
    \node[cstate,label=left:$\color{blue!60!black}q_b'$] (m2) at (0, -1) {B};
    \node[cstate,,label=left:$\color{blue!60!black}q_c'$] (m3) at (3,-1) {C};

    \node[cstate, fill=red!40, red!40] (final) at (0,-3.5) {AAA};
    \node (Imp) at (final) {$p_a$};
    
    \node[sstate,label=above:$q_a$] (Q1) at (-3,0) {B};
    \node[sstate,label=above:$q_b$] (Q2) at (0,0) {B};
    \node[sstate,label=above:$q_c$] (Q3) at (3,0) {B};
    
    \node[sstate,label=below:$Y$] (QB1Y) at (-4,-2) {B};
    \node[sstate,label=below:$q_b$] (QB12) at (-3, -2) {B};
    \node[sstate,label=below:$q_c$] (QB13) at (-2, -2) {B};

    \node[sstate,label=below:$q_a$] (QB21) at (-1,-2) {B};
    \node[sstate,label=left:$Y$] (QB2Y) at (0, -2.5) {B};
    \node[sstate,label=below:$q_c$] (QB23) at (1, -2) {B};
    
    \node[sstate,label=below:$q_a$] (QB31) at (2,-2) {B};
    \node[sstate,label=below:$q_b$] (QB32) at (3, -2) {B};
    \node[sstate,label=below:$Y$] (QB3Y) at (4, -2) {B};

     \path[-stealth]
         (A1) edge [purple, loop left] node [left] {$y{\color{black},a,b,c}$} (A1)
         (B1) edge [purple, loop left] node [left] {$y{\color{black},a,b,c}$} (B1)
         (C1) edge [purple, loop left] node [left] {$y{\color{black},a,b,c}$} (C1)

         (A1) edge  node [right] {$x$} (A2)
         (B1) edge  node [right] {$x$} (B2)
         (C1) edge  node [left] {$x$} (C2)

         (A2) edge [bend left=15] node [left] {$a$, $\color{purple}y$} (A1)
         (B2) edge [bend left=15] node [left] {$b$,$\color{purple}y$} (B1)
         (C2) edge [bend right=15] node [right] {$c$,$\color{purple}y$} (C1)

        (A2) edge [thick, green!50!black, double] node [above, pos=0.4, rotate = 25] {$b,c$} (B1)
        (B2) edge [thick, green!50!black, double] node [above, pos=0.4, rotate = 25] {$a,c$} (C1)

        (C2) edge [thick, green!50!black, double] node [above, pos=0.6, rotate = 10] {$a$} (s1)
        (C2) edge [thick, green!50!black, double] node [below, rotate = 20] {$b$} (s2)
        ;

        \path[-stealth]
  
         (s1) edge [loop left, purple] node [left] {$y{\color{black},a,b,c}$} (s1)
         (s1) edge  node [right] {$x$} (m1)
               
         (m1) edge [purple,bend left] node [left] {$y$} (s1)
         (m1) edge [thick, green!50!black, double] node [left] {$a$} (QB1Y)
         (m1) edge  node [left] {$b$} (QB12)
         (m1) edge  node [left] {$c$} (QB13) 

         (s2) edge [loop left, purple] node [left] {$y{\color{black},a,b,c}$} (m2)
         (s2) edge  node [right] {$x$} (m2)
         (m2) edge [purple,bend left] node [left] {$y$} (s2)
         (m2) edge  node [left] {$a$} (QB21)
         (m2) edge  node [left] {$c$} (QB23)
         (m2) edge [thick, green!50!black, double] node [left] {$b$} (QB2Y)

         (s3) edge [loop left, purple] node [left] {$y{\color{black},a,b,c}$} (m3)
         (s3) edge  node [right] {$x$} (m3)
         (m3) edge [purple, bend left] node [left] {$y$} (s3)
         (m3) edge  node [left] {$a$} (QB31)
         (m3) edge  node [left] {$b$} (QB32)
         (m3) edge [thick, green!50!black, double] node [left] {$c$} (QB3Y)

         (QB2Y) edge [loop, in=-30,out=-60,looseness=20] node [right] {$x,a,b,c$}
         (QB2Y) edge [thick, green!50!black, double] node [left] {$y$} (final)
         ;
    \end{tikzpicture}
}
\caption{A language-equivalent rewiring $\Dstrong$ for $\Astrong$. }\label{fig:strong-but-no-weak}
\end{figure}
\begin{restatable}{lemma}{strongbutnotweak}
    \label{lemma:strong-but-not-weak}
    The automaton $\Dstrong$ is a language-equivalent rewiring of $\Astrong$.
\end{restatable}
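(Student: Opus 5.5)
The plan is to verify the three defining conditions of a rewiring for $\Dstrong$ and then prove $L(\Dstrong)=\Lstrong$. Proving language equivalence is the bulk of the work.

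\emph{Structural conditions.} The states of $\Dstrong$ are exactly the \good states of $\Astrong$, namely $p_\alpha,p'_\alpha,q_\alpha,q'_\alpha$ for $\alpha\in\{a,b,c\}$ together with $Y$. Its non-significant transitions can be checked against \cref{fig:strong-rewiring-counterexample} one state at a time. Condition~3 is immediate. The proof of \cref{subclaim-sd-astrong} shows that $\Lstrong$ is prefix-independent and that $\Lstrong\subseteq L(\Astrong,q)\subseteq\Lstrong$ for every state $q$. Hence every state of $\Astrong$ recognises $\Lstrong=a^{-1}\Lstrong$, so every redirected significant transition has an admissible target. For condition~2 (the initial state $p_a$ must simulate $I$ in $\reach(\Astrong)$), I would try to build an explicit Eve strategy as in \cref{claim:astrong-RC}. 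However, a factor $xa$ read from $I$ via $\iota_a$ reaches the sink, whereas from $p_a$ it does not. I therefore expect this check to need care, possibly a different choice of initial state such as $q_a$, which does simulate $I$ by \cref{claim:astrong-RC}. I would settle this first, before attempting the language equality.

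\emph{Inclusion $L(\Dstrong)\subseteq\Lstrong$.} I would first show that every cycle of $\Dstrong$ containing a \significant transition passes through $Y\xRightarrow{y}p_a$. The significant edges leaving the $p$-states only move forward along $p_a\to p_b\to p_c\to\{q_a,q_b\}$. The $q$-states reach only other $q$-states or $Y$. The state $Y$ returns only to $p_a$. Consequently an accepting run decomposes into infinitely many segments from $p_a$ back to $p_a$. I would then analyse one such segment using the unique block factorisation of \cref{subclaim-sd-Lalpha}. Inside the $q$-part, the run enters $q_\beta$ only after a block in $L_\beta$ read from some $q'_\alpha$ with $\alpha\neq\beta$, or after the block read from $p'_c$. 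It then reaches $Y$ after a further block in $L_\beta$. This exhibits a factor in $L_{\alpha}L_\beta L_\beta$ with $\alpha\neq\beta$, followed by $\Sigma^*y$. So each segment lies in $L_{\mathsf{infix}}$, and $w\in L_{\mathsf{infix}}^\omega=\Lstrong$.

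\emph{Inclusion $\Lstrong\subseteq L(\Dstrong)$.} Since $\Dstrong$ is deterministic and $\Lstrong$ is prefix-independent, it suffices to show the following: from every state, the run on any $w\in\Lstrong$ eventually takes a \significant transition and returns to $p_a$. I would do the case analysis as in \cref{subclaim-sd-astrong}. From $p_\alpha$, the word $w$ factorises into blocks of \cref{subclaim-sd-Lalpha}. It contains infinitely many blocks with letter $\neq\alpha$, because infixes in $L_{\alpha'\beta\beta}$ with $\alpha'\neq\beta$ use two distinct letters. Hence the run climbs the $p$-chain into the $q$-part. From $q_\alpha$, two consecutive equal blocks eventually occur, which sends the run to $Y$. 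From $Y$, the next $y$ returns the run to $p_a$.

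The main obstacle I expect is the bookkeeping of block boundaries in the presence of $y$. A factor $x^+y$ resets a block, and $p'_\alpha$ and $q'_\alpha$ return to $p_\alpha$ and $q_\alpha$ on $y$. I need to make sure that the run's position always agrees with the block decomposition. Concretely, after each complete block the run sits in a $p$- or $q$-state whose index is the letter of the last block, or the chain position. I would prove this as an invariant by induction on prefixes, using \cref{subclaim-sd-Lalpha}.
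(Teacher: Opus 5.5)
Your plan works. For $\Lstrong\subseteq L(\Dstrong)$ it is the paper's argument, and the paper shortcuts your block bookkeeping. A rewiring changes only the \emph{targets} of \significant transitions, so $\reach(\Dstrong,q)$ and $\reach(\Astrong,q)$ coincide for every \good state $q$. The case analysis of \cref{subclaim-sd-astrong} therefore carries over verbatim, and determinism plus prefix-independence finish the direction.

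For $L(\Dstrong)\subseteq\Lstrong$ you take a different route: you extract one infix of some $L_{\alpha\beta\beta}$, $\alpha\neq\beta$, from each $p_a$-to-$p_a$ segment. The paper instead writes $\Lstrong=L_{\mathtt{Inf}(\neq)}\cap L_{\mathtt{Inf}(=)}\cap L_{\mathtt{Inf}(y)}$. It then only needs a change of block letter somewhere between $p_a$ and $\{q_a,q_b\}$, and a repetition somewhere between $p'_c$ and $Y$, not necessarily adjacent. That decoupling sidesteps the adjacency bookkeeping behind the second point below.

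\textbf{Condition~2.} \cref{claim:astrong-RC} shows that $I$ simulates $q_a$ in $\reach(\Astrong)$, not that $q_a$ simulates $I$. Read literally (the new initial state simulates $I$), condition~2 holds for \emph{no} \good state $s$. Adam reads $x$ and moves to $\iota_\alpha$, choosing $\alpha$ so that $x\alpha$ yields no \significant transition from $s$:
\begin{itemize}
\item $\alpha=\gamma$ if $s\in\{p_\gamma,p'_\gamma\}$;
\item any $\alpha\neq\gamma$ if $s\in\{q_\gamma,q'_\gamma\}$;
\item any $\alpha$ if $s=Y$.
\end{itemize}
He then reads $\alpha$ into the sink and plays $\purpley^\omega$ (or $a^\omega$ against $Y$).

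So condition~2 must be read in the reach-covering direction: $I$ simulates the new initial state. Under that reading $q_a$ qualifies by \cref{claim:astrong-RC}, whereas $p_a$ does not. After $x$ Eve sits in some $\iota_\alpha$, and Adam answers with a letter $\beta\notin\{a,\alpha\}$. Your argument shows that every state of $\Dstrong$ recognises $\Lstrong$, so taking $q_a$ as initial state costs nothing. The paper only says the rewiring conditions hold by definition and fixes no initial state, so your caution is justified, but your justification runs the wrong way.

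When you check condition~1, note that the figure of $\Dstrong$ omits the $x$-loops on $p'_\alpha$ and $q'_\alpha$. Condition~1 forces them, and they are needed for the language equality.

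\textbf{The segment analysis.} There is a missing case when the last $q$-group before $Y$ is entered directly from $p'_c$ on some $\beta\in\{a,b\}$. The block preceding the one that ends with this exit may carry the same letter $\beta$. For example, the run
$p_b\xrightarrow{x}p'_b\xRightarrow{a}p_c\xrightarrow{x}p'_c\xRightarrow{a}q_a\xrightarrow{x}q'_a\xRightarrow{a}Y$ reads $xa\,xa\,xa$, so the last three blocks all have letter $a$ and give no $\alpha\neq\beta$.

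You must look one block further back, into the $p_b$-part of the chain. Its last block letter is $b$ if the run looped via $p'_b\xrightarrow{b}p_b$, and otherwise it is the letter in $\{b,c\}$ on which that part was entered. This letter is available because a segment starting at $p_a$ traverses the whole chain $p_a\to p_b\to p_c$. In the remaining sub-cases of the $p'_c$ entry, the preceding block letter is $c$, or is $a\neq\beta$, and your sketch works as written. With this case added, your inclusion goes through.
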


\begin{proof}
The fact that $\Dstrong$ is a rewiring of $\Astrong$ follows from the definition.

We can prove that every word in $\Lstrong$ is also in $L(\Dstrong)$ similar to how we proved $\Astrong$ is semantically deterministic in \cref{lemma:astrong-simplified}. Indeed, for every state $q$, $\reach(\Dstrong,q)$ and $\reach(\Astrong,q)$ are structurally-isomorphic. 
For the other direction,  consider the following languages:

$$ L_{\neq} = \bigcup_{\alpha\neq\beta\in\{a,b,c\}} L_{\alpha} L_{\beta} \;\;, \;\;  L_{=} = \bigcup_{\alpha\in\{a,b,c\}} L_{\alpha} L_{\alpha}, $$
and
$$ L_{{\mathtt{Inf}}(\neq)} = (\Sigma^* \cdot L_{\neq})^{\omega} \;\;, \;\;  L_{{\mathtt{Inf}}(=)} = (\Sigma^* L_{=})^{\omega}   \;\;, \;\; L_{{\mathtt{Inf}}(y)} = (\Sigma^* y)^{\omega}.$$

Then, note that  
$$\Lstrong = L_{{\mathtt{Inf}}(\neq)} \cap  L_{{\mathtt{Inf}}(=)}  \cap L_{{\mathtt{Inf}}(y)}.$$
We show that every word $w\in L(\Dstrong)$ is indeed in each of these three languages.

For two states $s,s'$ in $\Dstrong$, we let $$L(s,s') = \{\text{Words that have a run from }s \text{ to }s'\}.$$ 

\begin{enumerate}
    \item $L(\Dstrong) \subseteq L_{{\mathtt{Inf}}(\neq)}$. This is true because every accepting  run in $\Dstrong$ must visit $p_a$ infinitely many times, as well as $q_a$ or $q_b$ infinitely many times. We observe that  $$L(p_a,q_a)\cup L(p_a,q_b) \subseteq \Sigma^* \cdot L_{\neq} \cdot \Sigma^*$$
    \item $L(\Dstrong) \subseteq L_{{\mathtt{Inf}}(=)}$. Observe that this follows because $$L(p'_c,Y)\subseteq \Sigma^{*} \cdot L_{=} \cdot \Sigma^*,$$ and every accepting run in $\Dstrong$ must visit $p'_c$ and $Y$ infinitely often.
    \item $L(\Dstrong)\subseteq L_{{\mathtt{Inf}}(y)}$. This follows because $$L(Y,p_a) = \Sigma^* \cdot y \cdot \Sigma^*,$$ and every accepting run in $\Dstrong$ must visit $p_a$ and $Y$ infinitely often.
\end{enumerate}
We conclude that $L(\Dstrong)=L(\Astrong)$.
\end{proof}

We will next build upon $\Astrong$  to disprove the weak-rewiring conjecture (\cref{conj:weak-rewiring}).

\paragraph{Weak rewiring}
We  now present the automaton $\Aweak$ shown in \cref{fig:weak-rewiring-counterexample}, which we will prove to be a counterexample to the weak-rewiring conjecture. The automaton $\Aweak$ is obtained by modifying $\Astrong$ as follows.
\begin{enumerate}
    \item The \significant transitions from $p'$s to $q$s are made non-\significant. 
    \item For these transition to remain non-significant when the automaton is made normal, we add three new \emph{reset letters} $\reset{a},\reset{b},$ and $\reset{c}$. The transitions in $\Aweak$ over these reset letters look like the following:
    \begin{enumerate}
        \item From every state $s$ in $\{I,\iota_a,\iota_b,\iota_c\}$ we have a \significant transition $s\xRightarrow{\reset{\alpha}}p_\alpha$ for each $\alpha\in \{a,b,c\}$.
        \item From state $Y$, we have non-significant self-loops on the reset letters.
        \item From every other state $s$, we have a non-significant transition $s\xrightarrow{\reset{\alpha}} p_\alpha$, for each $\alpha\in\{a,b,c\}$.
    \end{enumerate}
\end{enumerate}
\begin{figure}[h]
\centering
{\footnotesize\thinmuskip=1.5mu 
  \begin{tikzpicture}[
    base dot/.style={
        inner sep=0pt,      
        minimum size=5pt,   
        fill=blue!60!black, 
        font=\scriptsize    
    },
    cstate/.style={base dot, circle},    
    sstate/.style={base dot, rectangle}, 
    node distance=1.5cm,
    tight label/.style={auto, inner sep=0.5pt, font=\scriptsize, text=black}
]
        \tikzset{every state/.style = {inner sep=-3pt,minimum size =15}}

    \filldraw [fill=pink!10, draw=pink] (-2.6,1.25) rectangle (2.5,-1.25);

    \filldraw [fill=cyan!30, draw=cyan!30] (-4,-1.75) rectangle (4.1,-7.5);

    \node[cstate, fill=red!40, red!40] (q0) at (0,0) {AAA};
    \node (q0fake) at (q0) {$I$};
    \node[cstate,label=left:$\color{blue!60!black}\iota_a$] (A) at (-2, -1) {A};
    \node[cstate,label=left:$\color{blue!60!black}\iota_b$] (B) at (0, -1) {B};
    \node[cstate,label=right:$\color{blue!60!black}\iota_c$] (C) at (2,-1) {C};
    \node[cstate, fill=red!40, red!40] (A1) at (-2, -2) {AAA};
    \node[cstate, fill=red!40, red!40] (B1) at (0, -2) {AAA};
    \node[cstate, fill=red!40, red!40] (C1) at (2,-2) {AAA};

    \node[cstate,label=left:$\color{blue!60!black}p_a'$] (A2) at (-2,-3) {};
    \node[cstate,label=left:$\color{blue!60!black}p_b'$] (B2) at (0, -3) {};
    \node[cstate,label=left:$\color{blue!60!black}p_c'$] (C2) at (2, -3) {};

    \node[sstate,label=below:$\color{blue!60!black}q_b$] (Q12) at (-2.6,-4) {B};
    \node[sstate,label=below:$\color{blue!60!black}q_c$] (Q13) at (-1.5, -4) {B};
    \node[sstate,label=below:$\color{blue!60!black}q_a$] (Q21) at (-0.6, -4) {B};
    \node[sstate,label=below:$\color{blue!60!black}q_c$] (Q23) at (0.6,-4) {B};
    \node[sstate,label=below:$\color{blue!60!black}q_a$] (Q31) at (1.5, -4) {B};
    \node[sstate,label=below:$\color{blue!60!black}q_b$] (Q32) at (2.6, -4) {B};
    
    \node (d1) at (A1) {$p_a$};
    \node (d2) at (B1) {$p_b$};
    \node (d3) at (C1) {$p_c$};

    \path[-stealth] 
    (-2.6,-1) edge [thick, green!50!black, double, out=200,in=180,looseness=3] node [left] {$\reset{a}$} (A1)
    ( 0.35, -1.25) edge [thick, green!50!black, double, out=-75,in=45,looseness=1] node [above, right] {$\reset{b}$} (B1)
    (2.5,-1) edge [thick, green!50!black, double, out=-20,in=0,looseness=3] node [right] {$\reset{c}$} (C1)
    
    (-1.6,-1.75) edge [out=90,in=0,looseness=8] node [right] {$\reset{a}$} (A1)
    ( -0.5, -1.75) edge [out=90,in=180,looseness=7] node [xshift=0.15cm,right] {$\reset{b}$} (B1)
    (1.6,-1.75) edge [out=90,in=180,looseness=8] node [xshift=0.2cm,right] {$\reset{c}$} (C1)
    
    ;

     \path[-stealth]
        (q0) edge [purple,loop above] node [above] {$y{\color{black},a,b,c}$} (q0)
         (q0) edge [bend left=15] node [above] {$x$} (A)
         (A) edge [bend left=15] node [above] {$b,c, \color{purple}y$} (q0)
         (q0) edge [bend right=15] node [left] {$x$} (B)
         (B) edge [bend right=15] node [right,xshift=-0.1cm] {$a,c$,} 
         node [right,yshift=-0.25cm] {$\color{purple}y$}
         (q0)
         (q0) edge [bend right=15] node [above] {$x$} (C)
         (C) edge [bend right=15] node [above] {$a,b, \color{purple}y$} (q0)
         (A) edge [thick, green!50!black, double] node [left] {$a$} (A1)
         (B) edge [thick, green!50!black, double] node [right] {$b$} (B1)
         (C) edge [thick, green!50!black, double] node [right] {$c$} (C1)

       (A) edge [loop above] node [above] {$x$} (A)
       (B) edge [loop, out=0, in=-30,looseness=20] node [right] {$x$} (B)
       (C) edge [loop above] node [above] {$x$} (C)

       (A2) edge [loop right] node [right] {$x$} (A2)
       (B2) edge [loop right] node [right] {$x$} (B2)
       (C2) edge [loop right] node [right] {$x$} (C2)

       (A1) edge [loop, out=300, in=330, looseness=8] node [right,yshift=+0.14cm] {$\purpley,a,$} node [right,yshift=-0.14cm] {$b,c$} (A1)
        (B1) edge [loop, out=300, in=330, looseness=8] node [right,yshift=+0.14cm] {$\purpley,a,$} node [right,yshift=-0.14cm] {$b,c$} (B1)
        (C1) edge [loop, out=300, in=330, looseness=8] node [right,yshift=+0.14cm] {$\purpley,a,$} node [right,yshift=-0.14cm] {$b,c$} (C1)

         (A1) edge [bend left=15] node [right,xshift=-0.1cm] {$x$} (A2)
         (B1) edge [bend left=15] node [right,xshift=-0.1cm] {$x$} (B2)
         (C1) edge [bend left=15] node [right,xshift=-0.1cm] {$x$} (C2)

         (A2) edge [bend left=15] node [left] {$a, \color{purple}y$} (A1)
         (B2) edge [bend left=15] node [left] {$b, \color{purple}y$} (B1)
         (C2) edge [bend left=15] node [left] {$c, \color{purple}y$} (C1)

         (A2) edge  node [left] {$b$} (Q12)
         (A2) edge  node [right] {$c$} (Q13)
         (B2) edge  node [left] {$a$} (Q21)
         (B2) edge  node [right] {$c$} (Q23)
         (C2) edge  node [left] {$a$} (Q31)
         (C2) edge  node [right] {$b$} (Q32)
         ;
    
    \node[sstate] (s1) at (-3,-5) {B};
    \node[sstate] (s2) at (0,-5) {B};
    \node[sstate] (s3) at (3, -5) {B};

    \node[cstate,label=left:$\color{blue!60!black}q_a'$] (m1) at (-3, -6) {A};
    \node[cstate,label=left:$\color{blue!60!black}q_b'$] (m2) at (0, -6) {B};
    \node[cstate,label=left:$\color{blue!60!black}q_c'$] (m3) at (3,-6) {C};

    \node[sstate, label=north west:$Y$] (Y) at (0,-8) {A};
    \node[cstate, fill=red!40, red!40] (I) at (0,-9) {AAA};
    \node (Imp) at (I) {$I$};
    
    \node[sstate,label=above:$\color{blue!60!black}q_a$] (Q1) at (-3,-5) {B};
    \node[sstate,label=above:$\color{blue!60!black}q_b$] (Q2) at (0,-5) {B};
    \node[sstate,label=above:$\color{blue!60!black}q_c$] (Q3) at (3,-5) {B};
    
    \node[sstate,label=below:$\color{blue!60!black}q_b$] (QB12) at (-3, -7) {B};
    \node[sstate,label=below:$\color{blue!60!black}q_c$] (QB13) at (-2, -7) {B};

    \node[sstate,label=below:$\color{blue!60!black}q_a$] (QB21) at (-1,-7) {B};
    \node[sstate,label=below:$\color{blue!60!black}q_c$] (QB23) at (1, -7) {B};
    
    \node[sstate,label=below:$\color{blue!60!black}q_a$] (QB31) at (2,-7) {B};
    \node[sstate,label=below:$\color{blue!60!black}q_b$] (QB32) at (3, -7) {B};

        \path[-stealth]

         (s1) edge [loop right] node [right] {$\purpley,a,b,c$} (s1)
         (s1) edge  node [right] {$x$} (m1)
               
         (m1) edge [purple,bend left] node [left] {$y$} (s1)
         (m1) edge [thick, green!50!black, double,out=240,in=180,looseness=1.5] node [below] {$a$} (Y)
         (m1) edge  node [right] {$b$} (QB12)
         (m1) edge  node [right] {$c$} (QB13) 

         (s2) edge [loop right] node [right] {$\purpley,a,b,c$} (m2)
         (s2) edge  node [right] {$x$} (m2)
         (m2) edge [purple,bend left] node [left] {$y$} (s2)
         (m2) edge  node [left] {$a$} (QB21)
         (m2) edge  node [left] {$c$} (QB23)
         (m2) edge [thick, green!50!black, double] node [left] {$b$} (Y)

         (m1) edge [loop right] node [right] {$x$} (m1)
        (m2) edge [loop right] node [right] {$x$} (m2)
        (m3) edge [loop right] node [right] {$x$} (m3)
        
         (s3) edge [loop right] node [right, yshift=+0.1cm] {$\purpley,a,$}  node [yshift=-0.15cm] {$b,c$}(m3)
         (s3) edge  node [right] {$x$} (m3)
         (m3) edge [purple, bend left] node [left] {$y$} (s3)
         (m3) edge  node [left] {$a$} (QB31)
         (m3) edge  node [left] {$b$} (QB32)
         (m3) edge [thick, green!50!black, double,out=300,in=0,looseness=1.5] node [below] {$c$} (Y)

         (Y) edge [loop, out=290, in=330, looseness=20] node [right] {$x,a,b,c, \reset{a},\reset{b},\reset{c}$}
         (Y) edge [thick, green!50!black, double] node [left] {$y$} (I)
         ;
    \end{tikzpicture}
}
\caption{HD Büchi automaton $\Aweak$ for which the weak rewiring conjecture is not true.
The transitions $\xrightarrow{\reset{i}} p_i$ outgoing from a box indicate that from every state in that box, there is a transition on $\reset{i}$ to $p_i$.}
    \label{fig:weak-rewiring-counterexample}
\end{figure}
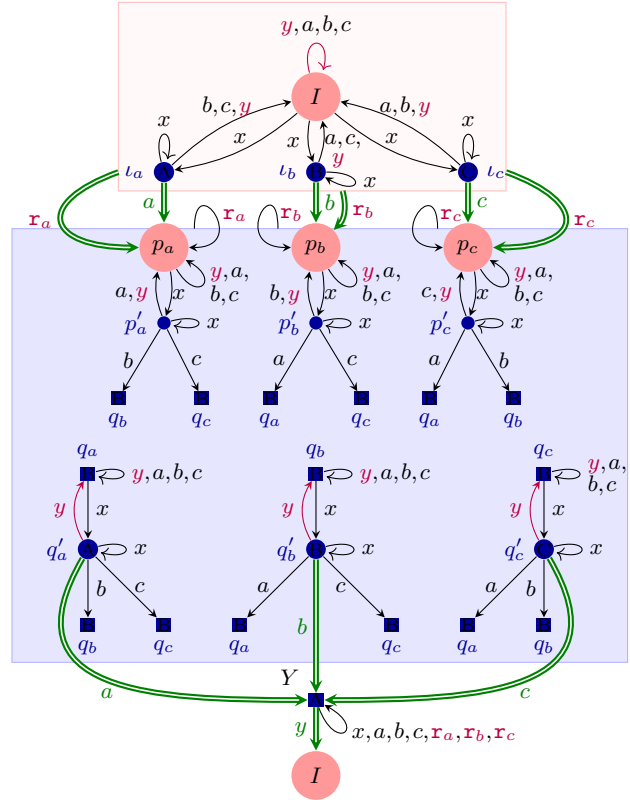
Changing transitions from the states $p'$s to $q$s to be non-\significant and adding the reset letters to obtain $\Aweak$ from $\Astrong$ has the effect of `gluing' every \good state apart from $Y$ together  in a single strongly connected component in the automaton $\reach(\Aweak)$ (blue box in Fig.~\ref{fig:weak-rewiring-counterexample}). This eliminates the possibility of a language-equivalent rewiring similar to $\Dstrong$ in \cref{fig:strong-but-no-weak}, as we will prove in \cref{lemma:weak-rewiring-dead}.

We begin by describing the language of $\Aweak$. Let  $\Sigma' = \{x,a,b,c,y, \allowbreak \reset{a}, \allowbreak \reset{b}, \allowbreak\reset{c}\}$ and 
\[L'_{\alpha\beta\beta } = (L_{\alpha}+\reset{\alpha})L_{\beta}L_{\beta} \; , \; \text{ for } \alpha,\beta\in\{a,b,c\}, \]
where $L_\alpha = ((a+b+c+\purpley)^* (\varepsilon + x^+\purpley) )^*  x^+ \alpha$ as previously. Then, 
$$\Lweak = ((x+a+b+c+y)^{*} \cdot  \bigcup_{\alpha\neq\beta\in\{a,b,c\}} L'_{\alpha\beta\beta} \cdot  \Sigma'^{*}\cdot y)^{\omega}.$$


\begin{lemma}\label{lemma:weak-rewiring-dead}
The weak-rewiring conjecture is not true for $\Aweak$.
\end{lemma}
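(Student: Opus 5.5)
The plan is to determine completely which deterministic automata qualify as rewirings of $\Aweak$, and then to exhibit, for each of them, a word on which it disagrees with $\Lweak$. As a prerequisite I first establish that $\Aweak$ is simplified, hence HD by \cref{lemma:buchi-simplified-implies-hd}, and recognises $\Lweak$. This is a routine adaptation of the proof of \cref{lemma:astrong-simplified}. For language equality, $\Lweak$ is prefix-independent and every state accepts it in $\reach(\Aweak)$. For semantic determinism, all states are language-equivalent. For reach-covering, $I$ simulates $q_a$ and $\iota_\alpha$ simulates $q'_\alpha$ in $\reach(\Aweak)$, by the same Eve strategy as in \cref{claim:astrong-RC}; the reset letters are matched because from $I,\iota_\alpha$ they are significant. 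Normality holds because the reset transitions glue all \good states other than $Y$ into one SCC of non-significant transitions. Since all states are language-equivalent, condition~3 of a rewiring imposes no constraint on targets.

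Next I list the significant transitions leaving \good states of $\Aweak$. These are exactly $q'_\alpha\xRightarrow{\alpha}Y$ for $\alpha\in\{a,b,c\}$ and $Y\xRightarrow{y}I$. All other transitions from \good states are non-significant and hence fixed. A rewiring $\Bb$ is therefore determined by four targets $t_a,t_b,t_c,s$ among the \good states, together with an initial state. Every \good state is reachable in $\Bb$: the blue-box states via reset letters and non-significant paths, and $Y$ via $xa$ from $q_a$ if $t_a=Y$. So I fix, for any target state $z$, a finite word $u_z$ reaching it.

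\textbf{Step 1: all $t_\alpha$ must equal $Y$.} Suppose $t_\alpha\neq Y$. Then $t_\alpha$ lies in the single non-significant SCC (the blue box). I will find a word $v$ with no $y$ that leads from $t_\alpha$ back to $q'_\alpha$ using only non-significant transitions. Concretely, $v=\reset{\beta}\,x\alpha\,x$ with $\beta\neq\alpha$, which goes $p_\beta\to p'_\beta\to q_\alpha\to q'_\alpha$. The word $u_{q'_\alpha}(\alpha v)^\omega$ is then accepted by $\Bb$ but contains only finitely many $y$'s, so it is not in $\Lweak$.

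\textbf{Step 2: the target $s$ of $Y$.} By Step~1 the only remaining freedom is $s$, and I split into cases in the style of \cref{lemma:no-weak-rewiring}.
\begin{itemize}
\item If $s=Y$, then $u_Y y^\omega$ is accepted but is not in $\Lweak$.
\item If $s\in\{q_\alpha,q'_\alpha\}$, the word $u_s(x\alpha\,y)^\omega$ is accepted (for $s=q'_\alpha$ the first letters adjust so that the run reaches $q'_\alpha$). By \cref{subclaim-sd-Lalpha}, its block decomposition consists only of $L_\alpha$-blocks and contains no reset letters. It therefore has no factor in any $L'_{\alpha\beta\beta}$ with $\alpha\neq\beta$.
\item If $s\in\{p_\alpha,p'_\alpha\}$, take $\beta\neq\alpha$ and the word $u_s(y\,x\beta\,x\beta)^\omega$. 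It runs $p_\alpha\to p'_\alpha\to q_\beta\to q'_\beta\Rightarrow Y$, returning on $y$ to $s$. All its blocks are $L_\beta$-blocks, so again it is not in $\Lweak$.
\end{itemize}
Finally, the initial state of $\Bb$ only changes the prefixes $u_z$, so it does not affect any of these arguments.

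The main obstacle is verifying that each witness word really avoids every $L'_{\alpha\beta\beta}$. Blocks can straddle occurrences of $y$, and a reset letter would create an $L'$ factor. So in the $Y$-cases the witnesses must contain no reset letters, and their blocks must all carry the same index. The Step~1 witnesses, by contrast, are allowed to contain resets, since they are excluded already by having only finitely many $y$'s. I would check each case against \cref{subclaim-sd-Lalpha}.
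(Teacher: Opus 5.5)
Your proposal is correct and is essentially the paper's proof. You first force every significant transition $q'_\alpha\xRightarrow{\alpha}$ to be redirected to $Y$, using a $y$-free cycle through a reset letter; your witness $u(\alpha\,\reset{\beta}x\alpha x)^\omega$ is just a shift of the paper's $(\reset{\beta}x\alpha x\alpha)^\omega$. You then rule out every target of $Y\xRightarrow{y}$ by the case analysis of \cref{lemma:no-weak-rewiring}.

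One small slip: your claim that every \good state is reachable in $\Bb$ fails when the initial state is $Y$ and $Y\xRightarrow{y}Y$. In that case Step~1 has no witness. The paper dispatches this case first via $y^\omega$, and your own first bullet of Step~2 already covers it, so simply handle it before Step~1.
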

\begin{proof}[Proof sketch]
The proof that $\Aweak$ is simplified  and recognises $\Lweak$ is  similar to that for $\Astrong$. We next show that there is no language-equivalent rewiring for $\Aweak$. 

Let $\Bb$ be a rewiring of $\Aweak$. We  first show that if $\Bb$ has a significant transition of the form $q'_{\alpha}\xRightarrow{\alpha} s$ for some $\alpha \in \{a,b,c\}, s \neq Y$, then $\Bb$ is not language-equivalent to $\Aweak$. 

We argue that $s$ is reachable from the initial state of $\Bb$.
If $Y$ is the initial state, then the significant transition from $Y$ on $y$ must not be redirected to $Y$, since otherwise $\Bb$ recognises $y^{\omega}\notin \Lweak$. Otherwise, since $s$ is a \good state that is not $Y$, we can deduce by the structure of non-significant transitions in $\Bb$ that $s$ must be reachable from $Y$. Otherwise, if the initial state is a \good state that is not $Y$, then $s$ is reachable from it since $s$ and the initial state are in the same SCC of non-significant transitions. Thus, let $u$ be a word for which there is a run of $\Bb$ to $s$. 

Then $\Bb$ accepts the word $(\reset{\beta} x\alpha x\alpha)^{\omega}$, with $\beta \neq \alpha$, which is not in $\Lweak$ because it does not contain infinitely many $y$. Thus, in this case, $\Bb$ is not language-equivalent to $\Aweak$. 

Thus, for $\Bb$ to be language-equivalent to $\Aweak$, all significant transitions outgoing  $q'_a,q'_b,$ and $q'_c$ must be towards $Y$. Suppose $\Bb$ has the transition $Y\xRightarrow{y}s$ for some state $s$ in $\Bb$. The proof that $\Bb$ then accepts a word that is not recognised by $\Aweak$ is verbatim to the proof of \cref{lemma:no-weak-rewiring}. We conclude that there is no language-equivalent rewiring of $\Aweak$, as desired.
\end{proof}

\subsubsection{Copying \good states}\label{subsec:weird-Y-det}
Ruling out rewirings is not enough. We provide another way of deterministing
by deleting the nondeterministic component altogether and instead duplicate and create 
keeping several copies of the state $Y$, one per they way in which $Y$ is entered. For
$\Aweak$ this works. 

In \cref{fig:wierdYdet}, we present a  deterministic automaton $\Dd_{{\mathtt{weak}}}$ recognising $\Lweak$. 
This automaton is obtained by removing the non-\good states $\{I,\iota_a,\iota_b,\iota_c\}$ from $\Aweak$  and adding two copies of the state $Y$, to have $Y_a,Y_b$, and $Y_c$. 
\begin{figure}[h]
\centering
{\footnotesize\thinmuskip=1.5mu 
  \begin{tikzpicture}[
    base dot/.style={
        inner sep=0pt,      
        minimum size=5pt,   
        fill=blue!60!black, 
        font=\scriptsize    
    },
    cstate/.style={base dot, circle},    
    sstate/.style={base dot, rectangle}, 
    node distance=1.5cm,
    tight label/.style={auto, inner sep=0.5pt, font=\scriptsize, text=black}
]
        \tikzset{every state/.style = {inner sep=-3pt,minimum size =15}}

    \filldraw [fill=cyan!30, draw=cyan!30] (-4,-1.6) rectangle (4.2,-7.5);

    \node[cstate, fill=red!40, red!40] (A1) at (-2, -2) {AAA};
    \node[cstate, fill=red!40, red!40] (B1) at (0, -2) {AAA};
    \node[cstate, fill=red!40, red!40] (C1) at (2,-2) {AAA};
    
    \node (d1) at (A1) {$p_a$};
    \node (d2) at (B1) {$p_b$};
    \node (d3) at (C1) {$p_c$};
    
    \node[cstate,label=left:$\color{blue!60!black}p_a'$] (A2) at (-2,-3) {};
    \node[cstate,label=left:$\color{blue!60!black}p_b'$] (B2) at (0, -3) {};
    \node[cstate,label=left:$\color{blue!60!black}p_c'$] (C2) at (2, -3) {};

    \node[sstate,label=below:$\color{blue!60!black}q_b$] (Q12) at (-2.6,-4) {B};
    \node[sstate,label=below:$\color{blue!60!black}q_c$] (Q13) at (-1.5, -4) {B};
    \node[sstate,label=below:$\color{blue!60!black}q_a$] (Q21) at (-0.6, -4) {B};
    \node[sstate,label=below:$\color{blue!60!black}q_c$] (Q23) at (0.6,-4) {B};
    \node[sstate,label=below:$\color{blue!60!black}q_a$] (Q31) at (1.5, -4) {B};
    \node[sstate,label=below:$\color{blue!60!black}q_b$] (Q32) at (2.6, -4) {B};

    \path[-stealth] 
    (-2.5,-1.6) edge [out=90,in=180,looseness=7] node [left] {$\reset{a}$} (A1)    ( -0.5, -1.6) edge [out=90,in=180,looseness=6] node [xshift=0.25cm,right] {$\reset{b}$} (B1)
    (1.5,-1.6) edge [out=90,in=180,looseness=6] node [xshift=0.25cm,right] {$\reset{c}$} (C1)
    
    ;

     \path[-stealth]
         (A1) edge  node [right] {$x$} (A2)
         (B1) edge  node [right] {$x$} (B2)
         (C1) edge  node [right] {$x$} (C2)

         (A2) edge [bend left=15] node [left] {$a$, $\color{purple}y$} (A1)
         (B2) edge [bend left=15] node [left] {$b$,$\color{purple}y$} (B1)
         (C2) edge [bend left=15] node [left] {$c$,$\color{purple}y$} (C1)

            (A2) edge [loop right] node [right] {$x$} (A2)
            (B2) edge [loop right] node [right] {$x$} (B2)
            (C2) edge [loop right] node [right] {$x$} (C2)

        (A1) edge [loop, out=330, in=360, looseness=8] node [right,yshift=+0.14cm] {$\purpley,a,$} node [right,yshift=-0.14cm] {$b,c$} (A1)
        (B1) edge [loop, out=330, in=360, looseness=8] node [right,yshift=+0.14cm] {$\purpley,a,$} node [right,yshift=-0.14cm] {$b,c$} (B1)
        (C1) edge [loop, out=330, in=360, looseness=8] node [right,yshift=+0.14cm] {$\purpley,a,$} node [right,yshift=-0.14cm] {$b,c$} (C1)

         (A2) edge [thick, green!50!black, double] node [left] {$a$} (Q12)
         (A2) edge [thick, green!50!black, double] node [right] {$b$} (Q13)
         (B2) edge [thick, green!50!black, double] node [left] {$c$} (Q21)
         (B2) edge [thick, green!50!black, double] node [right] {$a$} (Q23)
         (C2) edge [thick, green!50!black, double] node [left] {$b$} (Q31)
         (C2) edge [thick, green!50!black, double] node [right] {$c$} (Q32)
         ;
    
    \node[sstate] (s1) at (-3,-5) {B};
    \node[sstate] (s2) at (0,-5) {B};
    \node[sstate] (s3) at (3, -5) {B};

    \node[cstate, label=left:$\color{blue!60!black}q_a'$] (m1) at (-3, -6) {A};
    \node[cstate, label=left:$\color{blue!60!black}q_b'$] (m2) at (0, -6) {B};
    \node[cstate, label=left:$\color{blue!60!black}q_c'$] (m3) at (3,-6) {C};

    \node[sstate, label=left:$\color{blue!60!black}Y_a$] (Y1) at (-3,-8) {A};
    \node[sstate, label=left:$\color{blue!60!black}Y_b$] (Y) at (0,-8) {A};
    \node[sstate, label=left:$\color{blue!60!black}Y_c$] (Y3) at (3,-8) {A};

    \node[cstate, fill=red!40, red!40] (reA1) at (-3, -9) {AAA};
    \node[cstate, fill=red!40, red!40] (reB1) at (0, -9) {AAA};
    \node[cstate, fill=red!40, red!40] (reC1) at (3,-9) {AAA};
    
    \node (rd1) at (reA1) {$p_a$};
    \node (rd2) at (reB1) {$p_b$};
    \node (rd3) at (reC1) {$p_c$};

    \node[sstate,label=above:$\color{blue!60!black}q_a$] (Q1) at (-3,-5) {B};
    \node[sstate,label=above:$\color{blue!60!black}q_b$] (Q2) at (0,-5) {B};
    \node[sstate,label=above:$\color{blue!60!black}q_c$] (Q3) at (3,-5) {B};
    
    \node[sstate,label=below:$\color{blue!60!black}q_b$] (QB12) at (-3, -7) {B};
    \node[sstate,label=below:$\color{blue!60!black}q_c$] (QB13) at (-2, -7) {B};

    \node[sstate,label=below:$\color{blue!60!black}q_a$] (QB21) at (-1,-7) {B};
    \node[sstate,label=below:$\color{blue!60!black}q_c$] (QB23) at (1, -7) {B};
    
    \node[sstate,label=below:$\color{blue!60!black}q_a$] (QB31) at (2,-7) {B};
    \node[sstate,label=below:$\color{blue!60!black}q_b$] (QB32) at (3, -7) {B};

        \path[-stealth]
        
         (s1) edge [loop right] node [right] {$\purpley,a,b,c$}  (s1)
         (s1) edge  node [right] {$x$} (m1)
               
         (m1) edge [purple,bend left] node [left] {$y$} (s1)
         (m1) edge [thick, green!50!black, double,out=240,in=120,looseness=1.5] node [left] {$a$} (Y1)
         (m1) edge  node [right] {$b$} (QB12)
         (m1) edge  node [right] {$c$} (QB13) 

         (s2) edge [loop right] node [right] {$\purpley,a,b,c$} (m2)
         (s2) edge  node [right] {$x$} (m2)
         (m2) edge [purple,bend left] node [left] {$y$} (s2)
         (m2) edge  node [left] {$a$} (QB21)
         (m2) edge  node [left] {$c$} (QB23)
         (m2) edge [thick, green!50!black, double] node [left] {$b$} (Y)

         (m1) edge [loop right] node [right] {$x$} (m1)
        (m2) edge [loop right] node [right] {$x$} (m2)
        (m3) edge [loop right] node [right] {$x$} (m3)

         (s3) edge [loop right] node [right] {$\purpley,a,$} node [right,yshift=-0.25cm] {$b,c$} (s1) (m3)
         (s3) edge  node [right] {$x$} (m3)
         (m3) edge [purple, bend left] node [left] {$y$} (s3)
         (m3) edge  node [left] {$a$} (QB31)
         (m3) edge  node [left] {$b$} (QB32)
         (m3) edge [thick, green!50!black, double,out=300,in=60,looseness=1.5] node [right] {$c$} (Y3)

         (Y) edge [loop, out=290, in=340, looseness=10] node [right] {$\Sigma\setminus\{{\color{purple}y}\}$} (Y)

         (Y1) edge [loop, out=290, in=340, looseness=10] node [right] {$\Sigma\setminus\{{\color{purple}y}\}$} (Y1)

         (Y3) edge [loop, out=290, in=340, looseness=10] node [right] {$\Sigma\setminus\{{\color{purple}y}\}$} (Y3)

        (Y1) edge [thick, green!50!black, double] node [left] {$y$} (reA1)
        (Y) edge [thick, green!50!black, double] node [left] {$y$} (reB1)
        (Y3) edge [thick, green!50!black, double] node [left] {$y$} (reC1)
         ;
    \end{tikzpicture}
}
\caption{Determinisation of $\Aweak$ by introducing copies of the state $Y$.}
    \label{fig:wierdYdet}
\end{figure}

This determinisation strategy works here because the number of $Y$-copies ($2$ in this case) is smaller than the number of non-\good states ($4$ in this case).
To defeat it, it suffices to enlarge our automata with alphabets similar to $\{a,b,c\}$, so that we would require to introduce $>4$ copies of the state $Y$.

\subsubsection{Determinising the nondeterministic component}\label{subsec:determinising-nondeterministic-comp}
We discuss a final determinisation technique.
Recall the automaton $\Abkks$ in \cref{fig:BKKS13}.
In \cref{fig:detForBKKS13b}, we showed a language-equivalent deterministic Büchi automaton of the same size that is not obtained by a rewiring, but instead by `\emph{determinising the nondeterministic component'}.
It is easy to come up with a similar language-equivalent deterministic Büchi automata for both $\Aweak$ and $\Astrong$, where we replace the states $I,\iota_a,\iota_b,\iota_c$ that are not \good with two states as in \cref{fig:detForBKKS13b}, and add appropriate transitions.
This provides yet a different proof of non-succinctness for $\Aweak$.

To demonstrate that such a determinisation heuristic does not always result in smaller deterministic Büchi automata, we propose the automaton $\Areplace$ in \cref{fig:HDBuchinotMRAgain}.
This~example is inspired by a recent example that arose in a different context~\cite[Figure~6]{HPT25}. The automaton $\Areplace$ recognises the language 
\[((L_1+L_2)^{*} (L_1L_1 + L_2 L_2))^{\omega},\] 
for some appropriate $L_1$ and $L_2$ that are languages over finite words. For comparison, we remark that the language of $\Abkks$ is $(xa+xb)^*(xaxa+xbxb)^{\omega}$.

To describe $L_1$ and $L_2$ concretely, let $A=\{a,b,c\}$, and $\Sigma=\{a,b,c,1,2\}.$  Define the languages $L'_1 = \Sigma^{*}c1$, and $L'_2 = \Sigma^* a  A^* b 2$. Then, $L_1$ is given by 
\begin{equation}\label{eqn:Areplace1}
    L_1 = L'_1 \setminus (L'_2 \Sigma^{*}),
\end{equation}

i.e., the set of words in $L'_1$ with no prefix in $L'_2$. Analogously, $L_2$ is defined as 
\begin{equation}\label{eqn:Areplace2}
    L_2=L'_2\setminus (L'_1 \Sigma^{*}).
\end{equation}

The only nondeterminism in $\Areplace$ is on $I$ and $\iota_c$, over the letter $a$. On these states, upon reading an $a$, we  `guess'  whether we are going to read a word in $L_1$ or $L_2$, and take the transition to $I$ and $\iota_a$, respectively. Consider a resolver makes its first guess arbitrarily, and afterwards, considers the latest occurrence of an infix in $L_1$ or $L_2$ in the word read so far, and uses this to take the transition to the $I$ or $\iota_a$, respectively. The fact that this resolver is an HD-resolver is proved similarly to the case of $\Abkks$ (\cref{ex:BKKS13}).

Determinising the nondeterministic component of $\Areplace$, given by the states $I,\iota_a,\iota_b$ and $\iota_c$, requires 5 states (copying one of the $5$-state components below).
Therefore, such a determinisation strategy yields a deterministic automaton with strictly more states than~$\Areplace$.

There is, of course, a language-equivalent rewiring for $\Areplace$, which is similar in spirit to the language-equivalent rewiring in \cref{fig:detForBKKS13a} for $\Abkks$.
This provides a smaller language-equivalent deterministic automaton.

\begin{figure}[H]
\centering
{\footnotesize\thinmuskip=1.5mu 

        \begin{tikzpicture}[
    base dot/.style={
        inner sep=0pt,      
        minimum size=5pt,   
        fill=red!50!white, 
        font=\scriptsize    
    },
    cstate/.style={base dot, circle},    
    sstate/.style={base dot, rectangle}, 
    node distance=1.5cm,
    tight label/.style={auto, inner sep=0.5pt, text=black}
]
        \tikzset{every state/.style = {inner sep=-3pt,minimum size =15}}

\node[cstate, fill=red!40, red!40] (q1) at (-1,1.5) {AAA};

\node at (q1) {$\iota_c$};

 \node[cstate, fill=red!40, red!40] (q0) [above=0.8cm of q1] {AAA};
\node at (q0) {$I$};

 \node[cstate, fill=red!40, red!40] (q3) [right=0.8cm of q0] {AAA};
\node at (q3) {$\iota_a$};
 \node[cstate, fill=red!40, red!40] (q2) [below=0.8cm of q3] {AAA};
\node at (q2) {$\iota_b$};
 \path[->] (-1.7,2.8) edge (q0);
    
    \node[cstate,label=left:$\color{blue!60!black}p_2$] (r0) at (1.7,0.2) {};
    \node[cstate] (r1) at (1.7,-1.3) {};
    \node[cstate] (r2) at (2.6,-1.3) {};
    \node[cstate] (r3) at (3.2,0.2) {};    
    \node[cstate] (r4) at (4.2,-1.3) {};    

    \node[cstate,label=left:$\color{blue!60!black}p_1$] (l0) at (-3,0.2) {};
    \node[cstate] (l1) at (-3,-1.3) {};
    \node[cstate] (l2) at (-2.2,-1.3) {};
    \node[cstate] (l3) at (-1.5,0.2) {};    
    \node[cstate] (l4) at (-0.5,-1.3) {};    

    \node[cstate, fill=red!40, red!40] (l5) at (-2.8,-2.3) {AAA};  
    \node at (l5) {$I$};
    \node[cstate,label=below:$\color{blue!60!black}p_2$] (l6) at (-0.65,-2.3) {};  

    \node[cstate,label=below:$\color{blue!60!black}p_1$] (l7) at (-1.65,-2.3) {};  
    \node[cstate,label=below:$\color{blue!60!black}p_2$] (r5) at (2.15,-2.3) {};  

    \node[cstate,label=below:$\color{blue!60!black}p_1$] (r7) at (3.15,-2.3) {}; 
    \node[cstate, fill=red!40, red!40] (r6) at (4.2,-2.3) {AAA};    
     \node at (r6) {$I$};    


    \path[-stealth]
    (q0) edge [loop above] node [above] {$a,b,1,2$} (q0)
    (q0) edge [bend left = 8] node [above] {$a$} (q3)
    (q1) edge [] node [below] {$a$} (q3)
    (q3) edge [bend left = 8] node [below] {$1,2$} (q0)
    (q0) edge [bend left = 8] node [right] {$c$} (q1)
    (q1) edge [bend left = 8] node [left] {$b,a,2$} (q0)
    (q3) edge [bend right = 8] node [left] {$b$} (q2)
    (q2) edge [bend right = 8] node [right] {$c,a$} (q3)
    (q2) edge [bend right = 90, looseness = 3] node [above] {$1$} (q0)  
    (q3) edge [loop above] node [right] {$a,c$} (q3)
    (q1) edge [loop left] node [left] {$c$} (q1)
    (q2) edge [loop right] node [right] {$b$} (q2)

    (l0) edge [loop above] node [above] {$b,1,2$} (l0)
    (l0) edge [bend right = 8] node [below] {$a$} (l3)
    (l3) edge [bend right = 8] node [tight label, above] {$1,2$} (l0)
    (l1) edge  node [below, left] {$a$} (l3)
    (l0) edge [bend right = 8] node [tight label,left] {$c$} (l1)
    (l1) edge [bend right = 8] node [tight label,right] {$b$} (l0)
    (l3) edge [bend right = 8] node [tight label,left] {$c$} (l2)
    (l2) edge [bend right = 8] node [tight label,right] {$a$} (l3)
    (l3) edge [bend right = 8] node [tight label,below=3pt,left] {$b$} (l4)
    (l4) edge [bend right = 8] node [tight label,above=3pt,right] {$a$} (l3)
    (l4) edge [bend right = 8] node [tight label,above=0.5pt] {$c$} (l2)
    (l2) edge [bend right = 8] node [tight label,below=0.5pt] {$b$} (l4)
    
    (l3) edge [in=30,out=60,loop] node [right] {$a$} (l3)
    (l1) edge [loop left] node [left] {$c$} (l1)
    (l2) edge [in=-60,out=-90,loop]  node [below] {$c$} (l2)
    (l4) edge [loop above] node [above] {$b$} (l4)

    (r0) edge [loop above] node [above] {$b,1,2$} (r0)
    (r0) edge [bend right = 8] node [below] {$a$} (r3)
    (r3) edge [bend right = 8] node [tight label, above] {$1,2$} (r0)
    (r1) edge  node [below, left] {$a$} (r3)
    (r0) edge [bend right = 8] node [tight label,left] {$c$} (r1)
    (r1) edge [bend right = 8] node [tight label,right] {$b$} (r0)    
    (r3) edge [bend right = 8] node [tight label,left] {$c$} (r2)
    (r2) edge [bend right = 8] node [tight label,right] {$a$} (r3)
    (r3) edge [bend right = 8] node [tight label,below=3pt,left] {$b$} (r4)
    (r4) edge [bend right = 8] node [tight label,above=3pt, right] {$a$} (r3)
    (r2) edge [bend right = 8] node [tight label,below=0.5pt] {$b$} (r4)
    (r4) edge [bend right = 8] node [tight label,above=0.5pt] {$c$} (r2)
    (r3) edge [in=30,out=60,loop]  node [above] {$a$} (r3)
    (r1) edge [loop left] node [left] {$c$} (r1)
    (r2) edge [in=-60,out=-90,loop] node [below] {$c$} (r2)
    (r4) edge [loop above] node [above] {$b$} (r4)
    (r2) edge [purple,out=-30,in=90] node [tight label,right] {$\color{purple}1$} (r7)
    (r1) edge [purple,out=-40,in=180] node [tight label,below,xshift=0.2cm,yshift=-0.15cm] {$\color{purple}1$} (r7)
    (r4) edge [purple] node [tight label,below] {$\color{purple}1$} (r7)

    (q1) edge  node [right] {$1$} (l0)
    (q2) edge node [left] {$2$} (r0)

    (l1) edge[ thick, green!50!black, double] node [left,yshift=0cm,xshift=0cm] {$1$} (l5)
    (l2) edge[thick, green!50!black, double] node [left,pos=0.2] {$1$} (l5)
    (l4) edge node [left] {$2$} (l6)
    (l2) edge [purple,out=-30,in=90] node [tight label,right] {$\color{purple}2$} (l7)
    (l1) edge [purple,out=-40,in=180] node [tight label,below,xshift=0.15cm,yshift=-0.2cm] {$\color{purple}2$} (l7)
    (l4) edge [purple] node [tight label,below] {$\color{purple}1$} (l7)
    
    (r1) edge node [right,xshift=-0.1cm] {$2$} (r5)
    (r2) edge node [left,pos=0.25] {$2$} (r5)
    (r4) edge [thick, green!50!black, double] node [left] {$2$} (r6)
    
;
    \end{tikzpicture}
}
\caption{The HD B\"uchi automaton $\Areplace$.}\label{fig:HDBuchinotMRAgain}
\end{figure}


\vskip1em

We are ready to propose an HD B\"uchi automaton strictly smaller than every language-equivalent deterministic one.
In the next section, we combine our examples to construct the succinct HD Büchi automaton that defeats all these determinisation techniques at the same time. 

\subsection{The main automaton (full details)}\label{subsec:full-main}
In this section, we introduce the protagonist of our succinctness result: a 65-state Büchi automaton $\Amain$, and  prove that it is in fact history deterministic.
In Section~\ref{subsec:full-succinctness}, we will prove that there is no language-equivalent deterministic automaton with at most 65 states. 
\subparagraph{Some intuition.} To build the automaton $\Amain$, we combine the ideas and gadgets explained in the previous section. 
In particular, a succinct HD B\"uchi automaton must satisfy the following (informal) properties,
\begin{description}
    \item[No rewiring.] It does not admit a language-equivalent rewiring. 
    \item[No copying.] A determinisation strategy based on making copies of a $Y$-state requires making more copies than states in the nondeterministic part. 
    \item[No replacement.] The nondeterministic part does not admit a determinisation with fewer states.
\end{description}
We ensure the no-rewiring property by using the ``skeleton structure'' of the automaton $\Aweak$ from \cref{fig:weak-rewiring-counterexample}.
That is, the general structure of the automaton $\Amain$ consists of 
\begin{itemize}
    \item a nondeterministic initial component with $4$ states, 
    \item several deterministic ``basic blocks'', each of them recognising a language of finite words~$L_i$,
    \item a state $Y$ checking for appearances of the letter $\purpley$.
\end{itemize}
Moreover, we include reset letters $\reset{i}$ as in $\Aweak$ to `glue together' the basic blocks in a single SCC of $\reach(\Amain)$. 

We ensure the no-copying property by having $12 = 2\cdot 6$ ``basic blocks'' in the skeleton structure. A determinisation via making copies of a $\purpley$-state would incur in adding $5$ states, more than what we gain from removing the nondeterministic part.

To ensure the no-replacement property, we use the gadget from \cref{fig:HDBuchinotMRAgain}.
Each basic block, therefore, will be a subautomaton of $5$ states similar to the one at the bottom left of \cref{fig:HDBuchinotMRAgain}.

\subparagraph{The language.} Before introducing the automaton, we describe the language it recognises. We fix the alphabet $$\Sigma = \{a,b,c,1,2,3,4,5,6,\reset{1},\reset{2},\reset{3},\reset{4},\reset{5},\reset{6}, {\color{purple}y}\}.$$
For notational convenience, we let $A=\{a,b,c\}$, $E = \{1,2,3\}$, and $F = \{4,5,6\}$.

The language $\Lmain = L(\Amain)$ is built similarly to the language of $\Aweak$. 
The ``building blocks'' of the language are $6$ languages of finite words, $\langBlock{1},\dots, \langBlock{6}$, which we describe below. The languages $\langBlock{1},\langBlock{2},\langBlock{3}$, and the languages  $\langBlock{4},\langBlock{5},\langBlock{6}$ are similar to $L_1$ and $L_2$ from that of $\Areplace$ (see \cref{eqn:Areplace1,eqn:Areplace2}).

We let:
\[ \langInfix{\alpha}{\beta}  =  (\langBlock{\alpha} + \reset{\alpha})\langBlock{\beta}\langBlock{\beta}, \;\; \text{ for } \alpha\neq \beta, \;\; \alpha,\beta\in \{1,\dots,6\}.\] 

Then, the final language is:
\[ \Lmain = (\Sigma^* \cdot \bigcup_{\substack{\alpha\neq \beta \\ 1\leq \alpha,\beta \leq 6}} \langInfix{\alpha}{\beta} \cdot \Sigma^* \cdot y )^\omega  \ .\]

That is, $\Lmain$ consists of the words containing infinitely often infixes in some $\langInfix{\alpha}{\beta}$ as well as containing infinitely many $\purpley$'s. 


It remains to describe the languages $\langBlock{\alpha}$ for $\alpha = 1,\dots,6$.
This is best done by providing the \emph{DFA-classifier} $\Aclassifier$ from Figure~\ref{fig:LanguageL1}.
This is a DFA over $\Sigma$ with multiple final states, in this case, the six states $\sqrstate{\ell_1},\dots, \sqrstate{\ell_6}$.
\begin{figure}
    \centering
\begin{tikzpicture}[
        base dotred/.style={
        inner sep=0pt,      
        minimum size=5pt,   
        fill=red!50!white, 
        font=\scriptsize    
    },
    base dot/.style={
        inner sep=0pt,      
        minimum size=5pt,   
        fill=blue!60!black, 
        font=\scriptsize    
    },
    cstate/.style={base dotred, circle},  
    node distance=1.5cm,
    tight label/.style={auto, inner sep=0.5pt, text=black},
    sstate/.style={base dot, rectangle}, 
    house peak H/.store in=\peakH,
    house peak H=-1.6cm,
    house eaves H/.store in=\eavesH, 
    house eaves H=0.15cm,
    house base H/.store in=\baseH,   
    house base H=-0.7cm,
    house width/.store in=\houseW,   
    house width=0.5cm,
    group sep x/.store in=\gSepX, 
    group sep x=2.6cm, 
    group sep y/.store in=\gSepY, 
    group sep y=3.2cm, 
        row top y/.store in=\rowTopY,
    row bot y/.store in=\rowBotY,
    top x off/.store in=\topX,
    bot x off/.store in=\botX,
    row top y=0.6cm,
    row bot y=-0.9cm,
    top x off=0.6cm,
    bot x off=1.2cm
]
    \filldraw [fill=violet!10, draw=violet!30] (-2.5,3.2) rectangle (2.8,-0.5);

    \filldraw [fill=violet!30, draw=violet!50] (-2.2,0.4) rectangle (0.3,-0.3);
    
\foreach \i in {1} {
    \coordinate (center\i) at ({(\i-1)*\gSepX}, 0);

    \node[cstate, label=above:$p$] (p\i) at (-1,2) {};
    \node[cstate] (q\i) at (1,2)  {};

    \node[cstate] (t\i) at (-1.5, 0) {};
    \node[cstate] (s\i) at (0,0)      {};
    \node[cstate] (r\i) at (2,0)  {};
}
    \node[sstate, label=below:$\color{blue!60!black}\ell_1$] (accept1) at (-2.2,-1.3) {};
    \node[sstate, label=below:$\color{blue!60!black}\ell_2$] (accept2) at (-1.2,-1.3) {};
    \node[sstate, label=below:$\color{blue!60!black}\ell_3$] (accept3) at (-0.2,-1.3) {};
    \node[sstate, label=below:$\color{blue!60!black}\ell_4$] (accept4) at (0.8,-1.3) {};
    \node[sstate, label=below:$\color{blue!60!black}\ell_5$] (accept5) at (1.8,-1.3) {};
    \node[sstate, label=below:$\color{blue!60!black}\ell_6$] (accept6) at (2.8,-1.3) {};




 \foreach \i in {1} {
  \path[->,>=stealth, thick, shorten >=1pt]
    (p\i) edge [loop left] node [above,tight label] {$b,{\color{purple}E,F}$} (p\i)
    (q\i) edge [loop right] node [tight label] {$a$} (q\i)
    
    (r\i) edge [loop right] node [tight label] {$b$} (r\i)
    (t\i) edge [loop left] node [tight label, yshift=0cm, xshift=-0.05cm] {$c$} (t\i)

    (p\i) edge  node [tight label,below] {$a$} (q\i)
    (q\i) edge [purple,bend right] node [tight label,above] {${\color{purple}{E,F}}$} (p\i)

    (q\i) edge [bend left = 10] node [tight label] {$b$} (r\i)   
    (r\i) edge [bend left = 10] node [tight label,yshift=0.1cm, xshift=-.05cm] {$a$} (q\i)
     
    (r\i) edge [bend left = 10] node [tight label] {$c$} (s\i)  
    (s\i) edge [bend left = 10] node [tight label] {$b$} (r\i)       
    (p\i) edge [bend right =10] node [ tight label, left] {$c$} (t\i)
    (t\i) edge [bend right=10] node [tight label, right] {$b$} (p\i)

    (q\i) edge [bend left = 10] node [tight label, yshift=0cm] {$c$} (s\i)  
    (s\i) edge [bend left = 10] node [tight label,yshift=-0.15cm] {$a$} (q\i)   
    (t\i) edge  node [tight label, left, yshift=-0.20cm, xshift=-.35cm] {$a$} (q\i)
    
    (t\i) edge [bend left=40, purple] node [tight label, left] {${\color{purple}F}$} (p\i)

    (r\i) edge [purple,out=60,in=60, looseness=1.8] node [tight label, above,xshift=0.1cm] {{\color{purple}$E$}} (p\i)  
    (s\i) edge [purple] node [tight label,xshift=0.1cm,yshift=0.6cm] {$\color{purple}F$} (p\i)   




    (s\i) edge [loop left] node [tight label,left,xshift=-0.1cm,yshift=0cm] {$c$} (s\i);
    ;

    \path[-stealth] 
    (-1.6,-0.3) edge [thick, green!50!black, out=200,in=90,looseness=1] node [left] {$1$} (accept1) 
    (-1.2,-0.3) edge [thick, green!50!black, out=-90,in=90,looseness=1] node [left] {$2$} (accept2) 
    (-0.8,-0.3) edge [thick, green!50!black, out=-60,in=90,looseness=1] node [right, xshift= 0.10cm] {$3$} (accept3) 

    (r\i) edge [thick, green!50!black, out=-120,in=90,looseness=1] node [left, xshift=-0.10cm] {$4$} (accept4) 
    (r\i) edge [thick, green!50!black, out=-90,in=90,looseness=1] node [left] {$5$} (accept5) 
    (r\i) edge [thick, green!50!black, out=-60,in=90,looseness=1] node [right] {$6$} (accept6) 
    
    (-1.2,3.2) edge [thick, out=110,in=130,looseness=3] node [xshift=0.0cm,left] {$y$} (p\i);;

    
    

}
\end{tikzpicture}
    \caption{The DFA-classifier $\Aclassifier$ defining the languages $L_1,\dots, L_6$.}
    \label{fig:LanguageL1}
\end{figure}
Its initial state is $p$.
The language $L_\alpha$ is the set of words for which the run over $\Aclassifier$ ends up in $\sqrstate{\ell_\alpha}$.
We note that $\sqrstate{\ell_\alpha}$ is \emph{not} a sink: words that admit a proper prefix ending in some $\sqrstate{\ell_\alpha}$ are rejected.
By definition, the sets $L_1,\dots, L_6$ are pairwise disjoint, and furthermore, every word in $L_i$ has no proper prefix in $L_j$, for every $i,j \in [6]$.

We give some intuition for the DFA-classifier $\Aclassifier$. We first remark that $\Aclassifier$ contains no $\reset{\alpha}$ transitions: all words containing such a letter are rejected. Note that $\purpley$ acts as a reset letter: from all states, the transition on $\purpley$ goes back to state $p$.
While no letter $a$ is produced, the automaton stays in the two leftmost states; if the factor $ce$ is read for some $e\in E = \{1,2,3\}$ is then read, it goes to $\sqrstate{\ell_e}$.
When letter $a$ is read, we go to 
the right part of the automaton, and remain there while only letters in  $A = \{a,b,c\}$ appear.
From this right part:
\begin{itemize}
    \item $A^*ce$ make us advance to $\sqrstate{\ell_e}$ for $e\in E$, while  $A^*cF$ resets to $p$,
    \item $A^*bf$ make us advance to $\sqrstate{\ell_f}$ for $f\in F$,  while $A^*bE$ resets to $p$, and
    \item $A^*a(E+F)$ resets to $p$.
\end{itemize}

\subparagraph{The automaton $\Amain$.}
We describe the HD B\"uchi automaton recognising $\Lmain$.
The states of the automaton are:
\begin{itemize}
    \item the initial state $I$, and states $I_a$, $I_b$ and $I_c$;
    \item the ``circular states'': $\circstate{p_i}, \circstate{q_i}, \circstate{r_i}, \circstate{s_i},\circstate{t_i}$ for $i\in \{1,\dots,6\}$;
    \item the ``square states'': $\sqrstate{p_i}, \sqrstate{q_i}, \sqrstate{r_i},\sqrstate{s_i},\sqrstate{t_i}$ for $i\in \{1,\dots,6\}$; 
    \item and finally the  state $\sqrstate{Y}$.
\end{itemize}
In total, it has $4 + 2\times 5 \times 6 + 1 = 65$ states.

\input{figures_sec2and3/beastAutomaton}

\subparagraph{Visualising the automaton.} 
The automaton is pictured in \cref{fig:wholeautomaton}.
To visualise the 65-state automaton without creating a tangle of spaghetti, we split the transitions across two figures and use a few graphical shorthands similar to the earlier section.
The transitions of the automaton over the letters $a,b,c,$ and $\purpley$ are described in \cref{fig:transitionsoverabcy}, whereas the transitions of the automaton over the rest of the letters $1,\dots,6$ and $\reset{1},\dots,\reset{6}$ are depicted in \cref{fig:transitionsoverEFr}. 
Note that to maintain the planarity and readability of the diagrams, some states appear in multiple locations.
We use variables to represent sets of transitions.
\begin{itemize} 
\item The variable $i$ ranges over the set $\{1,\dots,6\}$,
\item the variable $e$ ranges over the set $E = \{1,2,3\}$, and
\item the variable $f$ ranges over the set $F = \{4,5,6\}$.
\end{itemize}
Therefore, a single arrow like $\xrightarrow{f}\circstate{p_f}$ in fact represents the transitions $\xrightarrow{4} \circstate{p_4}$, $\xrightarrow{5} \circstate{p_5}$, and $\xrightarrow{6} \circstate{p_6}$.
Finally, the rectangles around states represent grouped sources. An arrow originating from a box around a set of states implies a transition from every state within that box. For instance, in \cref{fig:transitionsoverabcy}, the arrow labelled $\purpley$ to the state $\circstate{p_i}$   indicates that 
  on input  $\purpley$, the states  $\circstate{p_i}, \circstate{q_i}, \circstate{r_i}, \circstate{s_i}$, and $\circstate{t_i}$ transition to state $\circstate{p_i}$.

\subparagraph{Working of the automaton.} We note that there are two nondeterministic choices: when reading letter $a$ from states $I$ and $I_c$ (similar to the mechanism in Figure~\ref{fig:HDBuchinotMRAgain}).

This automaton works as follows. It reads sequences of words in languages $L_i$ until a factor in $L_\beta L_\beta$ that has been preceded by some word in $L_\alpha$ or $\reset{\alpha}$, with $\alpha\neq \beta$, appears.
Then, it goes to state $Y$ and waits for letter $\purpley$ to appear. When this happens, it restarts again.

The top box (states $I,I_a,I_b,I_c$) waits for a word in $L_\alpha$ to appear (or letter $\reset{\alpha}$). Using the nondeterministic choice on $I$ and $I_c$ over $a$, 
we will guess whether $\alpha$ is in $E$  and we are going to see a word of the form $A^*cE$, in which case we should take the $a$ transition to $I$, or whether $\alpha$ is in $F$ and we are going to see a word of the form $A^*aA^*bF$, in which case we should take the $a$-transition to $I_a$.

There is a HD resolver to resolve this nondeterminism, which we describe informally. The first guess from $I$ will be at random. If we make the right choice, we move to $\circstate{p_\alpha}$.
If we make the wrong choice, no big deal! 
We go back to $I$ and use the following strategy upon reading an $a$ from $I$ or $I_c$. Let $u\in\Sigma^{*}$ be the prefix read so far, and let $\alpha \in \{1,\dots,6\}$ be the index of the latest infix appearing in $u$ belonging to some $L_\alpha$.
Then, we resolve the nondeterminism from $I$ and $I_c$ in the following way: if $\alpha\in E$, we take the $a$-transition to $I$,  if $\alpha \in F$, we take the $a$-transition to $I_a$. This strategy will eventually advance us to the big blue box over every word in $\Lmain$, since every such word must consist of a factor in $L_\beta L_\beta$ for some $\beta \in [6]$.
Full formal details can be found in~\cref{prop:reach-covering-for-amain}.

In the big blue box (containing every state other than $I,I_a,I_b,I_c$, and $Y$), the $\alpha^\text{th}$ box on the top row resets over $L_\alpha$, and moves to the lower row over $L_\beta$ if $\beta\neq \alpha$. 
More specifically, it goes to the state $\circstate{p_\alpha}$ over $L_\alpha$ and to $\sqrstate{p_\beta}$ over $L_\beta$. 
From $\sqrstate{p_\beta}$ in the lower row, it advances to $Y$ over $L_\beta$. Over $L_\alpha$, for $\alpha\neq \beta$, it stays in the lower row, going to $\sqrstate{p_\alpha}$.
Over a letter $\reset{i}$ we reset to the upper row, to $\circstate{p_i}$; this ensures that the  big blue box is strongly connected over non\=/significant transitions.

The automaton has three strongly connected components when restricted to non-significant transitions: the top pink box, the big blue box, and the state $Y$.

\subparagraph{Correctness.} 
The proof of the below result on $\Amain$ is similar to the proof of~\cref{lemma:astrong-simplified}. 

\begin{restatable}{theorem}{thmMainisHD}\label{thm:A65isHD}
    The automaton $\Amain$ is history deterministic and recognises $\Lmain$. Moreover, it is simplified.
\end{restatable}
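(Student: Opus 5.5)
I will follow the template of the proof of \cref{lemma:astrong-simplified}. It suffices to show that $\Amain$ recognises $\Lmain$, is semantically deterministic, is normal, and has reach-covering. History determinism then follows from \cref{lemma:buchi-simplified-implies-hd}.

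\textbf{Factorisation.} The first step is the analogue of \cref{subclaim-sd-Lalpha}. Since the classifier $\Aclassifier$ is deterministic, every word in some $\langBlock{\alpha}$ has no proper prefix in any $\langBlock{\beta}$, and the letters $\purpley$ and $\reset{i}$ send the classifier back to $p$ or reject. Hence every word with infinitely many block occurrences factorises uniquely as $u_1u_2\cdots$ with each $u_j\in\langBlock{\alpha_j}$, possibly interleaved with reset letters. To make this precise, I will observe that each basic block (circular or square) runs a copy of $\Aclassifier$, with the $\ell_\alpha$-transitions redirected, and that reset letters reset this copy.

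\textbf{Language and semantic determinism.} I will prove $\Lmain\subseteq L(\Amain,q)$ for every state $q$. Since $\Lmain$ is prefix-independent, it suffices to show $\Lmain\subseteq L(\reach(\Amain),q)$.

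From a block state, the deterministic run tracks the last block index $\alpha$ in the upper row. Over $\langBlock{\beta}$ with $\beta\neq\alpha$ it moves to the lower row at $\sqrstate{p_\beta}$, and it reaches $\sqrstate{Y}$ once some $\langBlock{\beta}\langBlock{\beta}$ occurs. Every word of $\Lmain$ contains such a factor after some other block or reset, so a significant transition is eventually reached. From $\sqrstate{Y}$, a significant transition is taken on the next $\purpley$.

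From $I,I_a,I_b,I_c$, I will exhibit a run that guesses correctly on the next block. This is possible nondeterministically because the first block appears in $\Sigma^*(\langBlock{1}+\dots+\langBlock{6})$. A reset letter $\reset{\alpha}$ is significant from the top box anyway.

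For the converse inclusion, every accepting run passes through $\sqrstate{Y}\xRightarrow{\purpley}I$ infinitely often. I will show that the finite words read between consecutive visits, from $I$ or from a block state to $\sqrstate{Y}$, contain an infix in $\bigcup\langInfix{\alpha}{\beta}$ followed by a $\purpley$. This follows by decomposing such a run at its block-completing transitions, using the factorisation claim. Together the two inclusions give that all states recognise $\Lmain$, and semantic determinism follows immediately.

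\textbf{Normality.} Normality is read off the figure. Non-significant transitions form three SCCs (top box, blue box, $\sqrstate{Y}$), and every transition changing SCC is significant. I will check, in particular, that the reset letters make the $60$ block states strongly connected.

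\textbf{Reach-covering (main obstacle).} Only $I,I_a,I_b,I_c$ are not reach-deterministic. For each of them I must name a reach-deterministic state it simulates in $\reach(\Amain)$, and language-equivalence is already known. The natural choice is $I$ simulating $\circstate{p_1}$ or $\circstate{p_4}$, and $I_a,I_b,I_c$ simulating the circular states corresponding to the right-hand classifier states $q,r,t$ (or $s$) of the same block. Eve's strategy mirrors Adam's classifier state: when Adam's block corresponds to $\alpha$ and he reads $a$ from the classifier state $p$ or $t$, Eve moves to $I_a$ exactly when Adam's block would next complete an $F$-word. This is the case analysis of \cref{fig:beatReplaceDFAs}: Adam's deterministic 5-state gadget must be shadowed by Eve's 4-state nondeterministic one.

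The difficulty is that Eve cannot know the future, so the invariant must instead be that Adam reaches the sink (leaving his block on some $\langBlock{\gamma}$ with $\gamma\ne\alpha$, or on a reset) only if Eve does. I expect to establish this by a careful correspondence between the gadget states, checking each letter. Possibly I will need to pick a different simulated state per top state, chosen so that Adam's potential completions are exactly those Eve's position covers. This letter-by-letter verification is where I expect most of the work.
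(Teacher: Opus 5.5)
Your skeleton is the paper's: language and semantic determinism, then normality, then reach-covering, then \cref{lemma:buchi-simplified-implies-hd}. The language and normality parts are fine in outline, with one slip: from $I_a$ no run ``guesses correctly on the next block''. Indeed $I_a$ is deterministic and reads $c1$ back to $I$ without a significant transition, so you must wait until the run is back in $I$.

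The genuine gap is reach-covering, the heart of the theorem, and it comes from misreading where the significant transitions of $\Amain$ are. You assume that Adam reaches the sink of $\reach(\Amain)$ when he leaves his circular block on some $\langBlock{\gamma}$ with $\gamma\neq\alpha$, or on a reset. In $\Amain$ these moves are non-significant. Significant transitions only end in $I$, in some $\circstate{p_\alpha}$ (coming from the top box), or in $\sqrstate{Y}$. The only significant transitions leaving blue-box states are the completions of $\langBlock{i}$ from the square block $i$ into $\sqrstate{Y}$.

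Under your reading, Eve would actually lose the game you set up. From $(\circstate{p_1},I)$, after Adam plays $a$: if Eve stays in $I$, Adam continues with $b4$; if she moves to $I_a$, he continues with $c2$. In both cases he then plays $\purpley^\omega$. This is exactly the future-guessing you are stuck on, and no choice of circular targets or letter-by-letter correspondence of gadget states removes it.

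The missing idea is to use square targets, where the index Eve must prepare for is already visible in Adam's current position. The paper proves that $I$ simulates $\sqrstate{p_i}$ in $\reach(\Amain)$ (\cref{lemma:simulation-Amain}). On $a$ from $I$ or $I_c$, with Adam in square block $i$, Eve goes to $I$ if $i\in E$ and to $I_a$ if $i\in F$.

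Every strategy of Eve keeps her at $I$ whenever Adam is at some $\sqrstate{p_i}$, and at $I_c$ whenever he is at some $\sqrstate{t_i}$. This is because Adam enters $\sqrstate{p_i}$ only in two ways:
\begin{itemize}
\item by $b$ from $\sqrstate{p_i}$ or $\sqrstate{t_i}$, while $I$ and $I_c$ both go to $I$ on $b$;
\item on a letter of $E\cup F\cup\{\purpley\}$, which sends every top state to $I$ or through a significant transition.
\end{itemize}
He enters $\sqrstate{t_i}$ only by $c$. Resets are significant from the whole top box, so they only help Eve.

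After Adam's last visit to a $\sqrstate{p_i}$, the remaining word lies in $\langBlock{i}\cap A^*i$. Along it, Eve's path ends with $I_c\xRightarrow{i}$ if $i\in E$, or with $I_b\xRightarrow{i}$ if $i\in F$.

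For $I_a,I_b,I_c$ the paper uses the target $\circstate{p_1}$. Adam cannot see a significant transition before entering the square row, and he enters it on a letter of $E\cup F$. That letter takes Eve to $I$ or through a significant transition, after which she switches to the previous strategy. The same argument shows that your circular targets are in fact simulated by every top state, but only once this structure is exploited.
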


By \cref{lemma:buchi-simplified-implies-hd}, being a simplified B\"uchi automaton implies being history deterministic. 
Recall that a Büchi automaton is simplified if it is semantically deterministic, saturated, and has reach-covering.

We start with a property about the languages $L_\alpha$, that follows from inspection of $\Aclassifier$.
\begin{lemma}\label{lemma:infix-prefix}
    Let $u$ be finite word ending by a suffix in a language $L_\alpha$ for some $\alpha\in [6]$ (equivalently, ending by $cE$ or by $aA^*bF$). 
    Then, $u\in L_\alpha$ if and only if it contains no proper prefix in $L_{\beta}$ and no letter $\reset{\beta}$ for any~$\beta \in [6]$.
\end{lemma}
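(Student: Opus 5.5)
The plan is to prove \cref{lemma:infix-prefix} by a direct analysis of runs of the DFA-classifier $\Aclassifier$ from its initial state $p$. The starting observation is that $\Aclassifier$ has no transition on a reset letter $\reset{\beta}$, so any word containing such a letter has no run and therefore lies in no $\langBlock{\gamma}$. Moreover, the states $\sqrstate{\ell_\gamma}$ have no outgoing transitions. A word therefore lies in some $\langBlock{\gamma}$ only if its run never visits an $\sqrstate{\ell}$-state before the last letter. This already gives the forward direction. If $u \in \langBlock{\alpha}$ had a proper prefix $v \in \langBlock{\beta}$, the unique run on $u$ would pass through $\sqrstate{\ell_\beta}$ after reading $v$ and then have to continue, which is impossible. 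And $u$ contains no reset letter.

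For the converse, the key step is to show that the run on any word over $A \cup E \cup F \cup \{\purpley\}$ that never visits an $\sqrstate{\ell}$-state always stays inside the five ``internal'' states $p,q,r,s,t$. Concretely, I would check on \cref{fig:LanguageL1} that every letter in $A\cup\{\purpley\}$ is defined from every internal state and leads to an internal state. For each internal state and each digit $d \in E\cup F$, the transition is either back to $p$ or to some $\sqrstate{\ell}$: from $p$ and $q$ every digit goes to $p$; from $t$ and $s$, letters in $E$ go to $\ell_e$ and letters in $F$ go to $p$; from $r$, letters in $E$ go to $p$ and letters in $F$ go to $\ell_f$. So $\Aclassifier$ restricted to internal states is complete over the non-reset letters. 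Hence, for $u$ without reset letters and without proper prefix in any $\langBlock{\beta}$, the run on $u$ minus its last letter exists and ends in an internal state.

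It remains to show that the last letter actually leads to $\sqrstate{\ell_\alpha}$, using the hypothesis that $u$ ends with a suffix in $\langBlock{\alpha}$, which by the stated characterisation means $u$ ends with $cE$ (for $\alpha\in E$) or with $aA^*bF$ (for $\alpha\in F$). Here I would verify a synchronisation property by inspecting the transitions. After reading $c$ from any internal state one is in $\{t,s\}$, so a following $e\in E$ goes to $\ell_e$. After reading $a$ one is in $q$, and then any word in $A^*b$ lands in $r$: from $q$, $b\mapsto r$, and the states reachable from $q$ on $A$ are $\{q,r,s\}$, from each of which $b$ leads to $r$. So a following $f$ goes to $\ell_f$. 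Since the suffix determines the final digit, the run ends in $\sqrstate{\ell_\alpha}$, i.e.\ $u\in\langBlock{\alpha}$.

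The main obstacle is not conceptual but bookkeeping. The claims about $\Aclassifier$ (completeness on internal states over non-reset letters, and the synchronisation $A^*c \to \{s,t\}$ and $aA^*b\to r$) must be checked carefully against the figure. In particular I would need to confirm that $t$ on $b$ goes back to $p$ and that the $A$-closure of $q$ is exactly $\{q,r,s\}$, so that no path escapes to $t$ before the $b$.
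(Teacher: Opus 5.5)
Your proof is correct and follows the paper's route. The paper justifies this lemma only by ``inspection of $\Aclassifier$'', and your argument is exactly that inspection carried out: no reset transitions and no transitions out of any $\ell_\gamma$ for the forward direction, and for the converse, completeness of the internal states $p,q,r,s,t$ over non-reset letters together with the facts that $c$ sends every internal state into $\{s,t\}$ and $aA^*b$ sends every internal state to $r$. The transitions you flag ($t\xrightarrow{b}p$, and the $A$-closure of $q$ being $\{q,r,s\}$) do hold in \cref{fig:LanguageL1}. The one step you take on trust is the parenthetical equivalence with ending in $cE$ or $aA^*bF$. It follows from the same inspection: $\ell_e$ is entered only from $t$ or $s$, which are entered only on $c$; $\ell_f$ is entered only from $r$, which is entered only on $b$ from $\{q,r,s\}$, a set entered only on letters of $A$ and from outside only on $a$.
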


We also note the following fact, which we will use to show reach-covering and prove that the language of $\Amain$ is $\Lmain$.
\begin{lemma}\label{lemma:simulation-Amain}
The automaton $(\reach(\Amain),I)$ simulates $(\reach(\Amain),\sqrstate{p_i})$, for each $i\in[6]$.
\end{lemma}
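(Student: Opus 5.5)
The plan is to give Eve an explicit strategy in the simulation game on $\reach(\Amain)$, with Adam's token starting at $\sqrstate{p_i}$ and Eve's at $I$. Then I would show by an invariant that whenever Adam's token reaches the accepting sink, Eve's token has already reached it or reaches it at the same step. Since $(\reach(\Amain),\sqrstate{p_i})$ is deterministic until the sink, Adam's only real choice is the input word. In $\reach(\Amain)$ the significant transition leaving the lower row is the one into $\sqrstate{Y}$; any other significant transitions out of the square block already go to the sink. So Adam reaches the sink exactly when his deterministic run from $\sqrstate{p_i}$ first completes a factor that sends it along a significant transition. By construction this happens upon reading a word in $L_i$ (or after moving to another square block and reading a word in that block's language).

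Eve's strategy mirrors the HD-resolver described informally above. She remembers the index $\alpha$ of the block Adam is currently ``tracking''. While Adam's token is in square block $\alpha$, the position of his token inside the 5-state gadget $\sqrstate{p_\alpha},\dots,\sqrstate{t_\alpha}$ is exactly the state of $\Aclassifier$ on the suffix read since the last completed factor. Eve resolves the nondeterminism on $a$ from $I$ or $I_c$ as follows: she goes to $I$ if $\alpha\in E$, and to $I_a$ if $\alpha\in F$.

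The key invariant concerns the suffix $v$ read since the last completed $L$-factor or $\purpley$. Let $\sigma$ be Adam's gadget state, which equals the $\Aclassifier$-state on $v$. The invariant says Eve's state in $\{I,I_a,I_b,I_c\}$ is the ``$E$-projection'' or ``$F$-projection'' of $\sigma$, according to whether $\alpha\in E$ or $\alpha\in F$. Concretely, it is the state that the left (resp.\ right) 5-state gadget of \cref{fig:beatReplaceDFAs} collapses $\sigma$ to under the 4-state nondeterministic gadget. I would verify this by a case check over each letter of $A$ and $\purpley$, reading off \cref{fig:transitionsoverabcy}. I would also check that a completed factor in $L_\beta$ moves Adam to $\sqrstate{p_\beta}$ (or to the sink when $\beta=\alpha$), moves Eve either to the sink or to $\circstate{p_\beta}$, and resets Adam's gadget state to the initial one. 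On reset letters $\reset{j}$, Eve's transition from the top box is significant, so she reaches the sink immediately and wins.

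The crux is what happens when Adam completes a factor. Suppose Adam's token in block $\alpha$ completes a word of $L_\alpha$. By the invariant and the fact that Eve guessed the correct side ($E$ vs.\ $F$ matches $\alpha$), Eve's token simultaneously completes the corresponding $L_\alpha$ factor from the top box. That is a significant transition, so it goes to the sink in $\reach(\Amain)$. Now suppose Adam instead completes $L_\beta$ with $\beta\neq\alpha$. Adam moves to $\sqrstate{p_\beta}$ without reaching the sink. Eve either takes a significant transition, and wins, or returns to $I$ via the reset transitions labelled by $E$/$F$ in the top box. Eve then updates $\alpha:=\beta$, and the invariant holds again with both gadget states reset.

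I expect the main obstacle to be the invariant's case analysis on the gadget. The 4-state nondeterministic gadget must be shown to track the 5-state deterministic one under the chosen guess. In particular, from $\sqrstate{r}$/$\sqrstate{s}$-type states after the letters $c$ and $b$, the correct guess made at the most recent $a$ must still keep Eve on the side that completes the same $E$- or $F$-factor. I would handle this by tabulating the product of Adam's 5 gadget states with Eve's 4 states, restricted to the reachable pairs for each of the two values of $\alpha$ (about ten pairs each), and checking closure under the letters $a,b,c$, together with the factor-completing moves on $E\cup F$.
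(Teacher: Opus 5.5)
Your proposal is correct, and Eve's strategy is exactly the paper's: on $a$ from $I$ or $I_c$, she moves to $I$ if Adam's current square block has index in $E$, and to $I_a$ if it is in $F$. Where you differ is in how the strategy is certified.

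\textbf{Your route.} You maintain a full product invariant in which Eve's state is a function of Adam's gadget state. Your description of this function via the two gadgets of the figure is garbled, so state it explicitly:
\begin{itemize}
\item for $\alpha\in E$: $\sqrstate{p_\alpha},\sqrstate{q_\alpha},\sqrstate{r_\alpha}\mapsto I$ and $\sqrstate{s_\alpha},\sqrstate{t_\alpha}\mapsto I_c$;
\item for $\alpha\in F$: $\sqrstate{p_\alpha}\mapsto I$, $\sqrstate{t_\alpha}\mapsto I_c$, $\sqrstate{q_\alpha},\sqrstate{s_\alpha}\mapsto I_a$ and $\sqrstate{r_\alpha}\mapsto I_b$.
\end{itemize}
This gives exactly five pairs per side, not about ten. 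The letter-by-letter check then goes through. A completion of $L_\alpha$ by Adam is matched by $I_c\xRightarrow{e}$ or $I_b\xRightarrow{f}$. A completion of $L_\beta$ with $\beta\neq\alpha$ either sends Eve to $I$, which is the image of $\sqrstate{p_\beta}$ on both sides, or already gives her a significant transition.

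\textbf{The paper's route.} It avoids this table. It uses only a weak invariant, which holds for every Eve strategy: $\sqrstate{p_i}$ is paired with $I$, and $\sqrstate{t_i}$ with $I_c$. It then inspects the final factor $u_i\in L_i\cap A^*i$ read after Adam's last visit to a $\sqrstate{p}$-state. This factor has the form $c^+i$ or $u_0au'ci$ with $u'\in(b+c)^*$ when $i\in E$, and $(b+c)^*aA^*bi$ when $i\in F$. Eve's matching path to $I_c\xRightarrow{i}$ or $I_b\xRightarrow{i}$ is read off directly.

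\textbf{Comparison.} The paper's argument is shorter. Yours yields an explicit simulation relation and makes visible that Eve sometimes reaches the sink while Adam does not. For example, Eve takes $I_c\xRightarrow{e}$ on a letter of $E$ while Adam, in an $F$-block, merely moves to $\sqrstate{p_e}$.
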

\begin{proof}
We will describe a winning strategy for Eve in the corresponding simulation game. We denote the positions in this game $(p_i , I_\sigma)$;  until one of the players reaches a significant transition, Adam's token remains in states in the big blue box, and Eve's token in states in $B_{\mathtt{top}}$. The strategy will maintain the following invariant: 
    \vspace{-1mm}
    \begin{center}
        \begin{itemize}
            \item If Adam's token is in some state $\sqrstate{p_i}$, then Eve's token is in $I$.
            \item If Adam's token is in some state $\sqrstate{t_i}$, then Eve's token is in $I_c$.
        \end{itemize}
    \end{center}
In fact, every strategy for Eve satisfies this invariant.
    \begin{claim}
        Every possible strategy for Eve satisfies the above invariant.
    \end{claim}
    \begin{claimproof}
        First, note that Adam's token can only enter a state $\sqrstate{p_i}$ either upon reading a letter in $E+F+y$, or by taking a transition $\sqrstate{p_i}, \sqrstate{t_i}\xrightarrow{b} \sqrstate{p_i}$. 
        In the first case, for all these letters, the transitions of Eve's token are of the form $B_{\mathtt{top}} \to I$ (or a significant transition).
        Similarly, Adam's token can only enter a state $\sqrstate{t_i}$ by taking a transition $\sqrstate{p_i}, \sqrstate{t_i} \xrightarrow{c} \sqrstate{t_i}$.

        Using these remarks, we can prove the claim by induction, distinguishing several cases:
        \begin{itemize}
            \item Assume Adam's token 
        follows a transition with a letter in $E+F+y$, entering $\sqrstate{p_i}$; then, Eve's token either follows a transition $B_{\mathtt{top}} \to I$, or takes a significant transition.
         \item If Adam's token is in $\sqrstate{p_i}$ and takes a transition $\sqrstate{p_i} \xrightarrow{b} \sqrstate{p_i}$; then, by induction hypothesis, Eve's token is in $I$ and follows the transition $I \xrightarrow{b} I$.
        \item If Adam's token is in $\sqrstate{p_i}$ and takes a transition $\sqrstate{p_i} \xrightarrow{c} \sqrstate{t_i}$; then, by induction hypothesis, Eve's token is in $I$ and follows the transition $I \xrightarrow{c} I_c$.
        \item If Adam's token is in $\sqrstate{t_i}$ and takes a transition $\sqrstate{t_i} \xrightarrow{b} \sqrstate{p_i}$; then, by induction hypothesis, Eve's token is in $I_c$ and follows the transition $I_c \xrightarrow{b} I$.
        \item If Adam's token is in $\sqrstate{t_i}$ and takes a transition $\sqrstate{t_i} \xrightarrow{c} \sqrstate{t_i}$; then, by induction hypothesis, Eve's token is in $I_c$ and follows the transition $I_c \xrightarrow{c} I_c$.
        \end{itemize}
    \end{claimproof}
    The only positions where Eve has a choice to make is when her token is in state $I$ or $I_c$ and Adam gives letter $a$. 
    We define Eve's choice in positions of the form $(\sqrstate{p_i}, I)$ and $(\sqrstate{t_i}, I_c)$ over $a$.
    If $i\in E$, Eve stays in (or goes to) $I$, that is, she follows $I\xrightarrow{a} I$ or $I_c\xrightarrow{a} I$.
    If $i\in F$, Eve moves to the right, that is, she follows $I\xrightarrow{a} I_a$ or $I_c\xrightarrow{a} I_a$.
    Over all other positions of the game, we define Eve's choice arbitrary (note that in most of them she has no choice at all).

    We show that this is a winning strategy. Assume that in a play, Adam reaches a significant transition. We show that Eve does too.
    Consider the last moment during the play when Adam's token is in a state of the form $\sqrstate{p_i}$, and let $u_i$ be the suffix of the word of the play from this moment.
    We have that $u_i\in L_i$, and moreover, $u_i \in A^* i$ (as $\sqrstate{p_i}$ is not visited again).
    Assume that $i\in E$ (Eve decides to stay on the left part of $B_{\mathtt{top}}$). If $u_i$ does not contain letter $a$, then it is of the form $u_i = c^+i$ (this is immediate from the transitions in \cref{fig:transitionsoverabcy} and in the $i^\text{th}$ box in the bottom row of \cref{fig:transitionsoverEFr}).
    Therefore, Eve sees the significant transition $I_c \xRightarrow{i}$ at the same time as Adam.
    If $u_i$ contains letter $a$, then it must be of the form $u_i = u_0 a u_i' c i$, with $u_i'\in (b+c)^*$. Then, Adam necessarily follows the path:
    $$ \sqrstate{p_i} \xrightarrow{u_0}\cdot \xrightarrow{a} \sqrstate{q_i} \xrightarrow{u_i'} \cdot \xrightarrow{c} \sqrstate{s_i} \xRightarrow{i}.  $$
    On the left part of $B_{\mathtt{top}}$, Eve's token follows the path:
    $$ I \xrightarrow{u_0}\cdot \xrightarrow{a} I \xrightarrow{u_i'} \cdot \xrightarrow{c} I_c \xRightarrow{i},  $$ as wanted.

    Finally, assume that $i\in F$ (Eve decides to go to the right part of $B_{\mathtt{top}}$). In this case, $u_i$ must be of the form $(b+c)^* a A^* b i$.
    Then Eve's token follows the path
    $$ I \xrightarrow{(b+c)^*}\cdot \xrightarrow{a} I_a \xrightarrow{A^*} \cdot \xrightarrow{b} I_b \xRightarrow{i},  $$
    and also produces a significant transition.
\end{proof}

In \cref{prop:language-A65,prop:normal-Amain,prop:reach-covering-for-amain}, we prove that $\Amain$ is simplified and recognises $\Lmain$.

\begin{restatable}{proposition}{LanguagemainisSD}\label{prop:language-A65}
    Every state in $\Amain$ recognises the language $\Lmain$ and $\Amain$ is semantically~deterministic.
\end{restatable}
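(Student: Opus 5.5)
I will follow the structure of the proof of \cref{subclaim-sd-astrong}. Semantic determinism is immediate once I show that every state of $\Amain$ recognises $\Lmain$, since then all states are language-equivalent. So the whole task is to prove $L(\Amain,q)=\Lmain$ for every state $q$, and I will prove the two inclusions separately.

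I first note that $\Lmain$ is prefix-independent: $\sigma^{-1}\Lmain=\Lmain$ for every $\sigma\in\Sigma$, which follows directly from its definition as an $\omega$-power of $\Sigma^*$-padded infixes. Because of this, for the inclusion $\Lmain\subseteq L(\Amain,q)$ it suffices to show $\Lmain\subseteq L(\reach(\Amain),q)$ for every state $q$. Induction then builds a run with infinitely many significant transitions.

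Given $w\in\Lmain$, any run eventually reads a $\purpley$, and over $\purpley$ every state moves to one of $I$, $\circstate{p_i}$, $\sqrstate{p_i}$, or (from $Y$) takes a significant transition. I will distinguish the following cases.
\begin{enumerate}
\item From $\sqrstate{p_i}$, I use \cref{lemma:simulation-Amain} in reverse: I will argue directly. By \cref{lemma:infix-prefix}, $w$ after the $\purpley$ factorises uniquely into $L$-blocks (and reset letters) until the next $\purpley$. The deterministic run in the bottom row moves from $\sqrstate{p_\alpha}$ to $\sqrstate{p_\beta}$ on a block of $L_\beta$ with $\beta\ne\alpha$, and takes a significant transition on a block of $L_\alpha$. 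Since $w$ contains infinitely many factors $L_\beta L_\beta$, some two consecutive blocks carry the same index and the run hits a significant transition.
\item From $\circstate{p_i}$, the next block of index $\ne i$ (or a reset) is followed later by such a repetition. The top row either resets on $L_i$, or moves to the bottom row on $L_\beta$ with $\beta\ne i$, where the first case applies. The one thing to check is that a word in $\Lmain$ cannot consist only of $L_i$-blocks forever, since it needs factors $L_\alpha L_\beta L_\beta$ with $\alpha\ne\beta$.
\item From $I$, I invoke \cref{lemma:simulation-Amain}: $(\reach(\Amain),I)$ simulates $(\reach(\Amain),\sqrstate{p_i})$. Simulation gives the inclusion of the reach-languages, so the first case transfers to $I$. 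The states $I_a,I_b,I_c$ and the intermediate block states are handled by following the run until the next $\purpley$ or $E/F$ letter.
\end{enumerate}

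For the converse inclusion $L(\Amain,q)\subseteq\Lmain$, I observe that every accepting run must take $Y\xRightarrow{\purpley}I$ infinitely often. The other significant transitions leave their SCC, so a run cannot cycle on them forever without passing through $Y$. I then characterise the finite words labelling a path from any state to the first use of that transition. Segmenting such a path at significant transitions and using \cref{lemma:infix-prefix} to identify blocks, the path from the top box to $\circstate{p_\alpha}$ reads a word ending in $L_\alpha$ or $\reset{\alpha}$. The path on to $\sqrstate{p_\beta}$ and then to $Y$ reads $L_\beta L_\beta$, possibly after detours that reset the row. I need to check that the last two $L$-blocks before entering $Y$ are $L_\beta L_\beta$ preceded by $L_\alpha$ or $\reset{\alpha}$ with $\alpha\neq\beta$, since $\sqrstate{p_\beta}$ is only entered from a state recording a different index or via a reset. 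This shows each segment lies in $\Sigma^*\langInfix{\alpha}{\beta}\Sigma^*\purpley$, hence the word is in $\Lmain$.

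The main obstacle is this last bookkeeping step. I must verify, by inspecting \cref{fig:transitionsoverabcy,fig:transitionsoverEFr}, that every entry into the bottom box $\sqrstate{p_\beta}$ and every move to $Y$ corresponds exactly to completing blocks in the right order. Resets through $\reset{i}$ and $\purpley$ mid-block must not create spurious matches; \cref{lemma:infix-prefix} controls exactly those mid-block resets.
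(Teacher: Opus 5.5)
Your overall plan matches the paper's. Prefix-independence gives semantic determinism, $\Lmain\subseteq L(\Amain,q)$ is reduced to $\Lmain\subseteq L(\reach(\Amain),q)$, the top box is handled via \cref{lemma:simulation-Amain}, and the converse uses that accepting runs take $Y\xRightarrow{\purpley}I$ infinitely often. Your treatment of $I$ is fine: you transfer the whole inclusion from $\sqrstate{p_i}$ by simulation, rather than proving $L_\alpha\Sigma^\omega\subseteq L(\reach(\Amain),I)$ as the paper does, and this works because the blue box can only be left through a significant transition. The problem is the blue-box argument, which is the real crux and does not go through as written.

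In case 1 you conclude that the run from $\sqrstate{p_i}$ hits a significant transition as soon as two consecutive blocks carry the same index. This ignores reset letters: $\reset{j}$ sends the run to $\circstate{p_j}$ in the top row, where two equal-index blocks just reset the box. For example, from $\sqrstate{p_2}$ on $\reset{1}v_1v_1$ with $v_1\in L_1$, no significant transition occurs. So ``infinitely many factors in $L_\beta L_\beta$'' is the wrong hypothesis; you must use a factor $(L_\alpha+\reset{\alpha})L_\beta L_\beta$ with $\alpha\neq\beta$. As it stands, your cases 1 and 2 invoke each other with no argument that the alternation stops. In case 2 the box index also changes on reset letters, so ``not only $L_i$-blocks forever'' is not the relevant condition.

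There is a second gap. Passing from ``$w$ has a factor in $L_\beta L_\beta$'' to ``the run parses two consecutive $L_\beta$-blocks'' is exactly the synchronisation that needs proof, since the run starts at an arbitrary position relative to the factor.

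The paper settles both points with one local argument. Take a factor $u_\alpha u_\beta u'_\beta$ with $u_\alpha\in L_\alpha+\reset{\alpha}$, $u_\beta,u'_\beta\in L_\beta$ and $\alpha\neq\beta$, occurring after the current position. If $u_\alpha=\reset{\alpha}$, the run jumps to $\circstate{p_\alpha}$. Otherwise $u_\alpha$ ends with $c\alpha$ (when $\alpha\in E$) or $aA^*b\alpha$ (when $\alpha\in F$), and reading such a suffix from any state of any box completes an $L_\alpha$-block. Hence after $u_\alpha$ the run is in $\circstate{p_\alpha}$ or $\sqrstate{p_\alpha}$, or has already taken a significant transition; in particular it is not in $\circstate{p_\beta}$. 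From there $u_\beta$ leads to $\sqrstate{p_\beta}$ and $u'_\beta$ leads to $Y$. Replacing your cases 1 and 2 with this argument repairs the proof.

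Two minor points. Words of $L_\alpha$ may contain $\purpley$, so the factorisation must not be cut ``until the next $\purpley$''. Also, the converse inclusion, which you flag as the main obstacle, is the routine part.
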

\begin{proof}
    We remark that the language $\Lmain$ is prefix-independent. Therefore, semantic determinism of $\Amain$ follows from the first part of the claim.
    
    We first show that $\Lmain \subseteq L(\Amain,q)$ for every state $q$. We will do this by inductively building an accepting run from $q$ on an arbitrary word in $\Lmain$. 
    At each step (except possibly the initialization step), we will make sure that a significant transition is taken.
    Note that all significant transitions of $\Amain$ end in some state in 
    $$ \Qsignificant = \{I\} \;  \cup \bigcup_\alpha \{ \circstate{p_{\alpha}} \}   \; \cup \{Y\}.$$
    As $\Lmain$ is prefix-independent, every suffix of  a word in $\Lmain$ also belongs to $\Lmain$.
    Therefore, it suffices to show that every word in $\Lmain$ admits a run reaching a significant transition from every state $q\in \Qsignificant$, that is, $\Lmain \subseteq L(\reach(\Amain),q)$, for all $q\in \Qsignificant$.
    \textit{Initialization step.} We show that a prefix of an arbitrary run reaches some state in $\Qsignificant$.
    Note that every word $w\in \Lmain$ contains infinitely many $\purpley$'s.
    Upon reading the letter $\purpley$, every run reaches some state in $$\Qsignificant \cup \bigcup_\alpha \{ \sqrstate{p_{\alpha}}\}.$$ 
    If the run reaches $\Qsignificant$ we are done. Assume that we reach a state $\sqrstate{p_{\alpha}}$. Let $w'$ be the suffix of $w$ from this point. Note that $w'$ must contain an infix in $L_\beta L_\beta$ for some $\beta \in [6]$: we consider the earliest  occurrence of such an infix. Then, from \cref{lemma:infix-prefix}, $w'$ either contains a reset letter or this prefix of $w'$ can be decomposed as $L_{\alpha_1}\dots L_{\alpha_k}L_{\beta} L_{\beta}$. If $w'$ contains a reset letter, the run reaches a state in $\Qsignificant$ upon reading that letter. Otherwise, if $\alpha_1 \neq \alpha$, then the run reaches $Y$ upon reading the prefix in $L_{\alpha_1}$. If $\alpha=\alpha_1$, then the run reaches $Y$ upon reading the first infix for which $\alpha_i = \alpha_{i+1}$ or after $L_{\beta} L_{\beta}$ if  $\alpha_i \neq \alpha_{i+1}$ for all $i$.

    \textit{Induction step.} To build the run in the induction step, we distinguish three cases, according to the type of state. 
    From $Y$, we see a significant transition upon reading $\purpley$.
    From $I$, we note the following claim.
    \begin{claim}
            $L(\reach(\Amain),I)$ contains the words in  $\bigcup_{\alpha} (L_\alpha + \Sigma^*\reset{\alpha})$.
    \end{claim}
    \begin{claimproof}
        Note that $L_{i} \subseteq L(\reach(\Amain),\sqrstate{p_i})$, for each $i\in[6]$. Since $(\reach{\Amain},I)$ simulates $(\reach(\Amain),\sqrstate{p_i})$ for each $i$ (see \cref{lemma:simulation-Amain}), we obtain that $L_i\subseteq L(\reach{\Amain},I)$. It is clear that each $\reset{i}$ is also in $L(\reach{\Amain},I)$, and thus, the claim follows.
    \end{claimproof}
    Since $w$ contains an infix in some $L_\alpha$, by \cref{lemma:infix-prefix}, the word $w$ contains a prefix in this language.
    The most challenging case is when $q = \circstate{p_{i}}$, for some $i$. 
    Since $w\in \Lmain$, it has a prefix of the form 
    \[ u_0 u_\alpha u_\beta u_\beta', \;\; \text{ with } \; u_\alpha \in (L_\alpha + \reset{\alpha}), \; u_\beta, u_\beta'\in L_\beta, \text{ and } \alpha\neq \beta. \]
    Consider an arbitrary run from $\circstate{p_{i}}$ over this word.
    We only need to show that some prefix of this run contains a significant transition. If there is no such significant transition seen, the run stays in the (deterministic) blue box.
    If $u_\alpha =  \reset{\alpha}$, the run over $u_0u_\alpha$ finishes in $\circstate{p_\alpha}$. Since $\alpha\neq \beta$, we have the following run  
    $$ \circstate{p_\alpha} \xrightarrow{u_\beta} \sqrstate{p_\beta} \xrightarrow{u_\beta'} \;\; \xRightarrow{}.$$
    We, therefore, focus on the case that $u_\alpha \in L_\alpha$. The last letter of $u_\alpha$ then must be $\alpha$.
    After reading $\alpha$, the automaton is necessarily in some state $\circstate{p_\gamma}$ or $\sqrstate{p_{\gamma}}$.
    \footnote{Recall that $\alpha\in \{1,2,3,4,5,6\}$, and also $E,F\subseteq [6]$, so $\alpha$ may appear in the transitions $E,F$ of the automaton.}
    From all these states, except from $\circstate{p_\beta}$, we reach a significant transition upon reading $u_\beta u_\beta'$.
    We argue that after reading $u_\alpha$ we cannot end up in $\circstate{p_\beta}$.
    Assume by contradiction that this is the case. 
    Then, it must be the case that the two predecessors of $\circstate{p_\beta}$ in the run are in the box $B_{\beta}$ on the top row, where 
    $$B_\beta = \{\circstate{p_\beta}, \circstate{q_\beta}, \circstate{t_\beta}, \circstate{s_\beta}, \circstate{r_\beta}\}.$$

    We distinguish the cases $\alpha\in E$ and $\alpha\in F$.
    If $\alpha\in E$, then $u_\alpha$ ends by $c\alpha$. However, we have 
    $$B_\beta \xrightarrow{c} \{ \circstate{t_\beta}, \circstate{s_\beta} \} \xrightarrow{\alpha} \circstate{p_\alpha} \neq \circstate{p_\beta} \;, \; \text{ a contradiction.}$$

    If $\alpha\in F$, then $u_\alpha$ ends by $aA^*b\alpha$. However, we have 
    $$B_\beta \xrightarrow{aA^*} \{ \circstate{q_\beta}, \circstate{s_\beta}, \circstate{r_\beta} \} \xrightarrow{b} \circstate{r_\beta} \xrightarrow{\alpha} \circstate{p_\alpha} \neq \circstate{p_\beta}, \ \ \text{a contradiction}.$$ This shows that $\Lmain\L(\Amain,q)$ for every state $q$.

    For the converse inclusion $L(\Amain,q) \subseteq \Lmain$ for every state $q$, let $w$ be a word in $L(\Amain,q)$.
   Then $w$ contains infinitely many $\purpley$'s, as every accepting run of $\Amain$ must go through $Y$ infinitely often.
    To show that $w$ contains infinitely many infixes in $(L_\alpha + \reset{\alpha})L_\beta L_\beta$, we observe the following facts. By ``consecutive occurrence'' we mean that no state of type $\circstate{p_\gamma}$ or $\sqrstate{p_\gamma}$ appears in between.
    \begin{itemize}
        \item If $u$ labels a path $I\xrightarrow{u} \; \xRightarrow{} \circstate{p_\alpha}$, then $u$ contains a suffix in $L_\alpha + \reset{\alpha}$.
        \item If $u$ labels a path between two consecutive occurrences of $\circstate{p_\alpha}$ and $\sqrstate{p_\beta}$, then $u\in L_\beta$ and~$\alpha\neq \beta$.
        \item If $u$ labels a path between two consecutive occurrences of $\sqrstate{p_\alpha}$ and $\sqrstate{p_\beta}$, then $u\in L_\beta$ and~$\alpha\neq \beta$.
        \item If $u$ labels a path between two consecutive occurrences of $\sqrstate{p_\alpha}$ and $\circstate{p_\beta}$, then $u \in \Sigma^*\reset{\beta}$.
        \item If $u$ labels a path between two consecutive occurrences of $\sqrstate{p_\beta}$ and $Y$, then $u\in L_\beta$.
    \end{itemize}
    Consider a significant transition ending in $Y$ on an accepting run over $w$ in $\Amain$. 
    Then,  $w$ contains an infix in $(L_\alpha + \reset{\alpha})L_\beta L_\beta$ for $\alpha\neq \beta$.
    This easily follows using the above properties, by considering the $3$ last appearances of $p$-states or the state $I$, and doing a distinction of six cases according to whether these states are $\sqrstate{p}$, $\circstate{p}$ or $I$.
\end{proof}

\begin{proposition}\label{prop:normal-Amain}
    $\Amain$ is normal.
\end{proposition}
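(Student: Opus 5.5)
The definition of normality requires that every non-\significant transition of $\Amain$ lie inside some strongly connected component of the graph $G$ formed by the non-\significant transitions only. The plan is therefore to identify the SCCs of $G$ explicitly and then check, letter by letter, that no non-\significant transition connects two different SCCs. The paper claims there are exactly three SCCs: the top box $B_{\mathtt{top}}=\{I,I_a,I_b,I_c\}$, the big blue box $B$ consisting of the $60$ circular and square block states, and $\{\sqrstate{Y}\}$. The argument then has two steps: (i) each of these three sets is strongly connected in $G$, and (ii) every non-\significant transition has its source and target in the same set.

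For (i), $\{\sqrstate{Y}\}$ is strongly connected because it has non-\significant self-loops, for instance on $a$ and on the reset letters. For $B_{\mathtt{top}}$, I would read the non-\significant $a,b,c$ transitions off \cref{fig:transitionsoverabcy}. These mirror the left gadget of \cref{fig:beatReplaceDFAs}: $I\xrightarrow{c}I_c\xrightarrow{b}I$, $I\xrightarrow{a}I_a\xrightarrow{b}I_b\xrightarrow{a}I_a$, and letters in $E\cup F$ lead from $I_a$ back to $I$. Together these give a cycle through all four states.

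For $B$, I would first show that each $5$-state block is strongly connected. Within a block, the $A$-transitions together with the $\purpley$-transitions back to $p_i$ give paths $p_i\to q_i\to r_i\to s_i$ and $p_i\to t_i$, and every state returns to $p_i$ on $\purpley$. This mirrors the internal strong connectivity of $\Aclassifier$. Next, the blocks must be glued together. From any state of $B$, the letter $\reset{j}$ leads non-\significantly to $\circstate{p_j}$; this is exactly why the reset letters were introduced, as in $\Aweak$. So every circular block is reachable from every block. Finally, each square block $\sqrstate{p_\beta}$ must be reachable from some circular block. This holds because reading a word of $L_\beta$ from $\circstate{p_\alpha}$ with $\alpha\neq\beta$ leads non-\significantly to $\sqrstate{p_\beta}$, as stated in the description of the working of the automaton.

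For (ii), I would enumerate the transitions leaving each SCC. From $B_{\mathtt{top}}$, all transitions into $B$ are the \significant ones: those on $E\cup F$ from $I_c$ and $I_b$ going to $\circstate{p_\alpha}$, and those on the reset letters. From $B$, the only transitions leaving to $\sqrstate{Y}$ are the \significant ones from the square blocks on reading $L_\beta$ again. From $\sqrstate{Y}$, the only transition leaving is the \significant $\purpley$-transition to $I$. Within $B$, transitions between circular and square blocks stay inside $B$ and are therefore harmless.

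The main obstacle is purely bookkeeping. Because of the box shorthands in \cref{fig:transitionsoverabcy,fig:transitionsoverEFr}, one must verify that every transition drawn as non-\significant, in particular those labelled by $E$, $F$ and $\purpley$, has its target in the same SCC as its source. I would carry out this check systematically, going through the letter classes $A$, $\purpley$, $E\cup F$ and reset letters separately for each of the three SCCs.
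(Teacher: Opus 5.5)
Your proposal is correct and follows the paper's route. The paper observes that the non-\significant transitions form exactly three SCCs (the top box $\{I,I_a,I_b,I_c\}$, the $60$ block states, and $\sqrstate{Y}$) and that every transition between them is \significant; you carry out the strong-connectivity and boundary checks explicitly. Two small points: only $E$-letters from $I_c$ and $F$-letters from $I_b$ leave the top box (the remaining letters of $E\cup F$ return to $I$), and you should state explicitly that no transition leads from the block states back to the top box. Neither affects the argument.
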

\begin{proof}
    It suffices to observe that each rectangle in the image represents a strongly connected component induced by non-significant transitions. The rectangle containing the states $I,I_a,I_b$, and $I_c$ forms one such strongly connected component,  all the other states but $Y$
 forms  another such strongly connected component, and finally the state $Y$ alone also forms the final component. Every transition  between these three components are significant. 
\end{proof}

\begin{restatable}{proposition}{LanguagemainReachCover}\label{prop:reach-covering-for-amain}
    $\Amain$ has reach-covering.
\end{restatable}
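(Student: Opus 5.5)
To prove reach-covering, we first identify the non-\good states of $\Amain$. In $\reach(\Amain)$, significant transitions go to the accepting sink. The big blue box of $60$ block states and the state $\sqrstate{Y}$ are deterministic on non-significant transitions. Their accessible part in $\reach(\Amain)$ stays inside the blue box and $\sqrstate{Y}$ until a significant transition occurs: from $\sqrstate{Y}$ the only exit is the significant $\purpley$-transition, and the reset transitions from the box into $\circstate{p_i}$ are non-significant but remain in the deterministic box. So all $61$ of these states are \good. The only non-deterministic choices are at $I$ and $I_c$ on the letter $a$, so the states $I,I_a,I_b,I_c$ are not \good, since $I_a,I_b,I_c$ all reach $I$ non-significantly. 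By the remark after the definition of reach-covering, it suffices to exhibit, for each $\sigma\in\{I,I_a,I_b,I_c\}$, a \good state $q$ such that $\sigma$ and $q$ are language-equivalent and $\sigma$ simulates $q$ in $\reach(\Amain)$.

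Language-equivalence is immediate from \cref{prop:language-A65}, since every state recognises $\Lmain$. For $I$ we take $q=\sqrstate{p_i}$ for any $i$, and \cref{lemma:simulation-Amain} gives the simulation directly. For the other three states we pair them with the matching square states. We take $I_c$ with $\sqrstate{t_i}$, $I_a$ with $\sqrstate{q_i}$ for some $i\in F$, and $I_b$ with $\sqrstate{r_i}$ for some $i\in F$. Then we reuse the strategy of \cref{lemma:simulation-Amain} started from the corresponding position.

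For $(\sqrstate{t_i},I_c)$, this position already satisfies the invariant used in the proof of \cref{lemma:simulation-Amain}, and that proof handles a play from any position satisfying the invariant. For $(\sqrstate{q_i},I_a)$ and $(\sqrstate{r_i},I_b)$ with $i\in F$, we extend the invariant. While Adam's token stays in $\{\sqrstate{q_i},\sqrstate{r_i},\sqrstate{s_i}\}$ reading letters of $A$, Eve's token stays in $\{I_a,I_b\}$, and Eve is at $I_b$ exactly when the last letter read was $b$. Here we check the transitions of the top box on $a,b,c$ from \cref{fig:transitionsoverabcyCOPY}.

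Adam reaches a significant transition from these states only by reading $b f$ with $f\in F$, ending at $\sqrstate{r_i}$ followed by $f$. In that case Eve is at $I_b$ and also sees $I_b\xRightarrow{f}$. If instead Adam reads a letter in $E\cup F\cup\{\purpley\}$ non-significantly, he returns to $\sqrstate{p_i}$ and Eve returns to $I$, re-establishing the invariant of \cref{lemma:simulation-Amain}. On a reset letter, both tokens take transitions to $\circstate{p_\alpha}$, and Eve's transition is significant, so she wins.

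The main obstacle is this case analysis for $I_a$ and $I_b$. We must verify, letter by letter, that the top box tracks the square block in the right part, and in particular that a significant $E$-transition for Adam from $\sqrstate{s_i}$ is matched by Eve. If it is not, we will instead choose the \good partner of $I_a$ and $I_b$ among states whose significant exits are only on $F$ after $b$, namely $\sqrstate{q_i},\sqrstate{r_i}$ with $i\in F$. We also need to check that from $I_a$ reading $c$ leads to a state that matches Adam at $\sqrstate{s_i}$.
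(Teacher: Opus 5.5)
Your proof is correct, but for $I_a,I_b,I_c$ you choose different partners than the paper.

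\textbf{The paper's route.} It pairs all three with the single upper-row state $\circstate{p_1}$ and uses a waiting argument. From the upper row, Adam can see a significant transition only after entering the lower row at some $\sqrstate{p_j}$ on a letter of $E\cup F$; a reset letter instead makes Eve's move significant. On such a letter, every state of the top box either moves to $I$ or takes a significant transition. So Eve may play arbitrarily until that round and then switch to the strategy of \cref{lemma:simulation-Amain}. This is uniform and needs no inspection of the right part of the gadget and no choice of index.

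\textbf{Your route.} Your partners $\sqrstate{t_i}$, $\sqrstate{q_i}$, $\sqrstate{r_i}$ are exactly the positions that Eve's strategy from $(\sqrstate{p_i},I)$ passes through, which is why the tracking works. The restriction $i\in F$ is genuinely necessary. For $i\in E$, Adam plays $c\,i\,b^\omega$ from $(\sqrstate{q_i},I_a)$: he sees $\sqrstate{s_i}\xRightarrow{i}$, while Eve goes $I_a\xrightarrow{c}I_a\xrightarrow{i}I$ and then stays in $I$.

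Your choice of partners even lets you skip the letter-by-letter check. Consider the moves $(\sqrstate{p_i},I)\xrightarrow{c}(\sqrstate{t_i},I_c)$ and, for $i\in F$, $(\sqrstate{p_i},I)\xrightarrow{a}(\sqrstate{q_i},I_a)\xrightarrow{b}(\sqrstate{r_i},I_b)$. None of them contains a significant transition, and each is forced for any winning strategy of Eve: answering $a$ with $I$ loses to $b\,i\,b^\omega$. Hence the residual of any winning strategy from $(\sqrstate{p_i},I)$ wins from these positions, using only the statement of the lemma.

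\textbf{Small inaccuracies in your write-up (none fatal).}
\begin{itemize}
\item From $\sqrstate{s_i}$, a letter $e\in E$ sends Adam to $\sqrstate{p_e}$, not $\sqrstate{p_i}$. This still gives a position satisfying the lemma's invariant.
\item From $\sqrstate{r_i}$, a letter $f\neq i$ sends Adam non-significantly to $\sqrstate{p_f}$, while Eve's $I_b\xRightarrow{f}$ is significant. Here Eve wins outright.
\item Your closing hedge is moot: for $i\in F$, the state $\sqrstate{s_i}$ has no significant exit at all.
\item For $(\sqrstate{t_i},I_c)$, the lemma's winning argument is phrased via Adam's last visit to a $\sqrstate{p}$-state, which need not exist in a play starting at $\sqrstate{t_i}$. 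So the claim ``that proof handles a play from any position satisfying the invariant'' is best replaced by the forced-$c$-move observation above.
\end{itemize}
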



\begin{proof}
Since all states in $\Amain$ are language-equivalent (by \cref{prop:language-A65}), $\Amain$ is SD. The only nondeterminism on $\Amain$ is over the transitions $I\xrightarrow{a}$ and $I_c\xrightarrow{a}$. Therefore, the only states in $\Amain$ that are not reach-deterministic are the $4$ states in the top pink box: $B_{\mathtt{top}} = \{I,I_a,I_b,I_c\}$. We have already proved that $I$ simulates $\sqrstate{p_i}$ in $\reach(\Amain)$ in \cref{lemma:simulation-Amain}, for each $i\in[6]$. We will show that $I_a,I_b,$ and $I_c$ simulate $\circstate{p_1}$ in $\reach{\Amain}$. Consider a strategy for Eve in the corresponding simulation game where she picks arbitrary transitions until Adam's token reaches $\sqrstate{p_i}$ for some $i$. This must eventually happen, for Adam's word to be in $L(\reach{\Amain},\circstate{p_1})$. Then, the letter in the corresponding round must be in $E$ or in $F$. Thus, Eve's token must take a significant transition in that round, as desired, or reach the state $I$. Since $I$ simulates $\sqrstate{p_i}$ in $\reach{\Amain}$, Eve can thus now use her simulation strategy from \cref{lemma:simulation-Amain}. 

\end{proof}

We conclude that $\Amain$ is simplified, finishing the proof of Theorem~\ref{thm:A65isHD}.

\subsection{Succinctness proof (full details)}\label{subsec:full-succinctness}
In this Section, we prove that $\Amain$ is succinct. 

\begin{theorem}\label{thm:succinctness}
    Every deterministic  B\"uchi automaton that is language-equi\-valent to~$\Amain$ has at least $66$ states.
\end{theorem}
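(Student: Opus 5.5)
The plan follows the outline of \cref{subsec:ov-succinctness}. Let $\Dd$ be a deterministic Büchi automaton with $L(\Dd)=\Lmain$. Reading its transitions as coBüchi, we get a deterministic coBüchi automaton $\Dmain$ for $\Lmain^c$ with the same number of states. The goal is to show that $\Dmain$ has at least $61+5=66$ states.

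\textbf{Step 1: a minimal HD coBüchi automaton for the complement.} Apply the complementation of Abu Radi, Kupferman and Leshkowitz to $\Amain$, which is simplified by \cref{thm:A65isHD}. This yields an HD coBüchi automaton $\Cmain$ for $\Lmain^c$ whose states are the $61$ \good states of $\Amain$. Then verify that $\Cmain$ is minimal using the characterisation of Abu Radi and Kupferman: after nice normalisation, its states are pairwise not strongly equivalent, i.e.\ no two states have both equal languages and equal safe languages. The safe languages should be distinguishable by short words built from the block languages $L_i$ and the reset letters, and this amounts to a finite case analysis over the $12$ blocks and $Y$.

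\textbf{Step 2: the $61$ core states of $\Dmain$.} Use the structural result of \cite{AK22}: every HD coBüchi automaton recognising the same language, in particular a nice normalisation of $\Dmain$, contains, for each state $q$ of the minimal automaton, a state $q'$ with $L(q')=L(q)$ and $\safeLang{q'}=\safeLang{q}$. Fix one such state for each of the $61$ states of $\Cmain$. These states are pairwise distinct, since their safe languages differ. Normalisation must not decrease the number of states, so I would work with the safe components of $\Dmain$ directly.

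\textbf{Step 3: five extra states.} I would argue that some safe component of $\Dmain$ must mimic the top box $\{I,I_a,I_b,I_c\}$. Since all states are language-equivalent (prefix independence), consider a run of $\Dmain$ on $\purpley$ followed by words avoiding the sink behaviour. After any significant transition, $\Dmain$ must be in a state whose safe language contains $\Theju$, the words with no prefix in any $L_i$. Otherwise the word $(\purpley\,\reset{1})^\omega$-style variations, or $\Theju$-words glued appropriately, would be misclassified. The safe language must also be contained in $\safeLangA{\Cmain}{q}$ for a block state $q$. The $61$ core states cannot do this, because their safe languages are fixed and do not contain $\Theju$. So the states realising such a safe component are disjoint from the core. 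Restricting $\Dmain$ to that component gives a deterministic safety automaton $\Dd$ with $\Theju\subseteq L(\Dd)\subseteq\safeLangA{\Cmain}{q}$. By \cref{lemma:compy}, this requires at least $5$ states. Combining with Step 2 gives $66$.

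\textbf{Main obstacle.} I expect the main obstacle to be Step 3: proving rigorously that the extra component exists and is disjoint from the $61$ core states, i.e.\ that no core state can be reused inside the deterministic substitute for the nondeterministic component. The first thing I would try is a safe-language containment argument, showing that any state reusable in that role would have to strongly equal a state of $\Cmain$ while also accepting $\Theju$-words safely, which is a contradiction.
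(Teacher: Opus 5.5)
Your Steps 1 and 2 essentially match the paper. Step 3, however, rests on a false premise. The safe languages of the core states \emph{do} contain $\Theju$. From $\circstate{p_i}$ (or $\sqrstate{p_i}$), the block of $\Cmain$ is a copy of the classifier, and a word of $\Theju$ never completes a block. Such a word therefore stays in that block forever without a significant transition, so $\Theju\subseteq\safeLangA{\Cmain}{\circstate{p_i}}$; this is exactly why \cref{lemma:compy} is not vacuous.

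Consequently, the safe component you restrict to may simply be $\Dbus$, and \cref{lemma:compy} then gives nothing beyond the $60$ states already counted. The fix you suggest at the end fails for the same reason. Every state of $\Dbus$ has the safe language of some state of $\Sbus$. So a deterministic substitute for $\{I,I_a,I_b,I_c\}$ placed inside $\Dbus$ (attached to the blocks through $\purpley$-transitions) is safe-equivalent to states of $\Cmain$ and safely accepts $\Theju$, with no contradiction. Safe languages alone cannot separate it from the core.

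Your claim that every significant transition leads to a $\Theju$-safe state is also unjustified: a significant transition may well enter a state like $d_y$, whose safe language $(\Sigma\setminus\{\purpley\})^\omega$ does not contain $\Theju$. Finally, nothing forces a substitute for the top box to exist at all. A determinisation that copies $Y$, as in \cref{subsec:weird-Y-det}, does without one, and only a count of states rules that strategy out.

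The paper closes this gap in three moves that your plan lacks.
\begin{enumerate}
\item It refines $\Dbus$ into SCCs without $\purpley$-transitions and takes an end one, $\Dcore$. Since $\Sbus$ is strongly connected even without $\purpley$, $\Dcore$ alone has at least $60$ states. It is the unique end component when $\Dmain$ has at most $65$ states.
\item Suppose some state outside $\Dbus$ has $\Theju$ in its safe language, or some state $q$ of $\Dbus\setminus\Dcore$ satisfies $\Theju_y\subseteq\ysafelangD{q}$. Then \cref{lemma:compy} yields $5$ states disjoint from $\Dcore$ and from $d_y$, hence $66$ in total.
\item Otherwise, a pigeonhole (``no copying'') argument applies, and this is the idea missing from your proposal entirely. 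From each of the at most $5$ states outside $\Dcore$, a classifier-resetting word of $\Fin{\Theju}$ leads into $\Dcore$. It lands at a state safe-equivalent to some $\circstate{p_j}$ or $\sqrstate{p_j}$. There are six indices but at most five landing states, so some index $j_0$ is never hit. From every landing state, a $\purpley$-free word $v\in L_{j_0}^2$ forces a significant transition. That transition must exit $\Dcore$, since otherwise there would be a $\purpley$-free cycle through a significant transition. Alternating these pieces gives a rejecting run of $\Dmain$ on a word of $\Lmain^c$, because its only blocks are $L_{j_0}$-blocks.
\end{enumerate}
This last step is where the six blocks of $\Amain$ are used. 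An argument that invokes only \cref{lemma:compy} cannot account for them.
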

To prove \cref{thm:succinctness} above, we show the equivalent result that every deterministic coBüchi~automaton recognising the~complement language $\Lmain^c$ has at least 66 states. Towards this, we rely on results from Abu Radi and Kupferman on the minimisation of HD coBüchi automata~\cite{AK22}, which we describe in \cref{subsec:hd-cobuchi-prelims}. 
We then present computer-aided lemmas about the size of minimal DFAs separating regular languages over finite words, an NP-complete problem~\cite{Gold78}. Using them, we conclude our proof in \cref{subsec:proof-of-succinctness}.

\subsubsection{Structure of HD coBüchi automata}\label{subsec:hd-cobuchi-prelims}
We briefly recall the results of Abu Radi and Kupferman's seminal paper \cite{AK22} that showed a canonical form for HD coB\"uchi automata. We will then use these results to construct a state-minimal history-deterministic coBüchi automaton for the complement language of $\Lmain$. 
We begin by introducing a few notions from their work that are necessary to state their results. 

We say that a coB\"uchi automaton is a \emph{safety automaton} if it has no \significant transitions. For every coB\"uchi automaton $\Aa$, we use $\safe(\Aa)$ to denote the safety automaton that is obtained by deleting all the \significant transitions of $\Aa$. 

A \emph{safe component} of $\Ac$ is a  strongly~connected~component 
of $\safe(\Ac)$. We write $\safeLangA{\Ac}{q}$ to denote the language $L(\safe(\Ac),q)$, for a state $q$ in $\Ac$, which we call the \emph{safe language} of $q$.

We say that a coBüchi automaton is \emph{normal} if every transition that is not in a safe component of $\Ac$ is a \significant transition. Every coBüchi automaton can be made normal while preserving the acceptance of each run, by simply making significant every non-\significant transitions of $\Ac$ that is not part of a safe component.
We assume that all states of automata are reachable from the initial state.

A coBüchi automaton $\Ac$ is \emph{safe deterministic} if the automaton $\safe(\Ac)$ is deterministic. 
A coBüchi automaton $\Ac$ is \emph{safe minimal} if there are no two different states that are language-equivalent in $\Ac$ as well as in $\safe(\Ac)$. 
It is \emph{safe centralised} if there are no two language-equivalent states $p,q$ in different safe components of $\Ac$ such that $\safeLangA{\Ac}{p} \subseteq \safeLangA{\Ac}{q}$.

Abu Radi and Kupferman proved the following result on minimality for HD coBüchi automata. 
\begin{lemma}[{\cite[Theorem 3.6]{AK22}}]\label{akimport-minimality}
Let $\Ac$ be a history-deterministic coBüchi automaton that is normal, semantically deterministic, safe deterministic, safe minimal, and safe centralised.
Then, it is statewise minimal amongst all HD coBüchi automaton recognising $L(\Ac)$.
\end{lemma}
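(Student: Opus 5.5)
The plan is to compare $\Ac$ with an arbitrary HD coBüchi automaton $\Bc$ for $L(\Ac)$ and to build an injection from the states of $\Ac$ into the states of $\Bc$. I would first normalise $\Bc$ without increasing its number of states: prune it according to an HD resolver, make it normal, semantically deterministic and safe deterministic. These are the standard transformations for HD coBüchi automata; the last one restricts the safe transitions to those used by the resolver inside safe components. So we may assume $\Bc$ has the same structural properties as $\Ac$, except possibly safe minimality and safe centralisation.

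The central step is a \emph{covering lemma}: for every state $q$ of $\Ac$ there is a state $p$ of $\Bc$ with $L(\Bc,p)=L(\Ac,q)$ and $\safeLangA{\Ac}{q}\subseteq\safeLangA{\Bc}{p}$, and symmetrically with the roles of $\Ac$ and $\Bc$ exchanged. I would prove it by contradiction using history determinism. Run the resolver of $\Bc$ alongside $\Ac$. Suppose that for every language-equivalent state $p$ there is a finite word that is safe from $q$ but leaves the safe part of $\Bc$ from $p$. Concatenate such witnesses along a lasso that stays inside the safe component of $q$; this is possible because that component is strongly connected and safe deterministic. Semantic determinism keeps $\Bc$'s token in states equivalent to $\Ac$'s token. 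The resulting word is accepted by $\Ac$ (it is eventually safe), but the resolver of $\Bc$ produces a run with infinitely many significant transitions, which is a contradiction. I expect the main obstacle to be arranging the witnesses so that they cycle through finitely many candidate states of $\Bc$ while staying safe in $\Ac$. I would handle this with a counting argument over the finitely many states of $\Bc$ that are equivalent to $q$.

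With the covering lemma in hand, I would show that the map $q\mapsto p$ can be chosen so that whole safe components map into single safe components. Safe determinism propagates the inclusion $\safeLangA{\Ac}{q}\subseteq\safeLangA{\Bc}{p}$ along safe transitions, and normality keeps those transitions inside a component. Next I would show the map is injective. If $q\neq q'$ mapped to the same $p$, apply the reverse covering to get a state $q''$ of $\Ac$ with $\safeLangA{\Bc}{p}\subseteq\safeLangA{\Ac}{q''}$. Safe centralisation forces $q$, $q'$ and $q''$ into the same safe component. The inclusions are then cyclic, $\safeLangA{\Ac}{q}\subseteq\safeLangA{\Ac}{q''}$ together with the corresponding inclusion for $q'$, and iterating them inside a finite component yields equal safe and full languages for two distinct states. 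This contradicts safe minimality.

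Finally, two distinct safe components of $\Ac$ cannot be sent into the same safe component of $\Bc$. Otherwise the reverse covering, composed with the forward one, would give language-equivalent states in different components of $\Ac$ with included safe languages, violating safe centralisation. Hence the map is an injection of the states of $\Ac$ into those of $\Bc$, so $|\Ac|\le|\Bc|$.
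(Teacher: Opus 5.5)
\textbf{Gap: the injectivity step.} The paper imports this result from \cite{AK22} without proof. Your argument breaks at injectivity: a map $q\mapsto p$ with $L(\Bc,p)=L(\Ac,q)$ and $\safeLangA{\Ac}{q}\subseteq\safeLangA{\Bc}{p}$ need not be injective, even when $\Ac$ is safe minimal and safe centralised.

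Here is a counterexample. Take the alphabet $\{a,b,c\}$ and the deterministic automaton with two states $x,z$:
\begin{itemize}
\item $z$ has safe self-loops on $a,b$ and a safe $c$-transition to $x$;
\item $x$ has a safe $a$-transition to $z$, and significant $b$- and $c$-transitions to $z$.
\end{itemize}
Both states recognise the words with finitely many factors $cb$ or $cc$. The automaton is normal, semantically deterministic, safe deterministic and safe minimal. It has a single safe component, so it is safe centralised. Moreover $\safeLangA{\Ac}{x}=a\cdot\safeLangA{\Ac}{z}\subsetneq\safeLangA{\Ac}{z}$.

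Now take $\Bc=\Ac$. The assignment $x\mapsto z$, $z\mapsto z$ satisfies your covering property. It is even compatible with your propagation along the safe transition $x\xrightarrow{a}z$. In your notation $q=x$ and $q'=q''=p=z$, so your ``cyclic'' inclusions reduce to $\safeLangA{\Ac}{x}\subseteq\safeLangA{\Ac}{z}$. Nothing flows back from $q''$ to $q$. Safe languages of language-equivalent states in one safe component can be strictly nested, so no iteration yields two distinct states with equal safe and full languages. The contradiction with safe minimality does not exist.

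\textbf{The missing idea} is to upgrade inclusion to equality before defining the map. This is exactly \cref{AK22import-propthreepointfour} (Proposition 3.4 of \cite{AK22}), on which their Theorem 3.6 rests.
\begin{itemize}
\item Alternate your covering lemma in both directions to get $\safeLangA{\Ac}{q}\subseteq\safeLangA{\Bc}{s_0}\subseteq\safeLangA{\Ac}{q_1}\subseteq\safeLangA{\Bc}{s_1}\subseteq\cdots$ among language-equivalent states.
\item By finiteness some $q_i=q_j$ with $i<j$. All inclusions in between are then equalities, so $\safeLangA{\Ac}{q_i}=\safeLangA{\Bc}{s_i}$.
\item By safe centralisation, $q_i$ lies in the safe component $S$ of $q$.
\item Propagate this \emph{equality} along safe paths from $q_i$. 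This gives $\varphi\colon S\to T$ with $L(\Bc,\varphi(r))=L(\Ac,r)$ and $\safeLangA{\Bc}{\varphi(r)}=\safeLangA{\Ac}{r}$.
\item $\varphi$ is injective by safe minimality.
\end{itemize}
Your last step, that distinct safe components of $\Ac$ land in distinct safe components of $\Bc$, then goes through. Move along a safe path of $T$ from the image of one component to the image of the other, and you contradict safe centralisation.

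Two smaller remarks. Your proof of the covering lemma is sound and needs no counting argument: you choose each witness adaptively for whichever state the resolver of $\Bc$ currently occupies. Also, keeping only the safe transitions a resolver uses does not by itself make $\Bc$ safe deterministic, since the resolver may use several from one state. Cite the known normalisation result instead.
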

We will also use the following result from Abu Radi and Kupferman's work.
\begin{restatable}[\protect{\cite[Proposition 3.4]{AK22}}]{lemma}{akimportantprop}
\label{AK22import-propthreepointfour}
Let $\Ac$ and $\Bc$ be two language-equivalent history\=/deterministic coBüchi automata that are normal, semantically deterministic, and safe deterministic. Then for every state $p$ in $\Ac$, there are states $q$ in $\Ac$ and $s$ in $\Bc$ such that $L(\Ac,p)=L(\Ac,q)=L(\Bc,s)$ and $$\safeLangA{\Ac}{p}\subseteq \safeLangA{\Ac}{q}=\safeLangA{\Bc}{s}.$$
\end{restatable}

\subsubsection{Complementation of HD B\"uchi automata}\label{subsec:complementation}
In their 2024 paper, Abu Radi, Kupferman, and Leshkowitz describe a complementation procedure for HD Büchi automata that produces  HD coBüchi automata with at most as many states~\cite[Theorem 2]{AKL24}. This procedure builds an HD coBüchi automata for the complement language that has as set of states the \good states of a simplified HD Büchi automata. 

If we use their procedure on our 65-state automaton $\Amain$, we obtain a 61-state history-deterministic coBüchi automaton $\Cmain$. The states of the automaton $\Cmain$ are the \good states of $\Amain$. The non-\significant transitions of $\Cmain$ are the non-\significant transitions outgoing from \good states of $\Amain$. The \significant transitions of $\Cmain$ are transitions $p\xRightarrow{\sigma}q$ for every  pair of states $p,q$ in $\Cmain$ and every letter $\sigma$. 
A rendering of $\Cmain$ is given in \cref{fig:coBuchi61automaton}.
\input{figures_sec2and3/coBuchi61state}
It follows from Abu Radi, Kupferman, and Leshkowitz's work~\cite[Proof of Theorem 9]{AKL24} that $L(\Cmain)=\Lmain^c$; we include a proof for self-containment nevertheless. 
\begin{restatable}{lemma}{cmainiscomplement}\label{lemma-sec5-cmain-iscomplementandhd}
    The automata $\Cmain$ is history deterministic, and recognises the complement of the language recognised by $\Amain$.
\end{restatable}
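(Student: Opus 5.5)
The plan is to prove the two inclusions $\Lmain^c \subseteq L(\Cmain)$ and $L(\Cmain)\subseteq \Lmain^c$ separately, and then to exhibit an explicit HD resolver for $\Cmain$. Throughout, we use that every state of $\Amain$ recognises $\Lmain$ (\cref{prop:language-A65}) and that $\Lmain$ is prefix-independent. We also use that $\safe(\Cmain)$ coincides with the restriction of $\reach(\Amain)$ to the \good states without the accepting sink. Concretely, a run of $\safe(\Cmain)$ from a \good state $q$ on a word $w$ is exactly the unique run of $\reach(\Amain)$ from $q$, as long as that run has not yet taken a \significant transition. Hence $\safeLangA{\Cmain}{q} = \Sigma^\omega \setminus L(\reach(\Amain),q)$ for every state $q$ of $\Cmain$.

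\textbf{Inclusion $L(\Cmain)\subseteq\Lmain^c$.} An accepting run of $\Cmain$ on $w$ eventually stays in $\safe(\Cmain)$. So some suffix $w'$ of $w$, read from some \good state $q$, lies in $\safeLangA{\Cmain}{q}$, and therefore $w'\notin L(\reach(\Amain),q)$. Since $L(\reach(\Amain),q)\subseteq L(\Amain,q)$, we might hope to conclude $w'\notin\Lmain$, but this needs the stronger fact $\Lmain\subseteq L(\reach(\Amain),q)$ for every \good state $q$. This fact is essentially shown in the proof of \cref{prop:language-A65}: every word of $\Lmain$ reaches a \significant transition from every state of the blue box and from $Y$. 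The initialization case there handles $\sqrstate{p_\alpha}$, and the same argument applies to the remaining block states, since they reach a $p$-state on the next letter of $E+F+\purpley$ or on a reset letter. Hence $w'\notin\Lmain$, and by prefix-independence $w\notin\Lmain$.

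\textbf{Inclusion $\Lmain^c\subseteq L(\Cmain)$ and HD-ness.} We define the resolver $r$ as follows. It follows the unique safe transition while one exists. When none exists from the current state $q$ on letter $\sigma$, meaning that $\reach(\Amain)$ would take a \significant transition, it takes the \significant transition $q\xRightarrow{\sigma}q'$, where $q'$ is the target of the corresponding \significant transition of $\Amain$ if that target is \good. If the target is $I$, we choose as $q'$ the state $\sqrstate{p_1}$, which $I$ simulates in $\reach(\Amain)$ by \cref{lemma:simulation-Amain} (any $\sqrstate{p_i}$ works). The resulting run mimics a run of $\Amain$ in which the visits to the nondeterministic component are replaced by a \good state that they simulate.

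Now suppose $w\notin\Lmain$ and, towards a contradiction, that the run induced by $r$ takes \significant transitions infinitely often. We want to build from this an accepting run of $\Amain$ on $w$. Each safe segment of $r$'s run from a state $q'$ ends when $\reach(\Amain)$ from $q'$ hits a \significant transition. When $q'$ replaced $I$, the simulation of \cref{lemma:simulation-Amain} gives a run of $\Amain$ from $I$ that also hits a \significant transition on that segment. Since the targets of these \significant transitions are chosen consistently, we can stitch the segments together into a run of $\Amain$ on $w$ with infinitely many \significant transitions. This contradicts $w\notin\Lmain$. So the induced run is accepting, which gives both the inclusion and history determinism at once.

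\textbf{Main obstacle.} I expect the stitching to be the hardest step: the simulation strategy of Eve from $I$ must end, at the moment she takes a \significant transition, in a state consistent with where $r$'s run continues. If exact alignment fails, I would instead use language equivalence. All states of $\Amain$ accept $\Lmain$, so it suffices to show that infinitely many \significant transitions of $r$'s run force infinitely many disjoint segments, each of which is accepted by $\reach(\Amain)$ from some state. By the characterisation of paths in the proof of \cref{prop:language-A65} (infixes in $(\langBlock{\alpha}+\reset{\alpha})\langBlock{\beta}\langBlock{\beta}$ followed by $\purpley$), such segments yield infinitely many infixes of $\Lmain$'s defining form. This would give $w\in\Lmain$ directly, without constructing a single run of $\Amain$.
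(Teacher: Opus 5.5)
There is a genuine gap: your resolver $r$ is not an HD-resolver. You derive both history determinism and $\Lmain^c\subseteq L(\Cmain)$ from it, so both claims remain unproven.

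\textbf{Counterexample.} Take $w=\reset{2}\,c1\,(\purpley c1)^\omega$. Every word of $L_\alpha$ ends with the letter $\alpha$, the only digit in $w$ is $1$, and its only reset letter is the first one. Hence the only infix of $w$ in $\bigcup_{\alpha\neq\beta}\langInfix{\alpha}{\beta}$ is the prefix $\reset{2}\cdot c1\cdot\purpley c1$, so $w\notin\Lmain$. Indeed $\Cmain$ accepts $w$ by jumping once to $\circstate{p_1}$, from which $c1$ and $\purpley$ lead back to $\circstate{p_1}$ safely. Now follow $r$ from any state of $\Cmain$. After reading $\reset{2}\,c1\,\purpley c1$ the run is in $Y$: a block state goes to $\circstate{p_2}$, then to $\sqrstate{p_1}$, stays at $\sqrstate{p_1}$, then reaches $Y$; from $Y$, the letters $\reset{2},c,1$ are self-loops, then $\purpley$ jumps to $\sqrstate{p_1}$ and $c1$ returns to $Y$. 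From then on $r$ cycles forever through $Y\xRightarrow{\purpley}\sqrstate{p_1}\xrightarrow{c}\cdot\xRightarrow{1}Y$, so it sees infinitely many \significant transitions. Any other fixed target $\sqrstate{p_i}$ fails the same way, with a word of $L_i$ in place of $c1$.

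\textbf{Why neither fallback works.} Jumping to $\sqrstate{p_1}$ pretends that a factor $(L_\alpha+\reset{\alpha})L_1$ has just been read. \cref{lemma:simulation-Amain} only says that $I$ also reaches \emph{some} \significant transition (here into $\circstate{p_1}$), which says nothing about a factor $L_\beta L_\beta$. The path characterisation in the proof of \cref{prop:language-A65} concerns paths leaving $I$, not $\sqrstate{p_1}$. So neither the stitching nor the language-equivalence fallback can be repaired. A coB\"uchi resolver that jumps without tracking which safe runs are still alive cannot work here.

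\textbf{How the paper proceeds.} It decouples the two claims.
\begin{itemize}
\item For $\Lmain^c\subseteq L(\Cmain)$ it argues non-constructively. If $w\notin\Lmain$, there are a suffix $w_i$ and a state $p$ of $\Amain$ from which no run on $w_i$ sees a \significant transition; otherwise one could build an accepting run of $\Amain$. Reach-covering gives a \good $q$ simulated by $p$ in $\reach(\Amain)$, so $w_i\in\safeLangA{\Cmain}{q}$. A run of $\Cmain$ then jumps to $q$ at position $i$; that this choice depends on the future is harmless for an inclusion.
\item For history determinism it uses a different resolver. When forced to leave $\safe(\Cmain)$, it jumps to the end state of the longest suffix of the prefix read so far that has a run in $\safe(\Cmain)$. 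If $w_j\in\safeLangA{\Cmain}{p}$, the safe run from $p$ started at position $j$ is always a candidate. So every later jump selects one of the finitely many safe runs started at positions $\le j$ (finitely many because $\safe(\Cmain)$ is deterministic), and each such run can be abandoned only once.
\end{itemize}

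\textbf{Minor point on your first inclusion.} It is fine in substance, but the inclusion you quote is reversed. $L(\reach(\Amain),q)\subseteq L(\Amain,q)$ is false, e.g.\ for $\purpley a^\omega$ from $Y$. The inclusion you actually need, $L(\Amain,q)\subseteq L(\reach(\Amain),q)$, holds for every B\"uchi automaton, so you need no appeal to the proof of \cref{prop:language-A65}.
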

\begin{proof}
We will prove  that 
\begin{enumerate}
    \item  $L(\Cmain) \subseteq L(\Amain)^c$,
    \item $L(\Amain)^c \subseteq L(\Cmain)$,
    \item and that there is an HD-resolver for $\Cmain$.  
\end{enumerate}
  

\emph{Proof of 1.} Let $w$ be a word in $L(\Cmain)$, and $\rho$ be an accepting run of $\Cmain$ on $w$. Then, there is a suffix $u$ of $w$ and a state $p$ in $\Cmain$ such that the (deterministic) run from $p$ on $u$ stays in the safe-component containing $p$
and does not contain a significant~transition. Then $u$ is in $\safeLangA{\Cmain}{p}$, and thus, $u\notin L(\reach(\Amain),p)$. Since $L(B) \subseteq L(\reach(B))$ for every Büchi automaton $B$,  it follows that $u\notin L(\Amain,p)$. Since $L(\Amain,p)$ is prefix\=/independent and $\Amain$ is semantically deterministic, $w\notin L(\Amain)$, as desired. 

\emph{Proof of 2.} Let $w=a_0 a_1 a_2 \dots$ be a word that is not in $L(\Amain)$. We claim that there is a state $p$ of $\Amain$ and a suffix $w_i = a_i a_{i+1}\dots$ of $w$ such that every run of $(\Amain,p)$ on $w_i$ contains no \significant transition. If this is not the case, then it is easy to inductively build a run of $\Amain$ on $w$ that contains infinitely many \significant transitions, which contradicts the fact that $w \notin L(\Amain,p)$.

Thus, $w_i\notin L(\reach(\Amain),p)$ for some state $p$. We know, due to reach-covering for $\Amain$, that there is a \good state $q$ such that $p$ simulates $q$ in $\reach(\Amain)$.  Then, $$L(\reach(\Amain),p) \supseteq L(\reach(\Amain),q),$$ and thus $w_i \notin L(\reach(\Amain),q)$. This implies, by construction of $\Cmain$, that $w_i$ is in $\safeLangA{\Cmain}{q}$. Thus, $w\in L(\Cmain)$, since there is a run $\rho$ of $\Cmain$ on $w$ that takes arbitrary transitions on $a_0a_1\dots a_{i-1}$, takes the (possibly \significant) transition to $q$ on $a_i$, and then stays in the safe-component containing $q$ by taking deterministic non-\significant transitions. 

\emph{Proof of 3.} The fact that $\Cmain$ is HD follows from the fact that it is safe-deterministic and has \significant transitions going everywhere. We explicitly describe a resolver $\gamma$ for $\Cmain$. The resolver $\gamma$, after reading the prefix $u=a_0 a_1 \dots a_{i-1}$, will be in some state $q_i$. On the letter $a_{i}$, $\gamma$ takes the deterministic non-\significant transition $q_i \xrightarrow{a_i} q_{i+1}$, if it exists. Otherwise, $\gamma$ takes the \significant transition $q_i\xRightarrow{a_i}p$ to a state $p$ in $\Cmain$, where $p$ is the state with the longest suffix $v$ of $ua_i$ such that there is a run of $\safe(\Cmain)$ on $v$ that starts at some state $p'$ and ends at $p$. 

If the word $w$ on which $\gamma$ constructs a run is in $L(\Cmain)$, then there is a state $p$ and a suffix $w'$ of $w$ for which $w'\in L(\safe(\Cmain),p)$. The run of $\gamma$ on $w$ can only contain finitely many \significant transitions before it eventually coincides with the deterministic run of $(\Cmain,p)$ on $w'$ consisting of only non-\significant transitions. Thus, $\gamma$ is an HD-resolver for $\Cmain$, as desired.
\end{proof}

We now show statewise-minimality for $\Cmain$.
\begin{restatable}{lemma}{cmainminimal}\label{lemma:cmain-minimality}
    The automaton $\Cmain$ is the statewise minimal HD coBüchi automaton recognising $L(\Cmain)$.
\end{restatable}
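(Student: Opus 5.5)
The plan is to apply \cref{akimport-minimality} to $\Cmain$. By \cref{lemma-sec5-cmain-iscomplementandhd}, $\Cmain$ is HD and recognises $\Lmain^c$. It therefore remains to verify the five structural hypotheses: normal, semantically deterministic, safe deterministic, safe minimal, and safe centralised.

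\emph{The first three hypotheses.} Safe determinism is immediate. The non-\significant transitions of $\Cmain$ are exactly the non-\significant transitions leaving \good states of $\Amain$. The accessible part of $\reach(\Amain)$ from a \good state is deterministic, so $\safe(\Cmain)$ is deterministic.

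Semantic determinism follows from \cref{prop:language-A65}: every state of $\Amain$ recognises $\Lmain$, so by the proof of \cref{lemma-sec5-cmain-iscomplementandhd} every state of $\Cmain$ recognises $\Lmain^c$. This holds because $\Lmain$ is prefix-independent and every state of $\Cmain$ has \significant transitions to every state.

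For normality, the safe components of $\Cmain$ are the $60$-state big blue box and $\{\sqrstate{Y}\}$. These are the two SCCs of non-\significant transitions of $\Amain$ among \good states (\cref{prop:normal-Amain}). No non-\significant transition leaves either of them, since all transitions out of the blue box into $\sqrstate{Y}$ are \significant in $\Amain$ and hence absent from $\safe(\Cmain)$. So $\Cmain$ is normal, possibly after marking as \significant any leftover inter-component transitions, which changes nothing.

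\emph{Safe minimality and safe centralisation.} All states of $\Cmain$ are language-equivalent, so both conditions reduce to statements about safe languages. Safe minimality requires that the $61$ safe languages $\safeLangA{\Cmain}{q} = \Sigma^\omega\setminus L(\reach(\Amain),q)$ be pairwise distinct. Safe centralisation requires that no safe language of a state in one safe component be included in that of a state in the other. The latter is easy. The safe language of $\sqrstate{Y}$ is $(\Sigma\setminus\{\purpley\})^\omega$. It contains words such as $\reset{1}^\omega$, which are not safe from a blue-box state: for instance from $\sqrstate{p_1}$, since $\reset{1}\,\cdots$ followed by two $L_2$-words leads to $\sqrstate{Y}$. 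Symmetrically, a blue-box state's safe language contains $\purpley^\omega$, which $\sqrstate{Y}$ rejects.

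\emph{The main obstacle} is safe minimality inside the 60-state blue box: for each pair of distinct states $q\neq q'$ we must exhibit an infinite word safe from one and not the other. I would organise this hierarchically.
\begin{itemize}
\item \emph{Different blocks.} States in blocks of different index $i\neq j$, or of different row (circle versus square), are separated by appending suitable $L_\beta L_\beta$ or $L_\beta$ words after resetting to block entry. Concretely, I would first read a word $u$ driving both to their block's $p$-state while staying safe, namely the shortest completion to some $L_\gamma$. I would then choose $\beta$ so that one state reaches $\sqrstate{Y}$ and the other does not. Such a $\beta$ exists because a circle block $i$ resets on $L_i$ but advances on $L_\beta$ with $\beta\ne i$, while a square block $j$ fires on $L_j$.
\item \emph{Same block.} The five states $p,q,r,s,t$ within one block are distinguished by short finite words from $\{a,b,c\}^*(E\cup F)$, using the classifier $\Aclassifier$. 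This works because the classifier's five non-accepting states are pairwise inequivalent: it is the minimal DFA behaviour of the gadget in \cref{fig:beatReplaceDFAs}.
\end{itemize}
Each such finite distinguishing word is then extended by $\purpley^\omega$ or similar to stay safe on the other side. I expect the bookkeeping of these separating words to be tedious but routine. Once all five hypotheses are checked, \cref{akimport-minimality} gives minimality.
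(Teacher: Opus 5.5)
Your route is the paper's: apply \cref{akimport-minimality}, take HD from \cref{lemma-sec5-cmain-iscomplementandhd}, and get semantic determinism from the fact that all states are language-equivalent. Safe determinism and normality come from the construction, and safe centralisation and safe minimality come from separating words. Those separating words are the real content of the lemma, and several of the ones you commit to fail as written.

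\emph{Safe centralisation.} $\reset{1}^\omega$ is safe from \emph{every} state of $\Sbus$. Each $\reset{i}$-transition of the blue box is non-significant and leads to $\circstate{p_i}$; this is exactly what glues the $60$ states into one safe component. You need a $\purpley$-free word that actually leaves $\Sbus$, e.g.\ $\reset{1}\,c2\,c2\,b^\omega$. From any $s\in\Sbus$ its run goes to $\circstate{p_1}$, then to $\sqrstate{p_2}$, and then takes the significant transition into $\sqrstate{Y}$. The word you describe next ($\reset{1}$ followed by two $L_2$-words) is of this kind, but it is not $\reset{1}^\omega$.

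\emph{Cross-block cases.} A ``completion to some $L_\gamma$'' does not return a state to its own block's $p$-state. From circle block $i$ it leads to $\sqrstate{p_\gamma}$ when $\gamma\neq i$. From square block $j$ it leads to $\sqrstate{p_\gamma}$, or, for $\gamma=j$, takes the significant transition to $\sqrstate{Y}$. Moreover, two states at different classifier positions need not share a completion. The correct reset is the single letter $\purpley$, which sends every state of a block to that block's $p$-state non-significantly. This gives the paper's witnesses:
\begin{itemize}
\item $\purpley v\purpley^\omega$ with $v\in L_j$, for circle block $i$ versus square block $j$;
\item $\purpley v_jv_j\purpley^\omega$, for circle blocks $i\neq j$;
\item $\purpley v_j\purpley^\omega$, for square blocks $i\neq j$.
\end{itemize}

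\emph{Same block.} In a circle block, a word of $\{a,b,c\}^*(E\cup F)$ followed by $\purpley^\omega$ separates nothing. Every transition of a circle block on a letter of $E\cup F$ is non-significant, going to $\circstate{p_i}$ or to some $\sqrstate{p_\alpha}$. Also, in circle block $i$ the classifier's reset to $p$ and its acceptance in $\ell_i$ both land on $\circstate{p_i}$. So pairwise inequivalence of the classifier's states does not transfer for free.

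The fix, as in the paper's Cases~4 and~5, is to use only letters $e\in E$, $f\in F$ with $e,f\neq i$, which exist by the symmetry inside $E$ and inside $F$:
\begin{itemize}
\item $bf$ separates $\{p,t\}$ from $\{q,r,s\}$;
\item $e$ separates $p$ from $t$, and $q$ from $s$;
\item $f$ separates $q$ from $r$, and $r$ from $s$.
\end{itemize}
In each case one state goes to its block's $p$-state and the other to $\sqrstate{p_e}$ or $\sqrstate{p_f}$. You must then continue with a cross-block separator rather than $\purpley^\omega$, namely $v\purpley^\omega$ with $v\in L_e$ or $v\in L_f$ respectively. This works for circle and square blocks alike. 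With these corrections your argument coincides with the paper's proof.
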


\begin{proof}
    To apply \cref{akimport-minimality}, we need to prove that $\Cmain$ is HD, semantically deterministic, safe deterministic, normal, safe minimal, and safe centralised. We argued that $\Cmain$ is HD in \cref{lemma-sec5-cmain-iscomplementandhd}, $\Cmain$ is SD because every pair of states has a transition from one to the other on every letter (self-loop $\Sigma$), and $\Cmain$ is safe deterministic and normal by construction. We thus need to argue that $\Cmain$ is safe centralised and safe minimal.

    In the following, $\safeLang{p}$ denotes $\safeLangA{\Cmain}{p}$, for a state $p$ in $\Cmain$.

    \emph{Safe centralised.} We note that there are only two safe components in $\Cmain$: the safe component consisting of the state $\{Y\}$, and the safe component consisting of the other 60 states, which we call $\Sbus$.
    Note that $\safeLang{Y} =(\Sigma \setminus \{\purpley\})^{\omega}$, 
    while $\purpley^{\omega} \in \safeLang{s}$ for every state $s$ in $\Sbus$. Thus, 
    $$\safeLang{s}\nsubseteq \safeLang{Y}$$ 
    for every state $s\in \Sbus$.

    Similarly, we note that for each state $s$ in $\Sbus$, there is a word $w \in \left( \Sigma\setminus\{\purpley\} \right)^\omega$ such that $w\notin \safeLang{s}$, but $w$ is in $ \safeLang{\purpley}$ (as it does not contain $\purpley$). It follows that 
    $$\safeLang{Y} \nsubseteq \safeLang{s}$$ for every state $s\in \Sbus$. Thus, $\Cmain$ is safe centralised.

    \emph{Safe minimal.} We need to show that for every pair of distinct states $p',q'\in \Sbus$,  $\safeLang{p'} \neq \safeLang{q'}$. 
    We split the proof in the following 5 cases.\\
\textbf{Case 1.} $p'=\circstate{\pi_i}$ and $q'=\sqrstate{\pi'_j}$. Then, a finite word $v$ in $L_j$ satisfies
 \[yv y^{\omega} \in \safeLang{\circstate{\pi_i}} \setminus \safeLang{\sqrstate{\pi'_j}}.\] \\
\textbf{Case 2.}  $p'=\circstate{\pi_i}$ and $q'=\circstate{\pi'_j}$ with $i\neq j$. Then, a word $v_j \in L_j$ satisfies 
\[y v_jv_jy^{\omega} \in \safeLang{\circstate{\pi'_j}}\setminus \safeLang{\circstate{\pi_i}}.\]\\
\textbf{Case 3.} $p'=\sqrstate{\pi_i}$ and $q'=\sqrstate{\pi'_j}$ with $i\neq j$. Then, a word $v_j$ in $L_j$ satisfies \[y v_j y^{\omega} \in \safeLang{\sqrstate{\pi_i}}\setminus\safeLang{\sqrstate{\pi'_j}}.\]\\
\textbf{Case 4.} $p'=\circstate{\pi_i},q'=\circstate{\pi'_i}$, for some $\pi \neq \pi'\in \{p,q,r,s,t\}$. Let  $e\in E = \{1,2,3\}$, $f\in F = \{4,5,6\}$, be such that $e,f\neq i$.
We have the following 5 subcases here. 

    \textbf{Case 4a.} $\pi\in \{p,t\}, \pi' \in \{q,r,s\}$. Then, the word $bf$ is such that $\circstate{\pi_i} \xrightarrow{bf} \circstate{p_i}$, and $\circstate{\pi'_i}\xrightarrow{bf} \sqrstate{p_f}$. Since $\safeLang{\circstate{p_i}}\neq \safeLang{\sqrstate{p_f}}$ from Case 1, we obtain that $\safeLang{\circstate{\pi_i}}\neq \safeLang{\circstate{\pi'_i}}$.
    
    \textbf{Case 4b.} $\pi = p,\pi'= t.$ Then, $\circstate{\pi_i} \xrightarrow{e} \circstate{p_i}$, whereas $\circstate{\pi'_i} \xrightarrow{e} \sqrstate{p_e}$. 
    We conclude by Case 1.
    
    \textbf{Case 4c.} $\pi = q, \pi' = r$. Then $\circstate{\pi_i}\xrightarrow{f} \circstate{p_i}$, whereas $\circstate{\pi'_i}\xrightarrow{f}\sqrstate{p_f}$. 
    We conclude by Case 1.
    
    \textbf{Case 4d.} $\pi = q, \pi'= s$. Then, $\circstate{\pi_i}\xrightarrow{e} \circstate{p_i}$, whereas $\circstate{\pi'_i}\xrightarrow{e}\sqrstate{p_e}$. 
    We conclude by Case 1.
    
    \textbf{Case 4e.} $\pi = r, \pi'=s$. Then, $\circstate{\pi_i} \xrightarrow{f} \sqrstate{p_f}$, and $\circstate{\pi'_i}\xrightarrow{f}\circstate{p_i}$. 
    We conclude by Case 1.
\vskip1em

\noindent\textbf{Case 5.} $p'=\sqrstate{\pi_i},q'=\sqrstate{\pi'_i}$, with $\pi \neq \pi'\in \{p,q,r,s,t\}$. Let $e\in E, f\in F$ be such that $e,f\neq i$. We have the following subcases here. 

    \textbf{Case 5a.} $\pi\in \{p,t\}, \pi' \in \{q,r,s\}$. Then, the word $bf$ is such that $\sqrstate{\pi_i} \xrightarrow{bf} \sqrstate{p_i}$, and $\sqrstate{\pi'_i}\xrightarrow{bf} \sqrstate{p_f}$. Since $\safeLang{\sqrstate{p_i}}\neq \safeLang{\sqrstate{p_f}}$ from Case 3, we obtain that $\safeLang{\sqrstate{\pi_i}}\neq \safeLang{\sqrstate{\pi'_i}}$.
    
    \textbf{Case 5b.} $\pi = p,\pi'= t.$ Then, $\sqrstate{\pi_i} \xrightarrow{e} \sqrstate{p_i}$, whereas $\sqrstate{\pi'_i} \xrightarrow{e} \sqrstate{p_e}$. 
    We conclude by Case 3.

    \textbf{Case 5c.} $\pi = q, \pi' = r$. Then $\sqrstate{\pi_i}\xrightarrow{f} \sqrstate{p_i}$, whereas $\sqrstate{\pi'_i}\xrightarrow{f}\sqrstate{p_f}$. 
    We conclude by Case 3.
    
    \textbf{Case 5d.} $\pi = q, \pi'= s$. Then, $\sqrstate{\pi_i}\xrightarrow{e} \sqrstate{p_i}$, whereas $\sqrstate{\pi'_i}\xrightarrow{e}\sqrstate{p_e}$. 
    We conclude by Case 3.
    
    \textbf{Case 5e.} $\pi = r, \pi'=s$. Then, $\sqrstate{\pi_i} \xrightarrow{f} \sqrstate{p_f}$, and $\sqrstate{\pi'_i}\xrightarrow{f}\sqrstate{p_i}$. 
    We conclude by Case 3.\\
    Thus, $\Cmain$ is safe minimal, safe centralised, semantically deterministic, HD, and normal, and thus, due to \cref{akimport-minimality}, statewise minimal.
\end{proof}

We note that \cref{lemma:cmain-minimality} also proves that every deterministic coBüchi automaton $\Dd$ recognising $L(\Cmain)$ requires at least $61$ states. We thus need to prove that  $\Dd$ additionally has at least $5$ more states, which we do in the next subsection. 

\subsubsection{Study of deterministic coB\"uchi automata for $\Lmain^c$}
\label{subsec:proof-of-succinctness}
We now prove the following result.
\begin{lemma}\label{lemma:66-is-needed}
Every deterministic coBüchi automaton that is language\=/equivalent to $\Cmain$ has at least 66 states.
\end{lemma}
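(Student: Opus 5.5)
Let $\Dd$ be a deterministic coBüchi automaton with $L(\Dd)=L(\Cmain)=\Lmain^c$; we must show $\Dd$ has at least $66$ states. The plan is to first normalise $\Dd$ without adding states, then use \cref{AK22import-propthreepointfour} to locate $61$ states of $\Dd$ that play the roles of the states of $\Cmain$, and finally show that a further $5$ states are needed to play the role of the nondeterministic component of $\Amain$.

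\textbf{Normalisation.} Restrict $\Dd$ to its reachable states and make it normal; this keeps it deterministic and the same size. Since $\Lmain$ is prefix-independent, so is its complement, so every state of $\Dd$ recognises $\Lmain^c$. Hence $\Dd$ is semantically deterministic, and being deterministic it is also HD and safe deterministic.

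\textbf{Sixty-one matched states.} Apply \cref{AK22import-propthreepointfour} with $\Ac=\Cmain$ and $\Bc=\Dd$. For each state $p$ of $\Cmain$ we get a state $q$ of $\Cmain$ and a state $s_p$ of $\Dd$ with $\safeLangA{\Cmain}{p}\subseteq\safeLangA{\Cmain}{q}=\safeLangA{\Dd}{s_p}$. Because $\Cmain$ is safe minimal and safe centralised (shown in the proof of \cref{lemma:cmain-minimality}), $q$ must lie in the same safe component as $p$. Inside the big component, safe minimality should force $q=p$: the case analysis of that proof exhibits distinguishing words, and I would check that these actually witness pairwise incomparability of the safe languages, not just inequality. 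Thus $\safeLangA{\Dd}{s_p}=\safeLangA{\Cmain}{p}$, and $p\mapsto s_p$ is injective. This gives $61$ distinct states of $\Dd$.

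\textbf{Five extra states.} Consider the initial state of $\Dd$, or more precisely any state reached right after reading $\purpley$ following a significant transition. Its safe language must contain $\Theju$, the words with no prefix in any $\langBlock{i}$: such words never complete a block, so they do not leave the non-accepting regime. I would argue this by following the safe run of $\Dd$ and comparing it with the runs of $\Amain$.

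I would then show that some safe component of $\Dd$ must contain a state $d$ with $\Theju\subseteq\safeLangA{\Dd}{d}$, and that the states safely reachable from $d$ over such prefixes form a deterministic safety automaton $\Dd'$ with $\Theju\subseteq L(\Dd')\subseteq\safeLangA{\Cmain}{q}$ for some block state $q$. The upper bound would come from language-equivalence, via the previous step applied within that component. By \cref{lemma:compy}, $\Dd'$ has at least $5$ states.

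It remains to show these $5$ states are disjoint from the $61$ matched ones. The idea is that a matched state $s_p$ has safe language exactly $\safeLangA{\Cmain}{p}$, which excludes some word of $\Theju$. For example, from block states a prefix $\langBlock{\beta}\langBlock{\beta}$ reached via $\reset{\alpha}$ leads out of the safe language. So any state safely reading every word of $\Theju$ is unmatched.

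\textbf{Main obstacle.} The delicate step is the last one: controlling which states of $\Dd$ can take part in the separating automaton, and proving they avoid the $61$ matched states. This requires the bookkeeping of \cref{cor:5states-dcore}, namely that the safe runs of $\Dd$ over $\Theju$ stay within states whose safe languages strictly extend those of $\Cmain$'s states. I would establish this first via a simulation argument between $\Dd$ and $\Amain$, and only then invoke the computer-aided \cref{lemma:compy}.
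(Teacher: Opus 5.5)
Your proposal has genuine gaps; the decisive one is in your third step. You claim the initial state (or a state entered by $\purpley$) has $\Theju$ in its safe language, and this is unfounded. All states of $\Dd$ recognise $\Lmain^c$, so the state with safe language $(\Sigma\setminus\{\purpley\})^\omega$ could be the initial one. Moreover, $\Theju\subseteq\Lmain^c$ only says that each run on a word of $\Theju$ is \emph{eventually} safe, not that one state reads all of $\Theju$ safely.

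Your disjointness argument is also wrong. Words of $\Theju$ contain no reset letter and never complete a block, so your $\langBlock{\beta}\langBlock{\beta}$ example is not in $\Theju$. In fact $\Theju\subseteq\safeLangA{\Cmain}{\circstate{p_i}}$ and $\Theju\subseteq\safeLangA{\Cmain}{\sqrstate{p_i}}$ for every $i$, which is exactly why \cref{lemma:compy} is not vacuous. So matched states do read $\Theju$ safely, and your $d$ may simply sit in the $60$-state matched component, where \cref{lemma:compy} gives nothing new.

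The paper needs two ideas you do not have.
\begin{itemize}
\item It deletes the $\purpley$-transitions inside $\Dbus$ and takes the end $\purpley$-SCC $\Dcore$, which already contains $60$ matched states. The extra $5$ states are then either a safe component outside $\Dbus$ containing a state that reads $\Theju$ safely, or a $\purpley$-SCC of $\Dbus\setminus\Dcore$ containing some $q$ with $\yTheju\subseteq\ysafelangD{q}$ (\cref{cor:5states-dbus,cor:5states-dcore}).
\item It never shows directly that such a state exists. Instead it argues by contradiction using the bound of $65$ states (\cref{lemma:imp2}). If no such state exists, every state outside $\Dcore$ can be driven into $\Dcore$ by a word of $\Fin{\Theju}$. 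The at most $5$ landing states match at most $5$ of the twelve $p$-states, so some index $j_0$ is unused. A word of $\langBlock{j_0}^2$ then forces a significant transition that must leave $\Dcore$. Alternating these words gives a word of $\Lmain^c$ on which $\Dd$ sees infinitely many significant transitions.
\end{itemize}
Your ``main obstacle'' paragraph defers exactly this pigeonhole argument to an unspecified simulation.

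Your second step also fails as stated, though it is repairable. The safe languages of $\Sbus$ are not pairwise incomparable: $\safeLangA{\Cmain}{\sqrstate{p_j}}\subsetneq\safeLangA{\Cmain}{\circstate{p_i}}$ for all $i,j$. Between reset letters both runs see the same block decomposition, and the bottom-row run leaves the safe component whenever the top-row one does; the paper's Case~1 only separates them in one direction. So \cref{AK22import-propthreepointfour} applied to $\sqrstate{p_j}$ may return a state of $\Dd$ with the safe language of some $\circstate{p_i}$, and $p\mapsto s_p$ need not be exact or injective.

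The paper instead applies \cref{AK22import-propthreepointfour} once. It then follows the non-significant $\reset{1}$-transition to get $d_0$ with exactly the safe language of $\circstate{p_1}$ (\cref{lemma:d_0}). It transports this along safe runs using determinism and strong connectivity of $\Sbus$, and only then invokes safe minimality. The state $d_y$ is obtained separately via safe centralisation.
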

Throughout this subsection, we fix $\Dmain$ to be a deterministic coBüchi automaton that recognises $L(\Cmain)$. 
The proof is by contradiction: we assume that $\Dmain$ has at most $65$ states, and we will conclude that it cannot recognise $L(\Cmain)$.

We assume, without loss of generality, that $\Dmain$ is normal, that is,  all its non-\significant transitions occur in some SCC in $\safe(\Dmain)$. Note that because $L(\Cmain)=L(\Dmain)$ is prefix-independent, every state of $\Dmain$ recognises the language $L(\Dmain)$.

We denote the safe component of $\Cmain$ that consists of every state of $\Cmain$ apart from~$Y$ by~$\Sbus$. We first prove existence of states in $\Dmain$ whose safe languages coincide with the states of $\Sbus$ in $\Cmain$ (\cref{lemma:bus-state-one}). For this, we use \cref{AK22import-propthreepointfour}. 

\begin{lemma}\label{lemma:d_0}
    There is a state $d_0$ in $\Dmain$ such that $$\safeLangA{\Dmain}{d_0}=\safeLangA{\Cmain}{\circstate{p_1}}.$$
\end{lemma}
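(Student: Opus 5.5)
We plan to apply \cref{AK22import-propthreepointfour} with $\Ac=\Cmain$, $\Bc=\Dmain$ and $p=\circstate{p_1}$. First we check the hypotheses. $\Dmain$ is deterministic, hence HD and safe deterministic. It is normal by our assumption. It is semantically deterministic because all its states recognise the prefix-independent language $L(\Dmain)$. $\Cmain$ is HD by \cref{lemma-sec5-cmain-iscomplementandhd}, and it is normal, SD and safe deterministic, as argued in the proof of \cref{lemma:cmain-minimality}. The two automata are language-equivalent by the definition of $\Dmain$.

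The proposition then yields a state $q$ of $\Cmain$ and a state $s$ of $\Dmain$ with
\[\safeLangA{\Cmain}{\circstate{p_1}}\subseteq\safeLangA{\Cmain}{q}=\safeLangA{\Dmain}{s}.\]
Taking $d_0=s$, it remains to show that $\safeLangA{\Cmain}{q}=\safeLangA{\Cmain}{\circstate{p_1}}$. We do this by showing that $\circstate{p_1}$ is maximal for inclusion of safe languages among the states of $\Cmain$, so that no state $q\neq\circstate{p_1}$ has a safe language containing that of $\circstate{p_1}$. This comparison is the main step.

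\textbf{Case $q=\sqrstate{Y}$.} This is impossible: $\purpley^\omega\in\safeLangA{\Cmain}{\circstate{p_1}}$, but $\purpley^\omega\notin\safeLangA{\Cmain}{Y}$, as in the safe-centralised part of the proof of \cref{lemma:cmain-minimality}.

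\textbf{Case $q\in\Sbus$.} Here safe minimality gives $\safeLangA{\Cmain}{q}\neq\safeLangA{\Cmain}{\circstate{p_1}}$, but we need the stronger fact that the inclusion $\safeLangA{\Cmain}{\circstate{p_1}}\subseteq\safeLangA{\Cmain}{q}$ fails. So we exhibit a word in $\safeLangA{\Cmain}{\circstate{p_1}}\setminus\safeLangA{\Cmain}{q}$, reusing the distinguishing words of \cref{lemma:cmain-minimality} in the right direction.
\begin{itemize}
\item If $q$ is a square state $\sqrstate{\pi_j}$, Case 1 of that proof gives the word $yvy^\omega$ with $v\in L_j$. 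It lies in $\safeLangA{\Cmain}{\circstate{p_1}}$ but not in $\safeLangA{\Cmain}{q}$, which is the direction we need.
\item If $q=\circstate{\pi_j}$ with $j\neq1$, Case 2 of that proof gives a word in $\safeLang{\circstate{\pi'_j}}\setminus\safeLang{\circstate{\pi_i}}$, which points the wrong way. Instead we take $yv_1v_1'y^\omega$ with $v_1,v_1'\in L_1$ (for $v_1'$, any word of $L_1$ not resetting the path, e.g.\ $c1$). From $\circstate{p_1}$, the $\purpley$ returns to $\circstate{p_1}$ and each word of $L_1$ resets to $\circstate{p_1}$, so the run stays safe. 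From $\circstate{\pi_j}$, the $\purpley$ leads to $\circstate{p_j}$, then $v_1$ leads to $\sqrstate{p_1}$ and $v_1'$ leads to $Y$, which is a significant transition in $\Cmain$.
\item If $q=\circstate{\pi_1}$ with $\pi\neq p$, we use a short word instead. The word $yc1$ is safe from $\circstate{p_1}$. For $\pi\in\{q,r,s\}$ we expect $bf$ (with $f\in F$) to separate the states: it keeps $\circstate{p_1}$ in its block, while from $\circstate{\pi_1}$ it exits to $\sqrstate{p_f}$, where a continuation $v_fy^\omega$ with $v_f\in L_f$ reaches $Y$. For $\pi=t$ we instead use $e$ with $e\in E$, $e\neq1$, followed by a word of $L_e$.
\end{itemize}

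The main obstacle is checking these directed witnesses against the transition figures, in particular the subcases with $\pi\in\{p,t\}$ in Cases 4 and 5 of \cref{lemma:cmain-minimality}. There the existing witnesses only separate the two states and do not show non-inclusion in the direction needed here.
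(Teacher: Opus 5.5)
Your proof is correct, but after the common first step it takes a different route from the paper.

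\textbf{Common step.} Both arguments apply \cref{AK22import-propthreepointfour} with $p=\circstate{p_1}$ and obtain
\[\safeLangA{\Cmain}{\circstate{p_1}}\subseteq\safeLangA{\Cmain}{q}=\safeLangA{\Dmain}{s}.\]

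\textbf{The paper's route.} The paper never identifies $q$. Safe centralisation only places $q$ in $\Sbus$. The reset letter $\reset{1}$ sends every state of $\Sbus$ to $\circstate{p_1}$ by a non-significant transition. Hence $s$ has a non-significant $\reset{1}$-successor $d_0$, with
\[\safeLangA{\Dmain}{d_0}={\reset{1}}^{-1}\safeLangA{\Cmain}{q}=\safeLangA{\Cmain}{\circstate{p_1}}.\]

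\textbf{Your route.} You show that $\safeLangA{\Cmain}{\circstate{p_1}}$ is inclusion-maximal, which forces $q=\circstate{p_1}$ and lets you take $d_0=s$. Your witnesses are correct:
\begin{itemize}
\item $\purpley^\omega$ for $Y$;
\item $\purpley v\purpley^\omega$ with $v\in L_j$ for $\sqrstate{\pi_j}$;
\item $\purpley v_1v_1'\purpley^\omega$ for $\circstate{\pi_j}$ with $j\neq 1$;
\item $bf\,v_f\purpley^\omega$ for $\circstate{q_1},\circstate{r_1},\circstate{s_1}$;
\item $e\,v_e\purpley^\omega$ with $e\in E\setminus\{1\}$ for $\circstate{t_1}$. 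Write the tail $\purpley^\omega$ explicitly here; it is safe from $\sqrstate{p_e}$.
\end{itemize}

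\textbf{Two small corrections to your remarks.} Case 2 of \cref{lemma:cmain-minimality} is stated for arbitrary $i\neq j$ and arbitrary $\pi,\pi'$, so it already gives the direction you need. Case 5 plays no role here, because every square state is handled by $\purpley v\purpley^\omega$. So the ``obstacle'' you flag at the end is already resolved by your own witnesses.

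\textbf{Trade-off.} The paper's argument is two lines, needs no case analysis, and is indifferent to which inclusions hold among safe languages. Any state of $\Sbus$ is synchronised to $\circstate{p_1}$ by $\reset{1}$. Your argument avoids the reset letters altogether. It also gives extra information: the state supplied by \cref{AK22import-propthreepointfour} is $\circstate{p_1}$ itself. The cost is a state-by-state verification that relies on $\circstate{p_1}$ happening to be inclusion-maximal, which need not hold for other target states.
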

\begin{proof}
    In \cref{AK22import-propthreepointfour}, we let $\Cmain$ be $\Ac$, $\Dmain$ be $\Bc$ and let $p$ be $\circstate{p_1}$. Then, there is a state $q$ in $\Cmain$ and a state $r$ in $\Dmain$ such that $$\safeLangA{\Cmain}{\circstate{p_1}}\subseteq \safeLangA{\Cmain}{q}=\safeLangA{\Dmain}{r}.$$

    Because $\Cmain$ is safe centralised (see proof of \cref{lemma:cmain-minimality}),  $\circstate{p_1}$ and $q$ are in the same safe component of $\Cmain$. Thus, let $d_0$ be such that we have the non-\significant transition $r\xrightarrow{\reset{1}}d_0$ in $\Dmain$, which  exists because $\safeLangA{\Cmain}{q}=\safeLangA{\Dmain}{r}$ and $q\xrightarrow{\reset{1}}\circstate{p_1}$ is a non\=/\significant transition in~$\Cmain$. It follows that the safe language of $d_0$ in $\Dmain$ is $\safeLangA{\Cmain}{\circstate{p_1}}$, as desired.   
\end{proof}
We let $\Dbus$ be the safe component of $\Dmain$ that contains $d_0$. The following lemma follows from \cref{lemma:d_0} above, and the strong-connectivity of $\Sbus$ and $\Dbus$.
\begin{lemma}\label{lemma:bus-state-one}
    For every state $p$ in $\Sbus$, there is a state $q$ in $\Dbus$ such that $$\safeLangA{\Dmain}{q}=\safeLangA{\Cmain}{p}.$$
\end{lemma}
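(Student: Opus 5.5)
The plan is to propagate the safe-language equality from $d_0$ to all of $\Sbus$ along non-\significant transitions, using strong connectivity of $\Sbus$ together with determinism of $\safe(\Cmain)$ and of $\safe(\Dmain)$.

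The basic observation concerns any states $c$ of $\Cmain$ and $d$ of $\Dmain$ with $\safeLangA{\Dmain}{d}=\safeLangA{\Cmain}{c}$, and any non-\significant transition $c\xrightarrow{\sigma}c'$ of $\Cmain$. Pick any word $w$ in $\safeLangA{\Cmain}{c'}$; this language is nonempty, since every state of $\Sbus$ can loop safely, e.g.\ on $\purpley^\omega$. Then $\sigma w\in\safeLangA{\Cmain}{c}=\safeLangA{\Dmain}{d}$, so $d$ has a non-\significant $\sigma$-transition to some $d'$. This transition is unique because $\Dmain$ is deterministic. Moreover, the safe runs from $d$ on words starting with $\sigma$ are exactly $\sigma$ followed by safe runs from $d'$, and similarly in $\Cmain$ since $\safe(\Cmain)$ is deterministic. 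Hence $\safeLangA{\Dmain}{d'}=\sigma^{-1}\safeLangA{\Dmain}{d}=\sigma^{-1}\safeLangA{\Cmain}{c}=\safeLangA{\Cmain}{c'}$.

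Now let $p\in\Sbus$. Since $\Sbus$ is a safe component, i.e.\ strongly connected in $\safe(\Cmain)$, there is a finite word $u$ labelling a path of non-\significant transitions from $\circstate{p_1}$ to $p$. Starting from $d_0$ given by \cref{lemma:d_0}, we apply the observation letter by letter along $u$ (induction on $|u|$). This yields a path of non-\significant transitions $d_0\xrightarrow{u}q$ in $\Dmain$ with $\safeLangA{\Dmain}{q}=\safeLangA{\Cmain}{p}$.

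It remains to check that $q\in\Dbus$, i.e.\ that $q$ lies in the same SCC of $\safe(\Dmain)$ as $d_0$. The path above shows that $q$ is reachable from $d_0$. For the converse, choose a word $v$ labelling a safe path from $p$ back to $\circstate{p_1}$ in $\Sbus$. Applying the same propagation from $q$ along $v$ gives a safe path $q\xrightarrow{v}d'$ with $\safeLangA{\Dmain}{d'}=\safeLangA{\Cmain}{\circstate{p_1}}$. This step needs the most care, since $d'$ need not equal $d_0$. The way out is to use a cycle: the walk $d_0\xrightarrow{uv}d_1\xrightarrow{uv}d_2\cdots$ stays among states with safe language $\safeLangA{\Cmain}{\circstate{p_1}}$, and by finiteness it eventually repeats, giving a safe cycle through some $d_k$ and through the corresponding $u$-successor of $d_k$. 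So we replace $d_0$ by such a $d_k$. Alternatively, one can invoke normality of $\Dmain$: every non-\significant transition lies inside an SCC of $\safe(\Dmain)$, so the safe path from $d_0$ to $q$ stays within a single SCC, namely $\Dbus$. The normality argument is the cleanest, and it is the one the proof will use; it gives $q\in\Dbus$ directly.
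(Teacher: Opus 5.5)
Your argument is correct and matches the paper's: you take a safe path $u$ from $\circstate{p_1}$ to $p$ in $\Sbus$ and follow it by the deterministic safe run from $d_0$, using determinism of $\safe(\Cmain)$ and $\safe(\Dmain)$ to obtain $\safeLangA{\Dmain}{q}=\safeLangA{\Cmain}{p}$. Your use of the normality of $\Dmain$ (which the paper assumes without loss of generality) to place $q$ in $\Dbus$ spells out what the paper only alludes to, and you are right to drop the cycle-based alternative, since replacing $d_0$ would change $\Dbus$ itself.
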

Concretely, let $u$ be a word on which there is a run from $\circstate{p_1}$ to $p$ in $\Sbus$. Then, since $\safeLangA{\Cmain}{\circstate{p_1}}=\safeLangA{\Dmain}{d_0}$, the state $q$ to which there is a unique-safe run from $d_0$ to $q$ satisfies  $\safeLangA{\Dmain}{q}=\safeLangA{\Cmain}{p}.$ 

Since $\Cmain$ is safe-minimal, this implies that $\Dbus$ has at least 60 states. Similar to \cref{lemma:d_0}, we can also prove that there is a state $d_y$ in $\Dmain$ whose safe language is equivalent to that of $Y$ in $\Amain$.
\begin{lemma}\label{lemma:d_y}
There is a state $d_y$ in $\Dmain$ such that $$\safeLangA{\Dmain}{d_y}=\safeLangA{\Cmain}{Y}.$$
\end{lemma}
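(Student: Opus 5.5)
The plan is to mirror the proof of \cref{lemma:d_0}, this time applying \cref{AK22import-propthreepointfour} to the state $Y$. To do so, we need $\Cmain$ and $\Dmain$ to satisfy its hypotheses. Both are normal, semantically deterministic (all states of both recognise the prefix-independent language $L(\Cmain)$) and safe deterministic. $\Cmain$ is HD by \cref{lemma-sec5-cmain-iscomplementandhd}, and $\Dmain$ is deterministic, hence HD. We take $\Ac=\Cmain$, $\Bc=\Dmain$ and $p=Y$. The lemma then gives a state $q$ of $\Cmain$ and a state $r$ of $\Dmain$ with
\[\safeLangA{\Cmain}{Y}\subseteq \safeLangA{\Cmain}{q}=\safeLangA{\Dmain}{r}.\]

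The key step is to show $q=Y$. All states of $\Cmain$ are language-equivalent, so safe centralisation (established in the proof of \cref{lemma:cmain-minimality}) says that no two states in different safe components have comparable safe languages. Since $\safeLang{Y}\subseteq\safeLang{q}$, the state $q$ must lie in the same safe component as $Y$, namely $\{Y\}$. Concretely, we can reuse the witness from that proof: for every $s\in\Sbus$ there is a word $w\in(\Sigma\setminus\{\purpley\})^\omega$ with $w\in\safeLang{Y}\setminus\safeLang{s}$. This rules out $q\in\Sbus$. Hence $q=Y$, and we set $d_y=r$, which gives $\safeLangA{\Dmain}{d_y}=\safeLangA{\Cmain}{Y}$.

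Unlike \cref{lemma:d_0}, no extra shift along a reset letter is needed, because the inclusion collapses to an equality once $q=Y$. The only point needing care is this safe-centralisation step, and it was already checked when proving minimality of $\Cmain$. I therefore expect no genuine obstacle.
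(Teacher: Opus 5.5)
Your proof is correct and follows the paper's argument: apply \cref{AK22import-propthreepointfour} with $\Ac=\Cmain$, $\Bc=\Dmain$, $p=Y$, then use safe centralisation of $\Cmain$ and the fact that $\{Y\}$ is a singleton safe component to force $q=Y$, so the inclusion becomes an equality. You also check the hypotheses for $\Dmain$ explicitly (normality, semantic and safe determinism) and reuse the concrete witness from the safe-centralisation proof, which the paper leaves implicit.
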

\begin{proof}
    By \cref{AK22import-propthreepointfour}, there is a state $d_y$ in $\Dmain$ and a state $q$ in $\Cmain$ such that $$\safeLangA{\Cmain}{Y}\subseteq \safeLangA{\Cmain}{q}=\safeLangA{\Dmain}{d_y}.$$
    We observe that $q$ must be $Y$, since $\Cmain$ is safe centralised and the safe component consisting of $Y$ is a singleton. Thus, the above language inclusion is an equality, as desired. 
\end{proof}

Thus, $\Dmain$ has at least 61 states: at least 60 states in $\Dbus$, one state $d_y$, and possibly more. Towards proving that $\Dmain$ has at least $66$ states, we need to introduce a few notions.
\subparagraph{The subautomaton $\Dcore$.}
First,  we consider the SCC decomposition of the safe component~$\Dbus$ in the absence of transitions on~$y$. We call this the $y$-SCC decomposition. 
For~a state $p$ in $\Dbus$, we let $\ysafelangD{p}$ be the language of words $w$ such that $w$ does not contain~$y$, and the run on $w$ from $p$ stays in the same $y$-SCC of $\Dbus$.

Let $\Dcore$ be a subautomaton that is an end-SCC in the $y$-SCC-decomposition of $\Dbus$. We prove that $\Dcore$ has at least 60 states.
\begin{lemma}\label{lemma:Dcore-corresponding-safe-state}
    For every state $q$ in $\Dcore$, there is a state $p$ in $\Sbus$ such that $$\safeLangA{\Dmain}{q}=\safeLangA{\Cmain}{p}.$$
\end{lemma}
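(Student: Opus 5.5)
Plan: the idea is to use the reset letters, which force every $y$-free safe run back onto a known state of $\Sbus$. Fix a state $q$ of $\Dcore$. By \cref{lemma:d_0} and \cref{lemma:bus-state-one}, each state of $\Dbus$ reachable from $d_0$ by safe runs has a corresponding state of $\Sbus$. However, $q$ lies in an end-SCC of the $y$-free decomposition, and we do not yet know that $q$ is one of these corresponding states. So we proceed in two steps.

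\emph{Step 1: some state of $\Dcore$ corresponds to $\circstate{p_1}$.} In $\Cmain$, the letter $\reset{1}$ leads from every state of $\Sbus$ to $\circstate{p_1}$ by a non-\significant transition. I would first show that $\reset{1}$ is safe from $q$ in $\Dmain$. Since $q\in\Dbus$ and $\Dbus$ is a safe component containing $d_0$, there is a safe run $d_0\xrightarrow{u} q$. By \cref{lemma:bus-state-one} and determinism of $\safe(\Dmain)$, the state $q_0$ reached from $d_0$ on $u$ in $\safe(\Cmain)$ has $\safeLangA{\Cmain}{q_0}=\safeLangA{\Dmain}{q}$. (This is exactly the argument given after \cref{lemma:bus-state-one}.) Hence $\reset{1}$ is safe from $q$, and $q\xrightarrow{\reset{1}} q'$ with $\safeLangA{\Dmain}{q'}=\safeLangA{\Cmain}{\circstate{p_1}}$. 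Because $\reset{1}\neq y$ and the run remains in a safe component, normality of $\Dmain$ places $q'$ in $\Dbus$. Since $\Dcore$ is an end-SCC for the $y$-free transitions inside $\Dbus$, we get $q'\in\Dcore$.

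\emph{Step 2: every state of $\Dcore$ corresponds.} From $q'$, the $y$-free safe transitions of $\Dbus$ keep us inside $\Dcore$. Any state $q''$ of $\Dcore$ is reachable from $q'$ by a $y$-free word $v$ within $\Dcore$, by strong connectivity of the end-SCC. Following the same word $v$ from $\circstate{p_1}$ in $\safe(\Cmain)$, equality of safe languages is preserved letter by letter, because both automata are deterministic on safe transitions and the letters are safe. So $q''$ corresponds to the state $\circstate{p_1}\cdot v$ of $\Sbus$. The target stays in $\Sbus$ and never reaches $Y$, since all transitions into $Y$ are \significant in $\Cmain$. Taking $q''=q$ gives the statement.

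The main subtlety I expect is the very first claim: that an arbitrary $q\in\Dcore\subseteq\Dbus$ already has a matching state in $\Sbus$. Strictly, this already follows from the fact that $q$ is safely reachable from $d_0$ inside the safe component $\Dbus$, together with the remark after \cref{lemma:bus-state-one}. So Step 1 mainly serves to record that $\Dcore$ contains a copy of $\circstate{p_1}$, which later lemmas will need. The remaining care is checking that safe-language equality transfers through transitions, which requires safe determinism of both automata; that holds by construction.
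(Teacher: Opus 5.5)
Your proof is correct, and its core (take a safe run $d_0\xrightarrow{u}q$ inside $\Dbus$, then use $\safeLangA{\Dmain}{d_0}=\safeLangA{\Cmain}{\circstate{p_1}}$ and safe determinism to get a run $\circstate{p_1}\xrightarrow{u}p$ in $\safe(\Cmain)$ that stays in $\Sbus$) is exactly the paper's proof. The detour through $\reset{1}$ and Step~2 is redundant: your worry that ``we do not yet know'' $q$ has a matching state is unfounded, since every state of $\Dbus$ is safely reachable from $d_0$. Also, the phrase ``reached from $d_0$ on $u$ in $\safe(\Cmain)$'' should start from $\circstate{p_1}$ instead.
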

\begin{proof}
    Let $d_0$ be the state in $\Dbus$ given by \cref{lemma:d_0}, and $d_0 \xrightarrow{u} q$ a safe run in $\Dbus$ reaching~$q$. 
    Then, $\circstate{p_1} \xrightarrow{u} p$ is the desired state.
\end{proof}

\begin{lemma}\label{lemma:Dcore-one-state-for-everythingin60}
    For every state $p$ in $\Sbus$, the subautomaton $\Dcore$ contains a state $q$ such that $$\safeLangA{\Dmain}{q}=\safeLangA{\Cmain}{p}.$$  
\end{lemma}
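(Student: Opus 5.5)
Fix a state $q_0$ of $\Dcore$. By \cref{lemma:Dcore-corresponding-safe-state}, there is a state $p_0\in\Sbus$ with $\safeLangA{\Dmain}{q_0}=\safeLangA{\Cmain}{p_0}$. Given a target $p\in\Sbus$, the plan is to find a $y$-free word $u$ such that $p_0\xrightarrow{u}p$ is a safe run in $\Cmain$. Then $u\in\safeLangA{\Cmain}{p_0}=\safeLangA{\Dmain}{q_0}$, so the deterministic safe run of $\Dmain$ from $q_0$ on $u$ exists. It ends in a state $q$ with $\safeLangA{\Dmain}{q}=u^{-1}\safeLangA{\Dmain}{q_0}=u^{-1}\safeLangA{\Cmain}{p_0}=\safeLangA{\Cmain}{p}$, where the last equality holds because $\safe(\Cmain)$ is deterministic. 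The run reads no $y$. It also cannot leave the $y$-SCC of $q_0$, since $\Dcore$ is an end-SCC of the $y$-SCC decomposition (a terminal SCC of the graph of non-$y$ safe transitions of $\Dbus$) and $q_0$ lies in it. Moreover, safe transitions of a normal automaton stay inside the safe component $\Dbus$. Hence $q\in\Dcore$, as required.

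So everything reduces to a purely combinatorial fact about $\Cmain$: the 60 states of $\Sbus$ form a single strongly connected component when we use only non-significant transitions on letters other than $y$. I would check this directly from \cref{fig:transitionsoverabcy,fig:transitionsoverEFr}. First, for every state $s\in\Sbus$ and every $i\in[6]$, the reset letter gives a safe transition $s\xrightarrow{\reset{i}}\circstate{p_i}$; this is exactly the transition used to glue the big blue box in the construction. Hence every $\circstate{p_i}$ is $y$-free reachable from everywhere.

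It then remains to reach every block state from some $\circstate{p_i}$ without using $y$. Within the $i$-th top box, reading $a$, $b$ and $c$ reaches $\circstate{q_i},\circstate{r_i},\circstate{s_i},\circstate{t_i}$ from $\circstate{p_i}$, since the gadget of \cref{fig:LanguageL1} is strongly connected over $A$. To reach the bottom row, from $\circstate{p_i}$ read a word $v\in L_j$ with $j\neq i$ that contains no $y$, for instance $cj$ if $j\in E$ or $abj$ if $j\in F$. This word leads safely to $\sqrstate{p_j}$. The remaining square states $\sqrstate{q_j},\sqrstate{r_j},\sqrstate{s_j},\sqrstate{t_j}$ are then reached from $\sqrstate{p_j}$ by words over $A$.

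The main obstacle is only bookkeeping: confirming from the figures that all these transitions are indeed non-significant in $\Amain$, hence present in $\safe(\Cmain)$, and in particular that the reset transitions from the square states and the block-to-block transitions on $E\cup F$ are not significant. The transitions into $Y$ and out of the top component are significant, but they are never used in this argument.
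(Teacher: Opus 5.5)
Your proof is correct and is essentially the paper's argument: fix a state of $\Dcore$, match it to a state of $\Sbus$ via \cref{lemma:Dcore-corresponding-safe-state}, follow a $y$-free safe path in $\Sbus$, and use that $\Dcore$ is an end $y$-SCC; beyond the paper, you also check the $y$-free strong connectivity of $\Sbus$ (via the reset letters and the words $cj$, $abj$), which the paper only asserts. One small slip is that the $5$-state gadget is not strongly connected over $A$ (the states $q,r,s$ form a set closed under $a,b,c$), but you only use that $q,r,s,t$ are reachable from $p$ over $A$, which is true.
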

\begin{proof}
    This follows from the observation that, in the automaton $\Cmain$, $\Sbus$ is strongly-connected even in the absence of transitions on $y$. 

    Concretely, let $q'$ be a state in $\Dcore$, and let $p'$ in $\Sbus$ be a state such that $$\safeLangA{\Dmain}{q'}=\safeLangA{\Cmain}{p'};$$  such a $p'$ exists due to the previous lemma.
    
    Let $p' \xrightarrow{u'} p$ be a safe run in $\Sbus$ such that $u'$ does not contain letter $y$. Then the corresponding run from $q'$ on $u'$ in $\Dmain$ leads to the desired state $q$: note that $q$ is in $\Dcore$ because $\Dcore$ is an end-$y$-SCC.
\end{proof}
Thus, we note that if the $y$-SCC decomposition of $\Dbus$ contains two end components, then the automaton $\Dmain$ has size at least 120. We thus suppose that the $y$-SCC decomposition of $\Dbus$ has exactly one end component, which we call $\Dcore$. 

\begin{lemma}\label{lemma:Dbus-dcore}
    Suppose that $\Dmain$ has at most $65$ states.
Then, the $y$-SCC decomposition of~$\Dbus$ has exactly one end component, which we denote by $\Dcore$.
\end{lemma}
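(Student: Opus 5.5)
The plan is a counting argument combined with the two preceding lemmas. First, the $y$-SCC decomposition of $\Dbus$ is a finite directed acyclic graph of $y$-SCCs, so it has at least one end component (a $y$-SCC from which no transition on a letter other than $y$ leaves it inside $\Dbus$). Thus only uniqueness needs proof. Suppose towards a contradiction that there are two distinct end components $\Dd_1$ and $\Dd_2$. Both are end-$y$-SCCs of $\Dbus$, so \cref{lemma:Dcore-one-state-for-everythingin60} applies to each of them; that proof only used that the subautomaton is an end-$y$-SCC together with \cref{lemma:Dcore-corresponding-safe-state}.

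For each $i\in\{1,2\}$ and each of the $60$ states $p\in\Sbus$, this gives a state $q_i(p)\in\Dd_i$ with $\safeLangA{\Dmain}{q_i(p)}=\safeLangA{\Cmain}{p}$. Since $\Cmain$ is safe minimal (proof of \cref{lemma:cmain-minimality}), distinct states of $\Sbus$ have distinct safe languages. Hence $p\mapsto q_i(p)$ is injective, and each $\Dd_i$ has at least $60$ states. The two end components are disjoint because they are different SCCs, so $\Dbus$ has at least $120$ states.

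This contradicts the assumption that $\Dmain$ has at most $65$ states. It even contradicts it without counting $d_y$, which lies outside $\Dbus$ by \cref{lemma:d_y}, since its safe language $(\Sigma\setminus\{\purpley\})^\omega$ differs from every safe language of $\Sbus$.

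The main step to check is that \cref{lemma:Dcore-one-state-for-everythingin60} really applies to an arbitrary end-$y$-SCC, and not just a fixed one. Its proof needs three things: some state $q'$ of the component corresponds via \cref{lemma:Dcore-corresponding-safe-state} to a state $p'\in\Sbus$; $\Sbus$ is strongly connected without using the letter $y$; and runs from $q'$ on $y$-free safe words stay in the component because it is an end component. None of these depends on which end component is chosen, so the argument goes through.
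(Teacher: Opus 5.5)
Your proposal is correct and follows the same argument as the paper: by \cref{lemma:Dcore-one-state-for-everythingin60} and safe minimality of $\Cmain$, every end component of the $y$-SCC decomposition of $\Dbus$ has at least $60$ states, so two disjoint end components would force at least $120$ states, contradicting the bound of $65$. Your check that \cref{lemma:Dcore-one-state-for-everythingin60} applies to an arbitrary end-$y$-SCC, and your observation that at least one end component exists by finiteness, supply details the paper leaves implicit.
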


\subsubsection{Computer-aided lemmas}

\subparagraph{$\Theju$-safety.}
Let $\Theju$ be the language of infinite words  over $\Sigma\setminus \{r_1,\dots,r_6\}$ for which no prefix is in $L_i$ for $i\in[6]$. This language $\Theju$ is recognised by the safety automaton shown in \cref{fig:languageTheju}. We let $\Theju_y$ be the subset of words in $\Theju$ that contain no $y$.

\begin{remark}\label{rmk:Theju-in-Cmain}
    It holds that
    \[\Theju_y\subseteq \Theju \subseteq \Lmain^c = L(\Cmain) = L(\Dmain).\]
\end{remark}

\begin{figure}[h]
    \centering
\begin{tikzpicture}[
        base dotred/.style={
        inner sep=0pt,      
        minimum size=5pt,   
        fill=red!50!white, 
        font=\scriptsize    
    },
    base dot/.style={
        inner sep=0pt,      
        minimum size=5pt,   
        fill=blue!60!black, 
        font=\scriptsize    
    },
    cstate/.style={base dotred, circle},  
    node distance=1.5cm,
    tight label/.style={auto, inner sep=0.5pt, text=black},
    sstate/.style={base dot, rectangle}, 
    house peak H/.store in=\peakH,
    house peak H=-1.6cm,
    house eaves H/.store in=\eavesH, 
    house eaves H=0.15cm,
    house base H/.store in=\baseH,   
    house base H=-0.7cm,
    house width/.store in=\houseW,   
    house width=0.5cm,
    group sep x/.store in=\gSepX, 
    group sep x=2.6cm, 
    group sep y/.store in=\gSepY, 
    group sep y=3.2cm, 
        row top y/.store in=\rowTopY,
    row bot y/.store in=\rowBotY,
    top x off/.store in=\topX,
    bot x off/.store in=\botX,
    row top y=0.6cm,
    row bot y=-0.9cm,
    top x off=0.6cm,
    bot x off=1.2cm
]
    \filldraw [fill=violet!10, draw=violet!30] (-2.5,3.2) rectangle (2.8,-0.5);

    
\foreach \i in {1} {
    \coordinate (center\i) at ({(\i-1)*\gSepX}, 0);

    \node[cstate] (p\i) at (-1,2) {}; 
    \node[cstate] (q\i) at (1,2)  {};

    \node[cstate] (t\i) at (-1.5, 0) {};
    \node[cstate] (s\i) at (0,0)      {};
    \node[cstate] (r\i) at (2,0)  {};
}




 \foreach \i in {1} {
  \path[->,>=stealth, thick, shorten >=1pt]
    (p\i) edge [loop left] node [above,tight label] {$b,{\color{purple}E,F}$} (p\i)
    (q\i) edge [loop right] node [tight label] {$a$} (q\i)
    
    (r\i) edge [loop right] node [tight label] {$b$} (r\i)
    (t\i) edge [loop left] node [tight label, yshift=0cm, xshift=-0.05cm] {$c$} (t\i)

    (p\i) edge  node [tight label,below] {$a$} (q\i)
    (q\i) edge [purple,bend right] node [tight label,above] {${\color{purple}{E,F}}$} (p\i)

    (q\i) edge [bend left = 10] node [tight label] {$b$} (r\i)   
    (r\i) edge [bend left = 10] node [tight label,yshift=0.1cm, xshift=-.05cm] {$a$} (q\i)
     
    (r\i) edge [bend left = 10] node [tight label] {$c$} (s\i)  
    (s\i) edge [bend left = 10] node [tight label] {$b$} (r\i)       
    (p\i) edge [bend right =10] node [ tight label, left] {$c$} (t\i)
    (t\i) edge [bend right=10] node [tight label, right] {$b$} (p\i)

    (q\i) edge [bend left = 10] node [tight label, yshift=0cm] {$c$} (s\i)  
    (s\i) edge [bend left = 10] node [tight label,yshift=-0.15cm] {$a$} (q\i)   
    (t\i) edge  node [tight label, left, yshift=-0.20cm, xshift=-.35cm] {$a$} (q\i)
    
    (t\i) edge [bend left=40, purple] node [tight label, left] {${\color{purple}F}$} (p\i)

    (r\i) edge [purple,out=60,in=60, looseness=1.8] node [tight label, above,xshift=0.1cm] {{\color{purple}$E$}} (p\i)  
    (s\i) edge [purple] node [tight label,xshift=0.1cm,yshift=0.6cm] {$\color{purple}F$} (p\i)   




    (s\i) edge [loop left] node [tight label,left,xshift=-0.1cm,yshift=0cm] {$c$} (s\i);
    ;


    
     \path[-stealth] 
    (-1.2,3.2) edge [thick, out=110,in=130,looseness=3] node [xshift=0.0cm,left] {$y$} (p\i);

    
    

}
\end{tikzpicture}
    \caption{The automaton for the language $\Theju$. The initial state is the top left one.
    The automaton for language $\Theju_y$ is obtained by removing the transitions on~$y$.}
    \label{fig:languageTheju}
\end{figure}
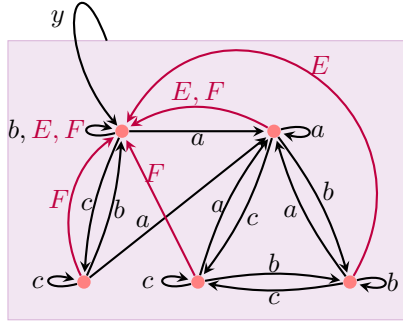


\subparagraph*{Computer-aided lemma.}
We now prove the following result by the aid of computers, namely, the tool DFAMiner~\cite{DLS24}. 

\begin{restatable}{lemma}{compaid}\label{lemma:compy}
        Let $\Dd$ be a deterministic safety automaton that separates $\Theju$ (resp.\ $\Theju_y$) and $\safeLangA{\Cmain}{q}$ for some state $q$ in $\Sbus$, i.e.,
    $$\Theju\subseteq L(\Dd) \subseteq \safeLangA{\Cmain}{q}.$$
    Then, $\Dd$ has at least $5$ states.
\end{restatable}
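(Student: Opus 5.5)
The plan is to reduce the statement to finitely many instances of a separation problem for finite-word languages. Each instance will be settled by a SAT call through DFAMiner, after two preliminary reductions that shrink the number of cases.

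\emph{Reduction to the $\Theju_y$ case.} Since $\Theju_y\subseteq\Theju$, any automaton with $\Theju\subseteq L(\Dd)$ also satisfies $\Theju_y\subseteq L(\Dd)$. So it suffices to prove the bound for separators of $\Theju_y$ and $\safeLangA{\Cmain}{q}$.

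\emph{Reduction to finite words.} Both $\Theju_y$ and $\safeLangA{\Cmain}{q}$ are safety languages, and $\Dd$ is a deterministic safety automaton. So $L(\Dd)$ is the set of infinite words all of whose prefixes lie in the prefix-closed regular language $P$ accepted by $\Dd$ read as a DFA with all states final.

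The inclusion $\Theju_y\subseteq L(\Dd)$ forces $P$ to contain all prefixes of words of $\Theju_y$. The automaton of \cref{fig:languageTheju} is complete and has no dead states, so this is the set $\Fin{\Theju_y}$ of finite $y$-free words over $\Sigma\setminus\{\reset{1},\dots,\reset{6}\}$ having no prefix in any $\langBlock{i}$.

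Dually, every finite word $u$ leaving $\safe(\Cmain)$ from $q$ must be rejected by $P$. For such $u$ and any continuation, no infinite extension of $u$ lies in $\safeLangA{\Cmain}{q}$. Conversely, if $u$ stays safe in the deterministic block component, it extends safely.

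Thus $\Dd$ yields a DFA with at most as many states separating the positive sample $\Fin{\Theju_y}$ from the negative set $N_q$, consisting of the words leaving $\safe(\Cmain)$ from $q$. It is enough to show that no $4$-state DFA separates them.

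\emph{Reduction to finitely many cases.} The block states come in $60$ cases: $q=\circstate{\pi_i}$ or $q=\sqrstate{\pi_i}$, with $\pi\in\{p,q,r,s,t\}$ and $i\in[6]$. By the symmetry of $\Amain$, permuting indices within $E$ and within $F$ preserves $\Theju_y$ and maps blocks to blocks. This leaves representatives $i=1$ and $i=4$, for circle versus square and for each of the five states $\pi$, so roughly twenty instances.

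For each instance I would generate a finite sample. The positive part consists of all words of $\Fin{\Theju_y}$ up to some length bound $n$ (say $8$--$10$). The negative part consists of the shortest words of $N_q$ up to the same bound, obtained by BFS in $\Cmain$: reach a significant (missing) transition, for instance reading some $\langBlock{j}\langBlock{j}$ or a reset letter where none is safe. Any DFA separating the full languages in particular separates these finite samples. So it suffices for DFAMiner to report UNSAT for $4$ states on each sample; monotonicity in the sample then gives the lower bound.

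The main obstacle is choosing a sample that is small enough for the solver yet already forces $5$ states. The intended reason is the gadget of \cref{fig:beatReplaceDFAs}: a separator must remember after $a$ whether a pending $c$ or $b$ was seen on both branches, mimicking the five-state right automaton.

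To address this, I would first hand-pick negative witnesses such as $c\,e$ and $aA^*b\,f$ completions adapted to $q$, together with positive words such as $aA^*c f$ and $aA^*b e$. I would grow the length bound until the SAT instance becomes unsatisfiable, and report the resulting certificates.
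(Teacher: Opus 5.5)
Your reductions are sound, and your overall strategy is the paper's: reduce to a separation problem for languages of finite words, then let a SAT-based tool certify that no small separator exists. The route differs in three ways.

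\textbf{Fewer instances.} The paper does not treat $20$ symmetry classes. Since $1^{-1}\Theju=\Theju$, and the $1$-successor $q'$ of a block state in $\safe(\Cmain)$ is always some $\circstate{p_i}$ or $\sqrstate{p_i}$ (when it exists; otherwise $1a^\omega$ shows no separator exists), a separator for $q$ restarted at $d'$, where $d_0\xrightarrow{1}d'$, is a separator for $q'$. This leaves only $\circstate{p_2},\circstate{p_5},\sqrstate{p_1},\sqrstate{p_4}$.

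\textbf{Smaller alphabet.} The paper restricts the alphabet to $\Gamma=\{a,b,c,1,4\}$ via \cref{prop:restriction-letters}. Since $y\notin\Gamma$, this covers $\Theju$ and $\Theju_y$ at once, as your first reduction does.

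\textbf{Exact encoding instead of samples.} The paper encodes the exact pair as a $16$-state product 3-DFA: $\Fin{\Theju_\Gamma}$ as positives, and the complement of $\Fin{\safeLangA{\Cmain^\Gamma}{q}}$ as negatives. DFAMiner then returns the exact minimum, $6$. This gives a small, exact, reproducible certificate.

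Your sample-based method is valid by monotonicity. It must succeed in principle if the lemma holds, because there are finitely many candidate DFAs and each non-separator is refuted by a finite witness. However, it is one-sided and gives no a priori bound on the sample needed. Moreover, the set of all words of $\Fin{\Theju_y}$ up to length $8$--$10$ over $a,b,c,1,\dots,6$ has roughly $10^7$ to $10^9$ elements. The natural compact way to give such data to DFAMiner is precisely the paper's product 3-DFA, which makes the length bound unnecessary.

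Some points need fixing.
\begin{itemize}
\item DFAMiner, like the paper's count, works with complete DFAs. A $4$-state safety automaton therefore corresponds to a $5$-state DFA with a rejecting sink. You need UNSAT at $5$ states (minimum $6$, as in \cref{lemma:simpler-comp}), not at $4$; UNSAT at $4$ only yields $\geq 4$.
\item If you also shrink the alphabet, your circle representatives $i=1,4$ are the bad choice. The paper notes that over $\{a,b,c,1,4\}$ the instances $\circstate{p_1}$ and $\circstate{p_4}$ do \emph{not} force $5$ states. Your witnesses must then use letters from $E\setminus\{1\}$ or $F\setminus\{4\}$.
\item The automaton of \cref{fig:languageTheju} is not complete; for example, there is no run on $c1$. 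What you actually use is that it is trim.
\item You should likewise first delete from $\Dd$ all states with no infinite continuation. Otherwise the prefix-closed language $P$ need not avoid $N_q$.
\end{itemize}
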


Finding minimal DFAs (or deterministic safety automata) separating regular languages  is an NP-complete problem~\cite{Gold78}. We thus use DFAMiner, a SAT-solver based tool for separating regular languages~\cite{DLS24}. Nonetheless, instead of plugging a 60-state~($\Sbus$) and a 5-state automaton (the automaton for $\Theju$) into this tool, we make certain reductions that allow us to considerably reduce the state-space of automata that we input. We describe precise details.

We note that for two safety automata $\Ac$ and $\Bc$, $L(\Ac)\subseteq L(\Bc)$ if and only if $\Fin{L(\Ac)}\subseteq \Fin{L(\Bc)}$, where $\Fin{L}$ is the set of prefixes of words in $L$. Thus, in this section, we view our safety automata as DFAs, where every state is accepting, together with an additional rejecting sink state for completion.

Towards proving \cref{lemma:compy}, we first restrict candidates for our state $q$ in $\Sbus$. Let~$d_0$ be the initial state of $\Dd$. Let $q\xrightarrow{1}q'$, and  $d_0\xrightarrow{1}d'$. We note that $1 \equiv_L \varepsilon$ in $\Theju$ and $\Theju_{y}$, i.e., $1$ and $\varepsilon$ are in the same Myhill-Nerode equivalence classes. Thus, $(\Dd,d')$ separates $\Theju$ and $\safeLangA{\Cmain}{q'}$. But note that $q'$ is either  $\circstate{p_i}$ or $\sqrstate{p_i}$, for some $i\in[6]$. Furthermore, in $\Amain$, the role of $\{1,2,3\}$ is symmetric, as is of $\{4,5,6\}$. Thus, it suffices to prove \cref{lemma:compy} for when $q'$ is a state in $\{\circstate{p_2},\circstate{p_5},\sqrstate{p_1},\sqrstate{p_4}\}$. 

\begin{remark}
A reader might ask: why $\circstate{p_2}$ or $\circstate{p_5}$ instead of $\circstate{p_1}$ and $\circstate{p_4}$ above? This is done for the following technical reason: when we restrict the alphabet set to $\{a,b,c,1,4\}$, \cref{lemma:simpler-comp} below is not true if $q$ is $\circstate{p_1}$ or $\circstate{p_4}$.
\end{remark}

For a DFA $\Aa$ over the alphabet $\Sigma$ and a subset $\Gamma\subseteq \Sigma$, we use $\Aa^\Gamma$ to denote the DFA obtained by restricting the transitions of $\Aa$ to the alphabet $\Gamma$.
We make another simplification, based on the following simple observation. 

\begin{proposition}\label{prop:restriction-letters}
Let $\Aa,\Bb,$ and $\Cc$ be three DFA over $\Sigma$ such that $$L(\Aa)\subseteq L(\Bc)\subseteq L(\Cc).$$ Then, for every $\Gamma\subseteq \Sigma$, 
$$L(\Aa^{\Gamma})\subseteq L(\Bc^{\Gamma})\subseteq L(\Cc^{\Gamma}).$$
\end{proposition}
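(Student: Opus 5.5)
The plan is to fix an arbitrary word and prove the inclusions separately, since the two claims $L(\Aa^{\Gamma})\subseteq L(\Bc^{\Gamma})$ and $L(\Bc^{\Gamma})\subseteq L(\Cc^{\Gamma})$ are instances of a single fact: for any DFA $\Dd$ over $\Sigma$ and any $\Gamma\subseteq\Sigma$,
\[ L(\Dd^{\Gamma}) = L(\Dd)\cap\Gamma^*. \]
Given this identity, intersecting both sides of $L(\Aa)\subseteq L(\Bc)\subseteq L(\Cc)$ with $\Gamma^*$ preserves the inclusions. This gives the statement immediately.

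To prove the identity, I take a word $w\in\Gamma^*$. The run of $\Dd^{\Gamma}$ on $w$ uses only transitions on letters of $\Gamma$. These are exactly the transitions of $\Dd$ on those letters, because the restriction deletes transitions and never redirects them. Hence the (unique, by determinism) run of $\Dd^{\Gamma}$ on $w$ coincides with the run of $\Dd$ on $w$. The two automata also share the same initial and accepting states, so $w$ is accepted by one iff by the other.

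Conversely, a word containing a letter outside $\Gamma$ has no run in $\Dd^{\Gamma}$, so it is not in $L(\Dd^{\Gamma})$. If the restricted automaton is completed with the rejecting sink, as in this section, such a word is rejected instead. Either way the identity holds.

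The only point needing care is this convention about completeness. If $\Dd^{\Gamma}$ is read as a DFA over the smaller alphabet $\Gamma$, its language lives in $\Gamma^*$, and the identity reads $L(\Dd^{\Gamma}) = L(\Dd)\cap\Gamma^*$ verbatim. I will state this explicitly, together with the fact that the rejecting sink added for completion, and any partial transitions of $\Dd$, behave identically in both automata. There is no real obstacle beyond this bookkeeping.
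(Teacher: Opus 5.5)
Your proposal is correct and takes the natural route. The paper states this proposition as a ``simple observation'' without proof, and your identity $L(\mathcal{D}^{\Gamma}) = L(\mathcal{D})\cap\Gamma^*$, followed by intersecting the chain of inclusions with $\Gamma^*$, is exactly the reasoning it leaves implicit.
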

For a language $L$ over infinite words, we use $L_{\Gamma}$ to denote the set of words in $L\cap\Gamma^{\omega}$.
Due to \cref{prop:restriction-letters}, it suffices to show the following result.
\begin{lemma}\label{lemma:simpler-comp}
Let $\Gamma=\{a,b,c,1,4\}$, and let $\Dd$ be a DFA that separates $\Fin{\Theju_{\Gamma}}$ and $\Fin{\safeLangA{\Cmain^{\Gamma}}{q}}$, for some $q\in\{\circstate{p_2},\circstate{p_5},\sqrstate{p_1},\sqrstate{p_4}\}.$ Then, $\Dd$ has at least $6$ states (accounting for a rejecting sink state).
\end{lemma}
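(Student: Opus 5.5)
The lemma asks for a lower bound on the size of DFAs separating two regular languages of finite words over $\Gamma=\{a,b,c,1,4\}$. This is an instance of the NP-complete minimal separating DFA problem. The paper declares this lemma to be the single computer-aided step, so I would settle it as a SAT-unsatisfiability certificate. I would still organise the argument so that the instances are small and the encoding is checkable.

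\emph{Step 1: build the two input DFAs.} The lower language is $\Fin{\Theju_{\Gamma}}$. I take the $5$-state automaton of \cref{fig:languageTheju}, restrict it to $\Gamma$ (dropping $y$, $2,3,5,6$), make all its states accepting, and add a rejecting sink. The upper language is $\Fin{\safeLangA{\Cmain^{\Gamma}}{q}}$. For this I take the safe (non-\significant) part of $\Cmain$ from $q$, restricted to $\Gamma$ and to states reachable from $q$, mark all states accepting and complete with a sink. I would do this for each of the four choices $q\in\{\circstate{p_2},\circstate{p_5},\sqrstate{p_1},\sqrstate{p_4}\}$. Restricting to $\Gamma$ means only the blocks reachable via the letters $1,4$ and the reset-free transitions remain. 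This makes each upper DFA a few dozen states at most, and I would minimise it first to shrink the input further.

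\emph{Step 2: sanity check.} I would verify $\Fin{\Theju_{\Gamma}}\subseteq\Fin{\safeLangA{\Cmain^{\Gamma}}{q}}$ by a product construction. This check follows from \cref{rmk:Theju-in-Cmain} together with \cref{prop:restriction-letters}, but confirming it guards against transcription errors.

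\emph{Step 3: run DFAMiner.} For each $q$, I would run DFAMiner to ask whether a $5$-state complete DFA exists (the count includes the rejecting sink) that accepts every word of the lower language and rejects every word outside the upper one. The encoding uses the standard variables $x_{s,\sigma,s'}$ for transitions and a product-simulation relation between candidate states and states of the two input DFAs, plus symmetry breaking. Unsatisfiability for all four instances yields the lemma, since a DFA with fewer states can be padded to $5$.

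The main obstacle is the trustworthiness of the unsatisfiability result rather than its computation. I would have the solver emit a DRAT proof and check it with an independent checker. As a plausibility check, I would also confirm that $6$ states are satisfiable, as witnessed by the right-hand gadget of \cref{fig:beatReplaceDFAs} plus a sink. A secondary risk is the reduction to these four $q$. The $\varepsilon\equiv 1$ argument and the $E$/$F$ symmetry are given in the text, but, as the preceding remark warns, the choice of alphabet interacts with the choice of $q$. So if a $5$-state separator were found for some instance, my first move would be to enlarge $\Gamma$ (for example, adding $2$ or $5$) for that $q$ rather than conclude the lemma is false.
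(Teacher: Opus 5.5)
Your proposal is correct and is essentially the paper's proof. For each of the four choices of $q$, the paper builds the $16$-state product 3-DFA of the $\Gamma$-restriction of the automaton in \cref{fig:languageTheju} with $\safe(\Cmain^{\Gamma})$ from $q$, and DFAMiner reports that its minimum is $6$; this is exactly your unsatisfiable $5$-state separation instance, and your DRAT certificate is an extra safeguard the paper does not provide.

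Two side claims are wrong, though neither is load-bearing: the right-hand gadget of \cref{fig:beatReplaceDFAs} plus a sink is not a separator, since its state reached on $ac$ exits on the letter $2$ (playing the role of $4\in F$) and so rejects $ac4\in\Fin{\Theju_\Gamma}$, the actual $6$-state witness being the $\Gamma$-restriction of \cref{fig:languageTheju} plus the sink; and the inclusion in your Step~2 does not follow from \cref{rmk:Theju-in-Cmain}, which concerns $L(\Cmain)$ rather than the safe language of $q$, but holds because from a $p$-state the safe part of $\Cmain$ mirrors $\Aclassifier$ and can only be left after completing a word of some $L_j$, which never happens on words of $\Theju$.
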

Note that in \cref{lemma:simpler-comp}, the size of $(\Cmain^{\Gamma},q)$, when trimmed to reachable states, is 15 for each $q$, which is much smaller than 60. This allows us to use the tool DFAMiner.

\subparagraph{Using DFAMiner.}
One of the features of DFAMiner is that it minimises 3-DFAs. A 3-DFA has the syntax of a DFA, but each state is labelled as either accept, reject, or `don't care'. The idea is that a 3-DFA $\Aa$ accepts a word $w$ if the run of $\Aa$ on $w$ ends at an accepting state and rejects $w$ if the run of $\Aa$ on $w$ ends at rejecting sink state.

The problem of 3-DFA minimisation is the following: Given a 3-DFA $\Ac$, construct the smallest DFA $\Bc$ such that $\Bc$ accepts every word accepted by $\Ac$ and $\Bc$ rejects every word rejected by $\Ac$.

It is easy to see that the problem of separating regular languages can be reduced to the problem of minimising a 3-DFA that is obtained by a product construction. 
For proving our \cref{lemma:simpler-comp}, this product construction results in 16 state DFAs, due to the similarity of structure between the automaton recognising $\Kk$ and $\Cmain$. 
The four $3$-DFAs that we need to consider are depicted in \cref{fig:zblock1,fig:zblock2,fig:zblock3,fig:zblock4}.   

The tool DFAMiner, when asked to minimise these four 3-DFAs, outputs that the minimal 3-DFA has size 6, which coincides with the deterministic safety automaton recognising $\Theju_y$ together with a rejecting sink state. We attach an encoding of these automata in the desired format of the DFAMiner tool for the readers who wish to verify.

\subparagraph{Figures of the four 3-DFAs.}
We depict the four $3$-DFAs given to DFAMiner in \cref{fig:p1p2,fig:p3p4}.
The encoding of these 3-DFAs for DFAMiner is provided in the Appendix~\ref{appendix:code}.

\input{figures_sec2and3/zblocks12}
\input{figures_sec2and3/zblocks34}


\subsubsection{Concluding the proof}
We note two useful corollaries of \cref{lemma:compy} below.

\begin{corollary}\label{cor:5states-dbus}
    Let $q$ be a state of $\Dmain$ outside $\Dbus$. If $$\Theju\subseteq \safeLangA{\Dmain}{q},$$ then the safe component containing $q$ has at least $5$ states.
\end{corollary}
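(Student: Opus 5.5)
We aim to produce a state $q^\ast$ of $\Cmain$ in $\Sbus$ such that the safe component of $\Dmain$ containing $q$, started at $q$, is a deterministic safety automaton sandwiched as $\Theju \subseteq L \subseteq \safeLangA{\Cmain}{q^\ast}$. Then \cref{lemma:compy} gives at least $5$ states in that safe component. Concretely, let $S$ be the safe component of $\Dmain$ containing $q$, and view $(\safe(\Dmain),q)$ restricted to $S$ as a deterministic safety automaton $\Dd$. We take $\Dmain$ to be normal, so every non-significant transition stays inside a safe component. Hence $L(\Dd)=\safeLangA{\Dmain}{q}$, and by hypothesis this contains $\Theju$. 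This gives the lower half of the sandwich.

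For the upper half, apply \cref{AK22import-propthreepointfour} with $\Ac=\Dmain$ and $\Bc=\Cmain$. Both automata are normal, semantically deterministic and safe deterministic. $\Dmain$ is deterministic, hence HD, and it is SD because $L(\Cmain)$ is prefix-independent, so all states are equivalent. $\Cmain$ has these properties by \cref{lemma:cmain-minimality}. The lemma gives a state $q'$ of $\Dmain$ and a state $s$ of $\Cmain$ with $\safeLangA{\Dmain}{q}\subseteq \safeLangA{\Dmain}{q'}=\safeLangA{\Cmain}{s}$. It remains to show $s\in\Sbus$, that is, $s\neq Y$. This holds because $\safeLangA{\Cmain}{Y}=(\Sigma\setminus\{\purpley\})^\omega$ does not contain $\purpley^\omega$, while $\purpley^\omega\in\Theju\subseteq\safeLangA{\Dmain}{q}$: in the $\Theju$ automaton $\purpley$ just resets to the initial state, so no prefix of $\purpley^\omega$ lies in any $L_i$. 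With $s\in\Sbus$ we have $\Theju\subseteq L(\Dd)\subseteq\safeLangA{\Cmain}{s}$, and \cref{lemma:compy} yields $|S|\ge 5$.

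Note that the hypothesis that $q$ lies outside $\Dbus$ is not actually needed for this bound. It matters only later, so that these $5$ states are disjoint from the $60$ states of $\Dbus$.

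The step needing the most care is the application of \cref{AK22import-propthreepointfour}, because its hypotheses must be checked for $\Dmain$. If one worries that $\Dmain$ may not be normal, first normalise it without changing its states or safe languages, making significant every transition between different safe components. Apart from this, the argument is a direct chaining of inclusions.
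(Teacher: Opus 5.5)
Your proof is correct and follows the paper's argument: apply \cref{AK22import-propthreepointfour} with $\Ac=\Dmain$, $\Bc=\Cmain$ to get $\safeLangA{\Dmain}{q}\subseteq\safeLangA{\Cmain}{s}$, rule out $s=Y$ (your witness $\purpley^{\omega}\in\Theju$ makes the paper's one-line justification explicit), and conclude with \cref{lemma:compy}, using normality of $\Dmain$ to identify the separating safety automaton with the safe component of $q$. One correction to your closing aside: normalising does not preserve safe languages in general, since runs that leave a safe component along non-significant transitions stop being safe, so the hypothesis $\Theju\subseteq\safeLangA{\Dmain}{q}$ could fail after normalising; this is harmless here only because $\Dmain$ is assumed normal from the outset.
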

\begin{proof}
By \cref{AK22import-propthreepointfour},  there is a state $p$ in $\Cmain$ such that $\safeLangA{\Dmain}{q}\subseteq \safeLangA{\Cmain}{p}$. This state $p$ cannot be $Y$, since $\Theju\nsubseteq \safeLangA{\Cmain}{Y}$. Thus, $p$ is a state in~$\Sbus$. The conclusion then follows from \cref{lemma:compy}.
\end{proof}

\begin{corollary}\label{cor:5states-dcore}
    Let $q$ be a state of $\Dbus$ outside $\Dcore$. If $$\Theju_y \subseteq \ysafelangD{q},$$ then the  $y$-SCC containing $q$ has at least $5$ states. 
\end{corollary}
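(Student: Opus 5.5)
The plan is to mirror the proof of \cref{cor:5states-dbus}, replacing safe components by $y$-SCCs of $\Dbus$ and $\Theju$ by $\Theju_y$. First, find a state $p$ of $\Sbus$ with $\ysafelangD{q}\subseteq \safeLangA{\Cmain}{p}$. By \cref{lemma:bus-state-one} and its proof, every state of $\Dbus$ has the same safe language as some state of $\Cmain$. Concretely, take a safe run $d_0\xrightarrow{u} q$ inside $\Dbus$ and let $p$ be the endpoint of $\circstate{p_1}\xrightarrow{u} p$ in $\Sbus$; then $\safeLangA{\Dmain}{q}=\safeLangA{\Cmain}{p}$. Since $\ysafelangD{q}\subseteq \safeLangA{\Dmain}{q}$ (it only restricts to $y$-free words whose run stays in the $y$-SCC, hence is safe), we obtain
\[\Theju_y\subseteq \ysafelangD{q}\subseteq \safeLangA{\Cmain}{p}, \qquad p\in\Sbus.\]

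Next, build a deterministic safety automaton $\Dd$ from the $y$-SCC $S$ of $q$. Its states are those of $S$, its initial state is $q$, and it keeps exactly the transitions of $\Dmain$ that are non-\significant, not on $y$, and stay inside $S$. Then $L(\Dd)=\ysafelangD{q}$, and $\Dd$ has exactly $|S|$ states. The displayed inclusions say that $\Dd$ separates $\Theju_y$ from $\safeLangA{\Cmain}{p}$. \cref{lemma:compy}, in its $\Theju_y$ variant, then gives that $\Dd$ has at least $5$ states, so $|S|\geq 5$.

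The main point to check is that \cref{lemma:compy} really covers the $\Theju_y$ version for an arbitrary $p\in\Sbus$, including the reduction via the letter $1$ used to fix $p$ among $\{\circstate{p_2},\circstate{p_5},\sqrstate{p_1},\sqrstate{p_4}\}$. Since $1\equiv\varepsilon$ also holds in $\Theju_y$, the same argument goes through, and the restriction to $\Gamma=\{a,b,c,1,4\}$ contains no $y$ anyway. The hypothesis that $q\notin\Dcore$ is not used in the bound itself; it matters only when the corollary is combined with \cref{lemma:Dbus-dcore}, to count these $5$ states in addition to the $60$ in $\Dcore$.
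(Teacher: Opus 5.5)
Your proposal is correct and is essentially the paper's proof: the paper also chains $\Theju_y\subseteq\ysafelangD{q}\subseteq\safeLangA{\Dmain}{q}=\safeLangA{\Cmain}{p}$ for some $p\in\Sbus$ and then invokes the $\Theju_y$ variant of \cref{lemma:compy}. Your explicit construction of the deterministic safety automaton on the $y$-SCC of $q$, together with your check that the reduction to $\Gamma=\{a,b,c,1,4\}$ is unaffected by removing $y$, just spells out steps the paper leaves implicit.
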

\begin{proof}
    Observe that $\ysafelangD{q} \subseteq \safeLangA{\Dmain}{q}$. Since $q$ is in $\Dbus$, there is a state $p\in\Sbus$ such that $\safeLangA{\Dmain}{q}=\safeLangA{\Cmain}{p}.$ We thus obtain that $$\Theju_y \subseteq \ysafelangD{q} \subseteq \safeLangA{\Cmain}{p},$$ and the conclusion follows from \cref{lemma:compy}.      
\end{proof}

Observe that if there is a state $q$ outside $\Dbus$ with $$\Theju \subseteq \safeLangA{\Dmain}{q},$$ then the safe component of $q$ is disjoint from the safe component consisting of $d_y$, since $\Theju \nsubseteq \safeLangA{\Dmain}{d_y}$ and $\safeLangA{\Dmain}{d_y}$ is prefix-independent. Thus, in this case, $\Dmain$ will have at least 66 states: the 60 states of $\Dcore$, the state $d_Y$, and these additional 5 states that \cref{cor:5states-dbus} implies the existence of.

Similarly, if there is a state $q$ outside $\Dcore$ with $\Theju_y \subseteq \ysafelangD{q}$ then \cref{cor:5states-dcore} implies that $\Dmain$ has at least 66 states: the 60 states of $\Dcore$, the state $d_Y$, and these additional 5 states that \cref{cor:5states-dcore} implies the existence of.


We thus obtain the following. 

\begin{lemma}\label{lemma-states-outside-dcore}
    If $\Dmain$ has at most $65$ states, then 
    there are no states $q$ outside $\Dbus$ with $\Theju\subseteq\safeLangA{\Dmain}{q}$ and there are no states $p$ in $\Dbus$ outside $\Dcore$ with $\Theju_y \subseteq \ysafelangD{p}$.
\end{lemma}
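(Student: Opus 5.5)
The statement is a direct consequence of the two corollaries just proved, combined with the count of states already forced in $\Dmain$. I would argue by contraposition in each of the two parts, assuming $\Dmain$ has at most $65$ states throughout, so that \cref{lemma:Dbus-dcore} applies and $\Dcore$ is the unique end component of the $y$-SCC decomposition of $\Dbus$.

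\textbf{Part one: no such state outside $\Dbus$.} Suppose some state $q$ outside $\Dbus$ satisfies $\Theju\subseteq\safeLangA{\Dmain}{q}$. By \cref{cor:5states-dbus}, the safe component $S_q$ containing $q$ has at least $5$ states. Since $q\notin\Dbus$, the component $S_q$ is disjoint from $\Dbus$. It is also disjoint from the safe component of $d_y$, for the following reason.
\begin{itemize}
\item Every word in $\safeLangA{\Dmain}{d_y}=(\Sigma\setminus\{\purpley\})^\omega$ avoids $\purpley$, and so does every word in the safe language of any state in that component.
\item $\Theju$ contains words with $\purpley$, for instance $\purpley^\omega$.
\end{itemize}
Hence $q$ cannot lie in that component. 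Counting states, \cref{lemma:Dcore-one-state-for-everythingin60} together with safe minimality of $\Cmain$ gives at least $60$ distinct states in $\Dcore\subseteq\Dbus$. Adding $d_y$ and at least $5$ states in $S_q$ gives at least $66$ states, contradicting the assumption.

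\textbf{Part two: no such state in $\Dbus\setminus\Dcore$.} Suppose $p\in\Dbus\setminus\Dcore$ satisfies $\Theju_y\subseteq\ysafelangD{p}$. By \cref{cor:5states-dcore}, the $y$-SCC of $p$ has at least $5$ states. This $y$-SCC is disjoint from $\Dcore$, since $y$-SCCs partition $\Dbus$ and $p\notin\Dcore$. It is also disjoint from $\{d_y\}$, since $d_y\notin\Dbus$: its safe language lacks $\purpley^\omega$, whereas every state of $\Dbus$ has a safe language equal to that of some state of $\Sbus$, and all of those contain $\purpley^\omega$. Counting gives again $60+1+5=66$ states, a contradiction.

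\textbf{Main obstacle.} The only delicate point is the disjointness bookkeeping. One must ensure that the $60$ states found in $\Dcore$ are genuinely distinct: this holds because their safe languages are pairwise different, by safe minimality of $\Cmain$. One must also ensure that the extra $5$ states are not among them and are not $d_y$. For the first, I would use that $\Dcore$ is a single $y$-SCC, respectively a subset of $\Dbus$. For the second, I would use the $\purpley$-based separation of safe languages, which comes from safe centralisation of $\Cmain$.
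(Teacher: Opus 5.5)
Your proposal is correct and follows the paper's argument. In both cases you count the $60$ pairwise-distinct states of $\Dcore$ (distinct by safe minimality of $\Cmain$), the state $d_y$, and the at least $5$ further states given by \cref{cor:5states-dbus} or \cref{cor:5states-dcore}. Disjointness from $d_y$ comes from $\purpley$, which the paper phrases as $\Theju\nsubseteq\safeLangA{\Dmain}{d_y}$ together with prefix-independence. One small ordering point: the fact that $d_y\notin\Dbus$, which you only prove in the second part, is also what guarantees $d_y\notin\Dcore$ in the count of the first part.
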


We conclude our proof in the next lemma.

\begin{restatable}{lemma}{CoverLemmaformostComplicated}\label{lemma:imp2} 
    If $\Dmain$ has at most $65$ states, then it  does not recognise $L(\Cmain)$.
\end{restatable}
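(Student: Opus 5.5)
By \cref{lemma-states-outside-dcore}, if $\Dmain$ has at most $65$ states, no state of $\Dmain$ outside $\Dbus$ has $\Theju$ inside its safe language. Likewise, no state of $\Dbus\setminus\Dcore$ has $\Theju_y$ inside its $y$-safe language. The plan is to build a word $w\in\Lmain$, i.e.\ $w\notin L(\Cmain)$, whose unique run in $\Dmain$ has only finitely many \significant transitions, so that $\Dmain$ accepts $w$. The tension to exploit is this. Words of $\Theju$ never complete a factor in any $\langBlock{i}$, so states of $\Dcore$ must read them without safely absorbing too much. Yet every state in the safe component holding a copy of the nondeterministic component must be able to block them.

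\textbf{Step 1: an escape from $\Dcore$.} Each state $q\in\Dcore$ corresponds by \cref{lemma:Dcore-corresponding-safe-state} to a state $p\in\Sbus$. The safe language of $p$ in $\Cmain$ avoids some word of $\Theju_y$; for instance, from $\sqrstate{p_i}$ any word in $\langBlock{i}$ leaves safely, and we may pad with further $\langBlock{}$-blocks. Using this, I want to show that from every state of $\Dcore$, reading a suitable prefix of a word of $\Theju_y$ eventually forces a \significant transition or a $y$-free exit from $\Dcore$. The point is that $\Dcore$ contains exactly the $60$ states mirroring $\Sbus$, which is $y$-strongly connected and has no safe copy of the pink box.

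\textbf{Step 2: catching the escaping run.} After that \significant transition, the run lands in some state $d$. By \cref{lemma-states-outside-dcore}, $d$ cannot safely read all of $\Theju$ (if $d\notin\Dbus$), nor all of $\Theju_y$ inside its $y$-SCC (if $d\in\Dbus\setminus\Dcore$). So there is a finite word $v_d$, a prefix of a word in $\Theju$ or $\Theju_y$, on which $d$ takes another \significant transition or drops to a lower $y$-SCC. The idea is to iterate these finite words, concatenating them into an infinite word. The $y$-SCC order of $\Dbus$ is well-founded, so the run can only descend finitely often. Each time it reaches $\Dcore$, Step 1 applies again, giving infinitely many \significant transitions.

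\textbf{Step 3: making $w$ lie in $\Lmain$.} Every finite piece is a prefix of a word in $\Theju$ or $\Theju_y$, so no factor in any $\langBlock{i}$ starts at the beginning of a piece. I therefore also need $w$ to contain infinitely many infixes in some $\langInfix{\alpha}{\beta}$ and infinitely many $\purpley$. To arrange this, I would interleave, between consecutive escapes, a fixed word $\purpley\,u_\alpha u_\beta u_\beta\,\purpley$ with $u_\alpha\in\langBlock{\alpha}$, $u_\beta\in\langBlock{\beta}$ and $\alpha\neq\beta$. I then need to check that inserting these fixed words does not destroy the \significant transitions produced by the escapes. I would handle this by choosing each escape word starting from whatever state the run is in after the insertion, which is the adaptive construction of Steps 1 and 2.

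The main obstacle is Step 1: showing that $\Dcore$ itself cannot safely read a long $\Theju_y$-like word. If some $q\in\Dcore$ had $\Theju_y\subseteq\ysafelangD{q}$, the whole scheme stalls. To exclude this, I would argue that every $p\in\Sbus$ is mapped out of its safe language by a word of the form (prefix of $\Theju$)$\cdot i$. Concretely, I would use the distinguishing words from the safe-minimality proof of \cref{lemma:cmain-minimality}, and exploit that $\Dcore$'s safe languages coincide with those of $\Sbus$.
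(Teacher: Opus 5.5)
\textbf{The contradiction runs the wrong way.} $\Dmain$ is a coB\"uchi automaton for $L(\Cmain)=\Lmain^c$, so a run that visits significant transitions infinitely often means $\Dmain$ \emph{rejects} the word. Steps~1--2 are built to produce exactly such a run. Step~3 then deliberately puts the word into $\Lmain$, i.e.\ outside $L(\Cmain)$. So $\Dmain$ would handle $w$ correctly, and nothing is contradicted. Your stated goal, a word of $\Lmain$ with \emph{finitely} many significant transitions, is never pursued. You need the opposite: a word \emph{in} $L(\Cmain)$ whose run is rejecting. The natural candidates are words with no infix in any $\langInfix{\alpha}{\beta}$ with $\alpha\neq\beta$. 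Interleaving $u_\alpha u_\beta u_\beta$ with $\alpha\neq\beta$, as you propose, is precisely what must be avoided.

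\textbf{Step~1 is false as stated.} A word of $\Theju$ never completes a block, so from $\circstate{p_i}$ it stays safely inside that block of $\Cmain$, giving $\Theju\subseteq\safeLangA{\Cmain}{\circstate{p_i}}$. Hence the state of $\Dcore$ mirroring $\circstate{p_i}$ (\cref{lemma:Dcore-one-state-for-everythingin60}) reads all of $\Theju_y$ safely. Since $\Dcore$ is an end $y$-SCC, it does so without leaving $\Dcore$. Your suggested witness, a word of $\langBlock{i}$, does not help, because it is not in $\Fin{\Theju}$. So the exit from $\Dcore$ must use a word outside $\Fin{\Theju}$, and the missing idea is a pigeonhole on the six block indices.

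\textbf{How to repair it.} Your Step~2 mechanism is sound, but use it to \emph{enter} $\Dcore$. If the iteration never reached $\Dcore$, the concatenation would be a word of $\Theju\subseteq L(\Cmain)$ with a rejecting run. So from each of the at most $5$ states outside $\Dcore$, some word of $\Fin{\Theju}$ (chosen to end with the classifier reset) leads into $\Dcore$. It lands at a state whose safe language is that of some $\circstate{p_j}$ or $\sqrstate{p_j}$. At most $5$ indices $j$ occur, so some $j_0\in[6]$ is missed. Take a $y$-free $v\in\langBlock{j_0}^2$. It forces a significant transition from each target, since $p\xrightarrow{\langBlock{j_0}}\sqrstate{p_{j_0}}\xRightarrow{\langBlock{j_0}}$ in $\Cmain$. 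The run on $v$ must also end outside $\Dcore$. Otherwise a $y$-free cycle through a significant transition would reject a $y$-free word, which lies in $L(\Cmain)$. Alternating the entry words with $v$ gives a word with no reset letters whose only completed blocks are $\langBlock{j_0}$-blocks. That word lies in $L(\Cmain)$, yet its run in $\Dmain$ is rejecting.
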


\begin{proof}
    For a language $L$ of infinite words, we let $\Fin{L}$ be the language of finite words that are prefixes of words in $L$. 
    First, note the following:

    \begin{claim}\label{claim:reset-K}
        For every word $u\in \Fin{\Theju}$, there is a word $u'\in \{a,b,c,1,\dots, 6\}^*$ such that $(uu')^{-1}\Theju = \Theju$. That is, $u'$ resets the automaton in Figure~\ref{fig:languageTheju} to the initial state.
        Moreover, a word belongs to $\Theju$ if and only if all its prefixes belong to $\Fin{\Theju}$.
    \end{claim}

    We leverage \cref{lemma-states-outside-dcore} to force a run to enter $\Dcore$ while reading prefixes of $\Theju$.
    \begin{claim}\label{claim:from-D-to-Dbus}
        For every state $q$ outside $\Dbus$, there is a word $u\in \Fin{\Theju}$ such that the run $q\xrightarrow{u}$ ends in $\Dbus$  and visits a \significant transition.
    \end{claim}
    \begin{claimproof}
        By \cref{lemma-states-outside-dcore}, if $q$ is outside $\Dbus$, there is a word $u_1\in \Fin{\Theju}\setminus \Fin{\safeLangA{\Dmain}{q}}$. By the previous claim, we can moreover assume that $(u_1)^{-1}\Theju = \Theju$.  
        Then, the run $q\xrightarrow{u_1} q_1$ crosses some \significant transition. 
        While $q_i$ is not in $\Dbus$, we can repeat this process. If we never enter $\Dbus$, this produces a word in $\Theju$ whose run over $\Dmain$ is rejecting, a contradiction (see \Cref{rmk:Theju-in-Cmain}).
    \end{claimproof}

     \begin{claim}\label{claim:from-Dbus-to-Dcore}
        For every state $q$ in $\Dbus$ outside $\Dcore$, there is a word $u\in \Fin{\Theju}$ such that the run               $q\xrightarrow{u}$ ends in $\Dcore$.
    \end{claim}
    \begin{claimproof}
        Let $q_0 = q$. By induction, we produce a run
        \[ q_0 \xrightarrow{u_0} q_1 \xrightarrow{u_1} q_2 \xrightarrow{u_2} \dots ,\]
        such that, while $q_i \notin \Dcore$, $u_i\in \Fin{\Theju}$ with $u_i^{-1}\Theju = \Theju$ and either:
        \begin{enumerate}
            \item the state $q_i$ is not in $\Dbus$ and $q_i \xrightarrow{u_i} q_{i+1}$ visits a \significant transition, or
            \item the state $q_i$ is in $\Dbus\setminus \Dcore$ and $q_i \xrightarrow{u_i} q_{i+1}$ visits a significant transition, or
            \item the state $q_i$ is in $\Dbus\setminus \Dcore$, $u_i$ does not contain letter $y$ and $q_i \xrightarrow{u_i} q_{i+1}$ is a safe run that changes of $y$-SCC in $\Dbus$.
        \end{enumerate}
        In the first case ($q_i$ not in $\Dbus$) the word $u_i$ is given by the previous claim.
        If $q_i\in \Dbus$,  by the second part of \cref{lemma-states-outside-dcore}, there is $u_i\in \Fin{\yTheju} \setminus \Fin{\ysafelangD{q_i}}$. By \cref{claim:reset-K} we can assume $u_i^{-1}\Theju = \Theju$.
        Therefore, by definition of $\ysafelangD{q_i}$, either $q_i \xrightarrow{u_i} q_{i+1}$ produces a significant transition (case 2) or $q_i \xrightarrow{u_i} q_{i+1}$ is a safe run that changes of $y$-SCC in $\Dbus$.        

        Assume by contradiction that no $q_i$ is in $\Dcore$. It cannot be the case that \significant transitions are visited infinitely many times, as this would produce a rejecting run over a word in $\Theju$.
        Therefore, eventually all $q_i$ are as in the third case.
        However, since $\Dcore$ is the only end-$y$-SCC of $\Dbus$ (\cref{lemma:Dbus-dcore}), it is not possible to change of $y$-SCC infinitely often avoiding both \significant transitions and $y$-transitions, a contradiction.
    \end{claimproof}

    Let $\{o_1,\dots, o_5\}$ be the (at most) $5$ states outside $\Dcore$.
    Combining the two previous claims, 
    we get that there are words $u_1,\dots,u_5\in \Fin{\Theju}$ such that 
    $o_i \xrightarrow{u_i} q_i$ ends in $\Dcore$.
    Let $p_i$ in $\Sbus$ such that $\safeLangA{\Dmain}{q_i} = \safeLangA{\Cmain}{p_i}$ (such a state exists by \cref{lemma:Dcore-corresponding-safe-state}).
    We may assume (by \cref{claim:reset-K}) that moreover, each $p_i$ is of the form $\circstate{p_j}$ or $\sqrstate{p_j}$ for some $j$. 
    Let $\Qtarget = \{q_1,\dots,q_5\}$ the set  of possible destinations of the above paths and  $\Ptarget = \{p_1,\dots, p_5 \} \subseteq \{ \circstate{p_j}, \sqrstate{p_j} \mid j\in [6]\}$ the set (of size at most $5$) of corresponding states in $\Cmain$ recognising the same safe languages as $\Qtarget$.
    Let $j_0$ be an index such that $\circstate{p_{j_0}}, \sqrstate{p_{j_0}}\notin \Ptarget$.

    \begin{claim}\label{claim:from-Dcore-to-D}
        For every state $q$ in $\Qtarget$ there is a word $v\in L_{j_0}^2$ such that the run $q\xrightarrow{v}$ ends outside~$\Dcore$.
    \end{claim}
    \begin{claimproof}
        Let $p$ in $\Sbus$ such that $\safeLangA{\Dmain}{q} = \safeLangA{\Cmain}{p}$ and $p \in \{\circstate{p_i}, \sqrstate{p_i}\}$, with $i\neq j_0$.
        Take a word $v \in L_{j_0}^2$ that does not contain $y$.
        Then, the run $p\xrightarrow{v}$  in $\Cmain$ produces a \significant transition (indeed, $p \xrightarrow{L_{j_0}} \sqrstate{p_{j_0}} \xRightarrow{L_{j_0}}$, see \cref{fig:coBuchi61automaton}).
        By equivalence of safe languages, $v$ produces a significant transition during the run $q \xRightarrow{v} q'$.
        Assume by contradiction that $q'$ is in $\Dcore$.
        By definition, $\Dcore$ is a $y$-SCC, therefore, there is path $q' \xrightarrow{v'} q$ containing no $y$.
        We obtain that $(vv')^\omega$ is rejected by $\Dmain$, as we have found a cycle over $vv'$ containing a \significant transition.
        This contradicts the fact that $\Cmain$ only rejects words that contain $y$ infinitely often.
    \end{claimproof}

    Using the above remarks we build a rejecting run of the form
    \[ o_{i_1} \xrightarrow{u_{i_1}} q_{i_1} \xRightarrow{v} o_{i_2} \xrightarrow{u_{i_2}} q_{i_2} \xRightarrow{v} \dots,\]
    where $u_i \in \Fin{\Theju}$ and $v\in L_{j_0}^2$.
    However, this word is accepted by $\Cmain$, because it does not contain a factor of the form $\langInfix{\alpha}{\beta}$, for $\alpha\neq \beta$.
\end{proof}

The above lemma implies that every deterministic coBüchi automaton language-equivalent to $\Cmain$ requires at least 66 states, and thus, we conclude that every deterministic Büchi automaton language-equivalent to $\Amain$ requires at least 66 states. This proves \cref{thm:succinctness}. Since we showed in \cref{thm:A65isHD} that $\Amain$ is HD, we have proved our main result (\cref{thm:main}).

\renewcommand{\thesection}{\arabic{section}}
\setcounter{section}{2}

\section{Conclusions}\label{sec:conclusions}
In this paper, we have solved the succinctness question for history-deterministic Büchi automata by exhibiting a 65-state HD B\"uchi automaton such that every language-equivalent deterministic B\"uchi automaton requires at least 66 states.
The construction and the proof are involved and required computer assistance for some intermediate steps.
In order to have a better understanding of these automata and make progress in further questions, a first step would be to find a simpler example of a succinct history-deterministic Büchi automaton. The existence of such automata is immediately unclear; even the smallest Büchi automaton that we know to be not determinisable by pruning has $7$ states. 

The immediate next question is to settle the width of the gap between deterministic and HD B\"uchi automata. More precisely: is there a family of HD B\"uchi automata such that language-equivalent deterministic B\"uchi automata need at least quadratic number of states? While our result  kills the ``no gap'' conjecture (and rules out approaches like conjecture~\ref{conj:weak-rewiring}), constructing a family with a quadratic gap remains a challenge, especially given the complexity required just to establish a separation of one state. 

We conjecture that our example can be easily used to generate a family of  HD Büchi automata $\Hh_n$ with $65 n$ states whose alphabet size is $16 n$, such that every deterministic Büchi automaton $\Dd_n$ that is language-equivalent to $\Hh_n$ requires $66n$ states. We are doubtful if we would obtain any additional insights by proving this using our example, and thus, we have decided to not look into this for the current paper.

Beyond theory, the techniques we developed to establish lower bounds on deterministic state complexity have practical potential. An interesting future direction would be to integrate these heuristics into SAT-based minimization tools like \texttt{SPOT}~\cite{spot2.1}.
More precisely, given a deterministic B\"uchi automaton we can find a non-trivial lower bound on the size of every language-equivalent deterministic (or HD) automaton by minimising the coB\"uchi automaton for its complement language, following~\cite{AK22} and as described in \cref{subsec:complementation}. Can we further use these minimal HD coBüchi automata to obtain better minimisation algorithms for deterministic coBüchi automata?





\newpage
\bibliography{gfg}
\appendix

\section{Appendix: Encoding of 3-DFAs for DFAMiner}\label{appendix:code}

\subsection{3-DFA for when $q$ is $\circstate{p_2}$}
\begin{lstlisting}

16 5 -- number of states, 
-- alphabet: 0 a, 1 b, 
-- 2 c, 3 1, 4 4 

i 0 -- initial state

t 0 0 1 -- transistion from 0 on 0 to 1
t 0 1 0
t 0 2 2
t 0 3 0
t 0 4 0

t 1 0 1
t 1 1 4
t 1 2 3
t 1 3 0
t 1 4 0

t 2 0 1
t 2 1 0
t 2 2 2
t 2 4 0

t 3 0 1
t 3 1 4
t 3 2 3
t 3 4 0


t 4 0 1
t 4 1 4
t 4 2 3
t 4 3 0



t 5 0 6 
t 5 1 5
t 5 2 7
t 5 3 5
t 5 4 5

t 6 0 6
t 6 1 9
t 6 2 8
t 6 3 5
t 6 4 5

t 7 0 6
t 7 1 5
t 7 2 7
t 7 4 5

t 8 0 6
t 8 1 9
t 8 2 8
t 8 4 5

t 9 0 6
t 9 1 9
t 9 2 8
t 9 3 5

t 10 0 11 
t 10 1 10
t 10 2 12
t 10 3 10
t 10 4 10

t 11 0 11
t 11 1 14
t 11 2 13
t 11 3 10
t 11 4 10

t 12 0 11
t 12 1 10
t 12 2 12
t 12 4 10

t 13 0 11
t 13 1 14
t 13 2 13
t 13 4 10


t 14 0 11
t 14 1 14
t 14 2 13
t 14 3 10



t 2 3 5
t 3 3 5 
t 4 4 10 

t 7 3 15
t 8 3 15
t 9 4 10

t 12 3 5
t 13 3 5
t 14 4 15

--accepting states
a 0 
a 1
a 2
a 3
a 4
--rejecting states
r 15
\end{lstlisting}

\subsection{3DFA for when $q$ is $\circstate{p_5}$}
\begin{lstlisting}
16 5 -- number of states, 
-- alphabet: 0 is a, 1 is b, 
-- 2 is c, 3 is 1, 4 is 4 

i 0 -- initial state

t 0 0 1 -- transistion from 0 on 0 to 1
t 0 1 0
t 0 2 2
t 0 3 0
t 0 4 0

t 1 0 1
t 1 1 4
t 1 2 3
t 1 3 0
t 1 4 0

t 2 0 1
t 2 1 0
t 2 2 2
t 2 4 0

t 3 0 1
t 3 1 4
t 3 2 3
t 3 4 0


t 4 0 1
t 4 1 4
t 4 2 3
t 4 3 0



t 5 0 6 
t 5 1 5
t 5 2 7
t 5 3 5
t 5 4 5

t 6 0 6
t 6 1 9
t 6 2 8
t 6 3 5
t 6 4 5

t 7 0 6
t 7 1 5
t 7 2 7
t 7 4 5

t 8 0 6
t 8 1 9
t 8 2 8
t 8 4 5

t 9 0 6
t 9 1 9
t 9 2 8
t 9 3 5

t 10 0 11 
t 10 1 10
t 10 2 12
t 10 3 10
t 10 4 10

t 11 0 11
t 11 1 14
t 11 2 13
t 11 3 10
t 11 4 10

t 12 0 11
t 12 1 10
t 12 2 12
t 12 4 10

t 13 0 11
t 13 1 14
t 13 2 13
t 13 4 10


t 14 0 11
t 14 1 14
t 14 2 13
t 14 3 10



t 2 3 5
t 3 3 5 
t 4 4 10

t 7 3 15
t 8 3 15
t 9 4 10

t 12 3 5
t 13 3 5
t 14 4 15

--accepting states
a 0 
a 1
a 2
a 3
a 4
--rejecting states
r 15

\end{lstlisting}

\subsection{3-DFA for when $q$ is $\sqrstate{p_1}$}
\begin{lstlisting}
    16 5 -- number of states, 
-- alphabet: 0 is a, 1 is b, 
-- 2 is c, 3 is 1, 4 is 4 

i 0 -- initial state

t 0 0 1 -- transistion from 0 on 0 to 1
t 0 1 0
t 0 2 2
t 0 3 0
t 0 4 0

t 1 0 1
t 1 1 4
t 1 2 3
t 1 3 0
t 1 4 0

t 2 0 1
t 2 1 0
t 2 2 2
t 2 4 0

t 3 0 1
t 3 1 4
t 3 2 3
t 3 4 0


t 4 0 1
t 4 1 4
t 4 2 3
t 4 3 0



t 5 0 6 
t 5 1 5
t 5 2 7
t 5 3 5
t 5 4 5

t 6 0 6
t 6 1 9
t 6 2 8
t 6 3 5
t 6 4 5

t 7 0 6
t 7 1 5
t 7 2 7
t 7 4 5

t 8 0 6
t 8 1 9
t 8 2 8
t 8 4 5

t 9 0 6
t 9 1 9
t 9 2 8
t 9 3 5

t 10 0 11 
t 10 1 10
t 10 2 12
t 10 3 10
t 10 4 10

t 11 0 11
t 11 1 14
t 11 2 13
t 11 3 10
t 11 4 10

t 12 0 11
t 12 1 10
t 12 2 12
t 12 4 10

t 13 0 11
t 13 1 14
t 13 2 13
t 13 4 10


t 14 0 11
t 14 1 14
t 14 2 13
t 14 3 10



t 2 3 15
t 3 3 15
t 4 4 10 

t 7 3 15
t 8 3 15
t 9 4 10

t 12 3 5
t 13 3 5
t 14 4 15

--accepting states
a 0 
a 1
a 2
a 3
a 4
--rejecting states
r 15

\end{lstlisting}
\subsection{3-DFA for when $q$ is $\sqrstate{p_4}$}
\begin{lstlisting}
    16 5 -- number of states, 
-- alphabet: 0 is a, 1 is b, 
-- 2 is c, 3 is 1, 4 is 4 

i 0 -- initial state

t 0 0 1 -- transistion from 0 on 0 to 1
t 0 1 0
t 0 2 2
t 0 3 0
t 0 4 0

t 1 0 1
t 1 1 4
t 1 2 3
t 1 3 0
t 1 4 0

t 2 0 1
t 2 1 0
t 2 2 2
t 2 4 0

t 3 0 1
t 3 1 4
t 3 2 3
t 3 4 0


t 4 0 1
t 4 1 4
t 4 2 3
t 4 3 0



t 5 0 6 
t 5 1 5
t 5 2 7
t 5 3 5
t 5 4 5

t 6 0 6
t 6 1 9
t 6 2 8
t 6 3 5
t 6 4 5

t 7 0 6
t 7 1 5
t 7 2 7
t 7 4 5

t 8 0 6
t 8 1 9
t 8 2 8
t 8 4 5

t 9 0 6
t 9 1 9
t 9 2 8
t 9 3 5

t 10 0 11 
t 10 1 10
t 10 2 12
t 10 3 10
t 10 4 10

t 11 0 11
t 11 1 14
t 11 2 13
t 11 3 10
t 11 4 10

t 12 0 11
t 12 1 10
t 12 2 12
t 12 4 10

t 13 0 11
t 13 1 14
t 13 2 13
t 13 4 10


t 14 0 11
t 14 1 14
t 14 2 13
t 14 3 10



t 2 3 5
t 3 3 5 
t 4 4 15 

t 7 3 15
t 8 3 15
t 9 4 10

t 12 3 5
t 13 3 5
t 14 4 15

--accepting states
a 0 
a 1
a 2
a 3
a 4
--rejecting states
r 15

\end{lstlisting}

\end{document}